\newcommand{\footnoterecall}[1]{\hyperref[#1]{\footnotemark[\value{#1}]}}
\declaretheoremstyle[
    spaceabove=6pt, 
    spacebelow=6pt, 
    headfont=\normalfont\bfseries,
    notefont=\mdseries\bfseries, 
    notebraces={(}{)}, 
    bodyfont=\normalfont\itshape,
    postheadspace=1em,
    headpunct={:}]{thmstyle}
\newtheorem{theorem}{Theorem}[section]
\newtheorem{lemma}[theorem]{Lemma}
\newtheorem{definition}[theorem]{Definition}
\numberwithin{equation}{section} % section-relative equation numbering
    \newcommand{\ket}[1]{\vert  #1 \rangle}
    \newcommand{\bra}[1]{\langle #1 |}
    \newcommand{\inprod}[2]{\langle #1 | #2 \rangle}
	\newcommand{\proj}[2]{\ket{#1}\bra{#2}}
	\newcommand{\hilbert}{\mathcal{H}}
	\newcommand{\tr}{\operatorname{Tr}  }
	\renewcommand{\vec}[1]{\mathbf{#1}}
\newcommand{\ApplyGradient}[1]{%
        \pgfmathsetmacro{\PercentColor}{100}
        \hspace{-0.33em}\colorbox{cyan!\PercentColor!white}{}
}
\newcommand{\colorcell}[1]{%
        \hspace{0em}\colorbox{#1}{}
}
\newcolumntype{X}{>{\collectcell\colorcell}c<{\endcollectcell}}
\newcolumntype{R}{>{\collectcell\ApplyGradient}c<{\endcollectcell}}
\newcommand{\SetNumber}[1]{%
        \hspace{0.50em}#1\hspace{0.50em}
}
\newcolumntype{C}{>{\collectcell\SetNumber}c<{\endcollectcell}}
\newcolumntype{M}{>{\collectcell\SetNumber}c<{\endcollectcell}}
\renewcommand{\arraystretch}{0}
\begin{document}

\title{A consolidating review of Spekkens' toy theory}

\author{Ladina Hausmann}
\affiliation{Institute for Theoretical Physics, ETH Zurich, 8093 Z\"{u}rich, Switzerland}
\email{l@dinahausmann.ch}
%\email{hladina@student.ethz.ch}

\author{Nuriya Nurgalieva}
\affiliation{Institute for Theoretical Physics, ETH Zurich, 8093 Z\"{u}rich, Switzerland}
\email{nuriya@phys.ethz.ch}

\author{Lídia del Rio}
\affiliation{Institute for Theoretical Physics, ETH Zurich, 8093 Z\"{u}rich, Switzerland}
\email{lidia@phys.ethz.ch}

\date{}

\begin{abstract}
In order to better understand a complex theory like quantum mechanics,  it is sometimes useful to take a step back and create alternative theories, with more intuitive foundations, and examine which features of quantum mechanics can be reproduced by such a \emph{foil theory}.
A prominent example is Spekkens' toy theory, which is based off a simple premise: ``What if we took a common classical theory and added the uncertainty principle as a postulate?'' 
In other words, the theory imposes an \emph{epistemic restriction} on our knowledge about a physical system: only half of the variables can ever be known to an observer. 
Like good science fiction, from this simple principle a rich behaviour emerges, most notoriously when we compose several systems. The toy theory emulates some aspects of  quantum non-locality, although crucially it is still a non-contextual model. 
In this pedagogical review we consolidate  different approaches to Spekkens' toy theory, including the stabilizer formalism and the generalization to arbitrary dimensions, completing them with new results where necessary. In particular, we introduce a general  characterization of measurements, superpositions and entanglement in the toy theory. 
\end{abstract}

\maketitle

\setlength{\epigraphwidth}{5in}
\epigraph{``Fiction writers, at least in their braver moments, do desire the truth: to know it, speak it, serve it. But they go about it in a peculiar and devious way, which consists in inventing persons, places, and events which never did and never will exist or occur, and telling about these fictions in detail and at length and with a great deal of emotion, and then when they are done writing down this pack of lies, they say, There! That’s the truth.''}{Ursula K.~Le Guin, \emph{The Left Hand of Darkness}}

\newpage

\subsection*{\LARGE What to expect from this review}
\label{sec:introduction}
%\vspace{5mm}
\paragraph{Goals and audience.}
This article gives a pedagogical overview of different approaches to Spekkens' toy theory. We hope it is useful for readers  new to the theory, who wish to get a smooth introduction, and also to those familiar with a few results, who are looking for a consolidated picture or a quick resource for consultation. 

\paragraph{Approaches covered.}
We present three formulations of Spekkens' toy theory: the original formulation by Spekkens~\cite{Spekkens_2007}, its reformulation in the stabilizer notation by Pusey~\cite{Pusey_2012}, and a generalization to systems of an arbitrary dimension by Spekkens~\cite{SpekkensFoundations2016}. Even though the restrictions on knowledge appear different for the different versions, the first two are equivalent and are a special case of the generalized toy theory~\cite{SpekkensFoundations2016,Pusey_2012, Catani_2017}. In short, the original formulation is where we get the intuition, the stabilizer approach is better suited to prove results and find quantum analogs, and the generalization takes the best of both. 
For each formulation, we define epistemic states, measurements, mixtures, superpositions, entanglement and  allowed transformations. 

\paragraph{New results.}
In addition to reviewing current approaches, this article adds a number of contributions and simplifications to Spekkens' toy theory, mostly concerning the generalization to arbitrary dimensions. These are listed in the conclusions (section~\ref{sec:discussion:results}). 
Our new results are blended into the natural section referring to each topic, among established results (and not in an independent section ``results''). Existing results in the literature have explicit references; new lemmas, theorems and definitions are those without a reference. 
All proofs can be found in the appendix.

\tableofcontents

\section{Epistemic states}
\label{sec:states}
\begin{figure}[t]
\centering
\includegraphics[scale=0.2]{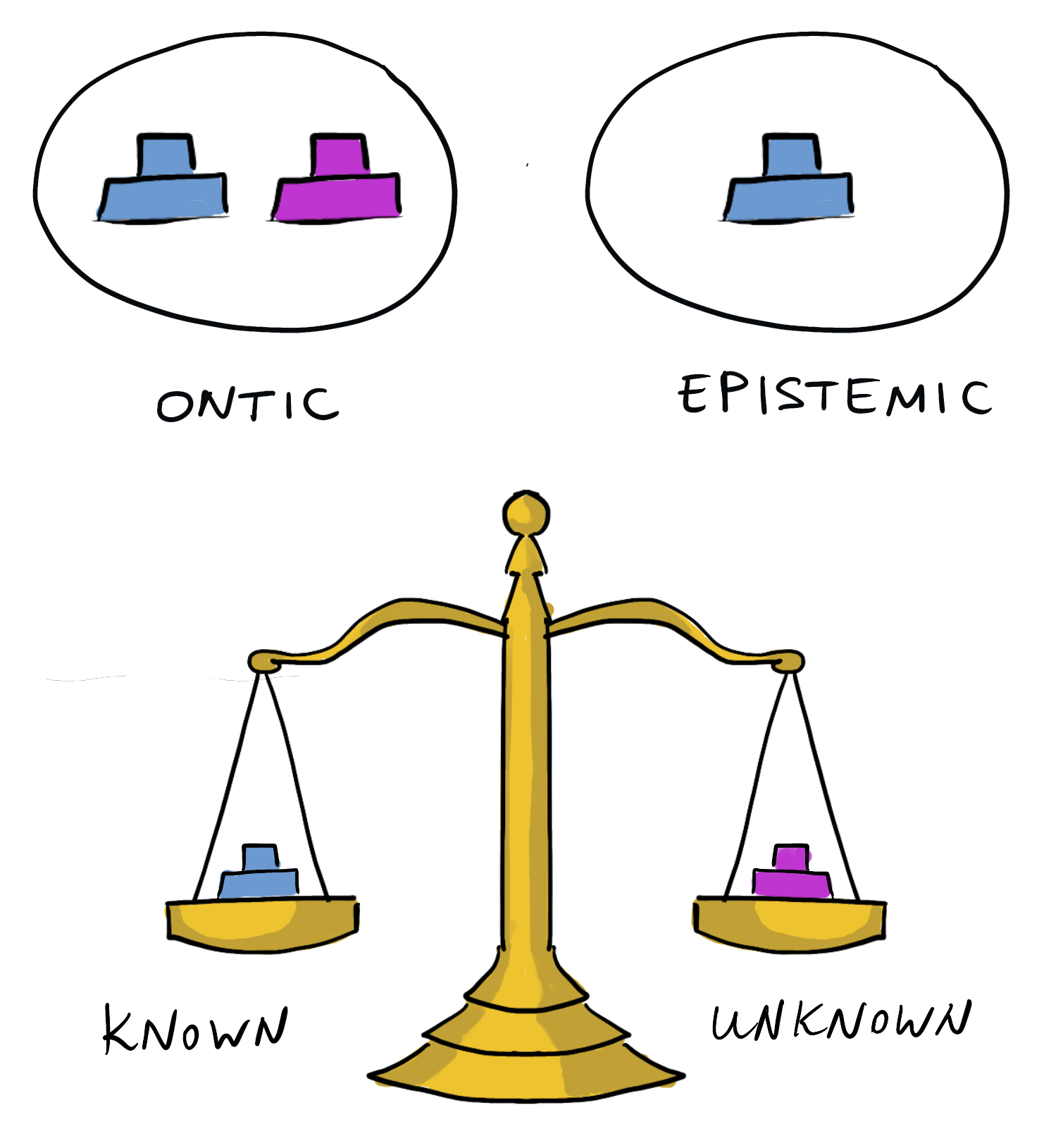}
\caption{{\bf Knowledge balance principle~\cite{Spekkens_2007}.} The toy theory imposes that the amount of maximal knowledge one has about the ontic state (one's \textit{epistemic} state) is equal to the amount of knowledge one lacks. For example, if the ontic state can be fully characterized by a set of two questions, the epistemic state would contain answers to only one of them.}
\label{fig:balance}
\end{figure}

Spekkens' toy theory is an \textbf{epistemically-restricted theory}~\cite{Spekkens_2005}. Such theories distinguish two types of states: \textbf{ontic states}, which encode the physical state of a system, and \textbf{epistemic states} -- the states of knowledge that an observer has about the system. 
In the toy theory, epistemic states  are constrained by the 
 \emph{knowledge balance principle} (Figure~\ref{fig:balance}), inspired by the Heisenberg uncertainty principle:
            \begin{displayquote}
                ``If one has maximal knowledge, then for every system,
                at every time, the amount of knowledge one possesses
                about the ontic state of the system at that time must
                equal the amount of knowledge one lacks.'' \cite{Spekkens_2007}
            \end{displayquote}
We will see how this principle is implemented in different formalizations of the theory, and how it restricts the set of allowed states and the behaviour of the toy theory.

\subsection{Original toy theory: epistemically-restricted picture}

The knowledge balance principle is a restriction on the amount of knowledge one can have, and to understand this principle we first need to quantify knowledge. As a quantifier, Spekkens takes the amount of questions in a canonical set that an observer can answer. A \textbf{canonical set} is defined as the minimal set of questions that  can fully specify each ontic state with different combinations of answers. For example, assume that we only allow for binary questions,\footnote{In principle we can build theories where we would use canonical questions with three or more answers. However, these theories cannot easily be connected to each other \cite{Spekkens_2007}.} and consider a system with four different ontic states, labelled 1, 2, 3 and 4. As this is a two-bit system, a canonical set has two questions, for example
            \begin{displayquote}
                \{``Is the ontic state in $\{1,2 \} $?'', ``is the ontic state in $\{1,3 \} $?''\}.
            \end{displayquote}
For instance, an answer of (no, no) would single out the ontic state as 4.
The knowledge-balance principle restricts an observer to only know the answer to one of these questions. Suppose that we know the answer to the first question is ``yes''; then the epistemic state would be ``the ontic state is in $\{1,2\}$''. We denote this state as $1 \lor 2$. 

\paragraph{Graphical representation.} Epistemic states can also be graphically represented; in this representation we assign each ontic state as a cell in a grid,

            \renewcommand{\arraystretch}{1}
            \begin{equation}
                \begin{tabular}{C C C C}

                    \hline
                    \multicolumn{1}{|c}{ }&
                    \multicolumn{1}{|c}{ }&\multicolumn{1}{|c}{ }&
                    \multicolumn{1}{|c|}{ }\\
                    \hline
                    1 & 2 & 3& 4\\

                \end{tabular}
                \quad.
            \end{equation}
The epistemic state $1 \lor 2$ is then represented by marking the cells where the ontic state could be,
            \renewcommand{\arraystretch}{0}

            \begin{equation}
                \begin{tabular}{|{c}@{}|@{}{c}|{c}|{c}|}
                    \hline
                
                \colorcell{cyan}&\colorcell{cyan}&\colorcell{white}&\colorcell{white}\\ \hline
               
                \end{tabular}
                \quad .
              \end{equation}
            
\paragraph{Further restrictions.} To fully describe the toy theory, we need three more assumptions \cite{Spekkens_2007}. 
 Firstly, we only consider physical systems
            that allow for a knowledge-balance principle. In particular, each ontic state must be uniquely described by a set
            of answers to a canonical set of questions, and the number
            of questions needs to be divisible by two. If
            there are only two questions in the canonical set, we say
            that such a system is \textbf{elementary}, like in the above example.
            Secondly, we
            assume that every system is composed 
            of elementary systems,
            and that the knowledge balance principle applies to each possible set of
            subsystems. We will see that this assumption additionally
            constrains the restricts epistemic states.
            Thirdly, we assume that the epistemic
            state does not constrain the evolution of a system, as it
            seems implausible that the state of mind of an observer
            could influence the system.\footnote{This assumption will be addressed in future work.} 
            
\paragraph{Pure and mixed states.}  We call an epistemic state a maximal
            information state or \emph{pure} if exactly half of the answers
            to the questions in a canonical set are known; otherwise we
            call such a state non-maximal informational or \emph{mixed}.

           \paragraph{Composed systems.}  If a system is composed of $N$ elementary
            subsystems, it has $2N$ binary questions in the canonical
            set (two for each subsystem), and therefore has $2^{2N}$ ontic states. Note that  any
            set of questions that uniquely characterizes any ontic
            state with $2N$ elements is a canonical set of questions,
            because differentiating $2^{2N}$ ontic states requires $2N$
            bits of information. We denote an ontic state $o$ on
            $N$ subsystems as the composition of the ontic states of each single systems $o = (o_1 \times o_2 \times
            ...\times o_N)$, where $o_i$ is the ontic state of the $i$th subsystem. The symbol $\times$ represents the composition of the toy systems.
            
            %Ladina: Not quite sure if this formulation makes sense
            
            %\lidia{do we need to keep the $ \times$, couldn't we have $\times$ or $\otimes$? either way, have a note saying that $\cdot$ represents...} 

            We can extend the graphical
            notation for systems composed of two elementary subsystems.
            Each ontic state is represented by a field in a $4 \times
            4$ grid, such that the ontic state $(o_A \times o_B)$ is the
            in the $o_A$th row and the $o_B$th column as in the grid below:
            
            \renewcommand{\arraystretch}{1}
            \begin{equation}
                \begin{tabular}{c C C C C C}
                    \cline{3-6}
                    \multirow{4}{*}{A \hspace{0.05cm} }& \multicolumn{1}{c|}{4 \hspace{0.1cm}}&
                    \multicolumn{1}{c|}{ }&
                    \multicolumn{1}{c|}{ }&\multicolumn{1}{c|}{ }&
                    \multicolumn{1}{c|}{ }\\
                    \cline{3-6}
                    &\multicolumn{1}{c|}{ 3 \hspace{0.1cm}}&
                    \multicolumn{1}{c|}{ }&
                    \multicolumn{1}{c|}{ }&\multicolumn{1}{c|}{ }&
                    \multicolumn{1}{c|}{ }\\
                    \cline{3-6}
                    &\multicolumn{1}{c|}{ 2 \hspace{0.1cm}}&
                    \multicolumn{1}{c|}{ }&
                    \multicolumn{1}{c|}{ }&\multicolumn{1}{c|}{ }&
                    \multicolumn{1}{c|}{ }\\
                    \cline{3-6}
                    &\multicolumn{1}{c|}{ 1 \hspace{0.1cm}}&
                    \multicolumn{1}{c|}{ }&
                    \multicolumn{1}{c|}{ }&\multicolumn{1}{c|}{ }&
                    \multicolumn{1}{c|}{ }\\
                    \cline{3-6}
                    &\multicolumn{1}{c}{ }&1 & 2 & 3& 4\\ 
                    \multicolumn{1}{c}{}&\multicolumn{1}{c}{}&\multicolumn{4}{c}{B}

                \end{tabular} \quad.
            \end{equation}
            \renewcommand{\arraystretch}{0}
            An epistemic state is then represented by the fields where
            the ontic state could be. For example, the state $(1 \times
            1) \lor (1 \times
            2) \lor (2 \times
            1) \lor (2 \times
            2) $ would be represented by

            \begin{equation}
                \begin{tabular}{|{c}@{}|@{}{c}|{c}|{c}|}
                    \hline
                \colorcell{white}&\colorcell{white}&\colorcell{white}&\colorcell{white}\\ \hline
                \colorcell{white}&\colorcell{white}&\colorcell{white}&\colorcell{white}\\ \hline
                \colorcell{cyan}&\colorcell{cyan}&\colorcell{white}&\colorcell{white}\\ \hline
                \colorcell{cyan}&\colorcell{cyan}&\colorcell{white}&\colorcell{white}\\ \hline
               
                \end{tabular}\quad.
              \end{equation}
            For systems composed of more than two subsystems, states
            can be represented in a $N$-dimensional grid where each
            side has length $4$ \cite{Spekkens_2007}. 
            
            Let us investigate which epistemic states are allowed
            under these conditions. To do so, we need to characterize the canonical sets of questions, and find out what the knowledge balance principle entails for epistemic states. We show that it breaks permutation symmetry and imposes an inductive structure.

            To formalize the notion of  \emph{question} in the toy theory, we introduce a map $\mathcal{M}$ that maps a question
            \begin{equation}
                Q = \text{``Is the
            ontic state in $M_0$?''}
            \end{equation}
            % to a canonical set 
            to a partition of all ontic states $O$:
            \begin{equation}
                \mathcal{M}: Q \to (M_0, O \setminus M_0) =
            (M_0, M_1).
            \end{equation}
            We can write the answer to a question $Q_i$ as a bit $a_i$: 0 for ``yes, the ontic state is in $M_{i,0}$'' and 1 for ``no, the ontic state is not in $M_{i,0}$; it is in $M_{i,1}$ instead.'' 
            For a  given set of $k$ questions $\mathcal Q$, we can represent a hypothetical answer as a $k$-bit string $\vec a$ where each individual bit $a_i \in \{0,1\}$ is the answer to question $Q_i$.
            Defining these partitions allows us to  characterize canonical sets (proofs in the appendix). 

            \begin{restatable}[Canonical set of questions]{lemma}{toyorgcanonical}\label{toy:org:canonical}
                Let $\mathcal{Q}$ be a set of $2N$ questions for a system
                with $N$ elementary subsystems, $Q_i$ the $i$th
                question and $\mathcal{M}(Q_i) = (M_{i,0},M_{i,1})$.
                
                Then $\mathcal{Q}$ is a canonical set of questions if
                and only if for every $2N$-bit string $\vec a$ the
                intersection $\cap_{i} M_{i,a_i}$ contains exactly one
                element. 
            \end{restatable}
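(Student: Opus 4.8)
The plan is to reduce the whole statement to a single counting argument about the map that sends an ontic state to its string of answers. Define the \emph{answer map} $\phi : O \to \{0,1\}^{2N}$ by setting $\phi(o)_i = 0$ if $o \in M_{i,0}$ and $\phi(o)_i = 1$ if $o \in M_{i,1}$. This is well defined precisely because each $\mathcal{M}(Q_i) = (M_{i,0},M_{i,1})$ is a partition of $O$, so every ontic state $o$ lies in exactly one of the two parts and gets a unique bit $a_i$ from each question. The entire content of the lemma is then a dictionary between properties of $\phi$ and the stated intersection condition.

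First I would record the key identity: for any string $\vec a \in \{0,1\}^{2N}$,
\[
\bigcap_i M_{i,a_i} = \{\, o \in O : \phi(o) = \vec a \,\} = \phi^{-1}(\vec a),
\]
since $o \in \bigcap_i M_{i,a_i}$ means $o \in M_{i,a_i}$ for every $i$, i.e.\ $\phi(o)_i = a_i$ for all $i$. Thus the intersections $\bigcap_i M_{i,a_i}$ are exactly the fibres of $\phi$, and the condition ``$\bigcap_i M_{i,a_i}$ has exactly one element for every $\vec a$'' is literally the statement that every fibre of $\phi$ is a singleton, i.e.\ that $\phi$ is a bijection.

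Next I would connect ``canonical'' to bijectivity of $\phi$. By definition a canonical set is a \emph{minimal} set of questions whose answer combinations uniquely specify each ontic state. For $N$ elementary subsystems there are $\lvert O\rvert = 2^{2N}$ ontic states, and (as already noted in the excerpt) distinguishing $2^{2N}$ states with binary questions requires at least $2N$ questions, so $2N$ is exactly the minimal size. Hence, for a set of $2N$ questions, being canonical is equivalent to the unique-specification property alone, which is precisely the injectivity of $\phi$. Combining this with the fibre identity gives the equivalence in both directions, modulo upgrading injectivity to bijectivity.

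The argument carries no deep obstacle; the one step requiring care is exactly that upgrade. A priori, injectivity of $\phi$ only gives that each intersection $\bigcap_i M_{i,a_i}$ has \emph{at most} one element, whereas the lemma demands \emph{exactly} one. The hard part, such as it is, is therefore to invoke finiteness: since the domain and codomain both have cardinality $2^{2N}$, an injection $\phi : O \to \{0,1\}^{2N}$ between finite sets of equal size is automatically surjective, so every fibre is nonempty as well. This is what turns ``at most one'' into ``exactly one,'' and it is the place where the matching cardinalities $\lvert O\rvert = \lvert\{0,1\}^{2N}\rvert = 2^{2N}$ are genuinely used rather than just bookkeeping.
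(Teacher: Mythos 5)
Your proof is correct, and it runs on the same counting engine as the paper's: the correspondence between the $2^{2N}$ ontic states and the $2^{2N}$ possible answer strings. The packaging differs in a way worth noting. The paper argues the two directions separately: for ``$\implies$'' it simply asserts that unique specification forces each intersection $\cap_i M_{i,a_i}$ to contain exactly one element, and for ``$\Longleftarrow$'' it assigns to each bit string the unique element of its intersection, checks that no two strings share an element, and invokes the pigeonhole principle. You instead introduce a single answer map $\phi : O \to \{0,1\}^{2N}$, identify the intersections as its fibres, and observe that the lemma's condition is exactly bijectivity of $\phi$, while canonicity (for a set of the minimal size $2N$) is exactly injectivity; an injection between finite sets of equal cardinality is a bijection, and both directions fall out at once. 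This buys a genuine gain in rigor as well as economy: injectivity alone only yields \emph{at most} one element per intersection, and one really needs $\lvert O\rvert = \lvert\{0,1\}^{2N}\rvert$ to rule out empty intersections. The paper's ``$\implies$'' direction, read literally, jumps from ``each ontic state is uniquely defined through a set of answers'' to ``each intersection contains exactly one element'' without this surjectivity step --- your proposal is the one that makes that step explicit.
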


           We saw that the epistemic state corresponds to a set
           of answers to some questions in a canonical set, which rule out some ontic states.  We call the set of ontic states that are
           allowed under a given epistemic state the \textbf{ontic basis} of the epistemic state \cite{Spekkens_2007}. We will often
           write an epistemic state as $o_1 \lor o_2 \lor ... \lor
           o_{k}$ for $o_1, ..., o_k$ ontic states in its ontic basis. This allows us to calculate the
           ontic basis from the set of answers in
           the following way.
           
           \begin{restatable}[Ontic basis of epistemic states]{lemma}{toyorgbasis} \label{toy:org:basis}
               Let $\mathcal{Q}$ be a canonical set of $2N$ questions on $N$ elementary systems, corresponding to the partitions  $\mathcal M(Q_i) = (M_{i,0}, M_{i,1})$. 
               Let $\mathcal E \subseteq \{1, 2, \dots 2N\} $ index a subset of these questions, and $\mathcal A_{\mathcal E} = \{ a_i\}_{i \in \mathcal E} $  represent answers to this subset of questions (that is, an epistemic state, not necessarily a valid one). 
                Then the ontic basis of this epistemic state is given by the set $E= \cap_{i \in \mathcal E} M_{i, a_i}$. 
           \end{restatable}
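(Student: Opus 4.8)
The plan is to prove the claimed set equality by directly unwinding the definition of the ontic basis, since the statement is essentially a bookkeeping exercise relating the recorded answers to the partition map $\mathcal{M}$. Recall that, by definition, the ontic basis of an epistemic state is the set of all ontic states that are \emph{allowed} by it, i.e.\ consistent with every answer recorded in $\mathcal{A}_{\mathcal{E}}$. I would therefore first make precise what ``consistent with the answer $a_i$'' means for a single question.

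For a single question $Q_i$ with $\mathcal{M}(Q_i) = (M_{i,0}, M_{i,1})$, the bit-encoding of answers says that $a_i = 0$ asserts ``the ontic state lies in $M_{i,0}$'' and $a_i = 1$ asserts ``the ontic state lies in $M_{i,1}$''. Hence an ontic state $o$ is consistent with the answer $a_i$ if and only if $o \in M_{i,a_i}$. This is the only place where the definitions of $\mathcal{M}$ and of the answer bits enter.

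Next I would combine the individual questions. An ontic state $o$ belongs to the ontic basis precisely when it is simultaneously consistent with every answer $a_i$ for $i \in \mathcal{E}$; by the previous step this holds if and only if $o \in M_{i,a_i}$ for all $i \in \mathcal{E}$, which is by definition $o \in \cap_{i \in \mathcal{E}} M_{i,a_i} = E$. Reading this chain of equivalences in both directions yields the two inclusions $E \subseteq (\text{ontic basis})$ and $(\text{ontic basis}) \subseteq E$, and hence the desired equality.

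There is no substantive obstacle here: the content of the lemma is the clean translation between ``a list of answers'' and ``an intersection of partition cells''. The only points requiring a little care are to phrase ``allowed/consistent'' unambiguously and to observe that canonicity of $\mathcal{Q}$ is not actually needed for the set equality itself, which holds for any family of questions. Canonicity is rather what makes the result sharp: as a consistency check, taking $\mathcal{E} = \{1, \dots, 2N\}$ recovers Lemma~\ref{toy:org:canonical}, since then $E = \cap_{i} M_{i,a_i}$ is a single ontic state, and more generally expanding the intersection over the $2N - |\mathcal{E}|$ unspecified bits shows $|E| = 2^{\,2N - |\mathcal{E}|}$.
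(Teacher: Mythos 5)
Your proof is correct and follows essentially the same route as the paper's: both arguments simply unwind the definition, noting that an ontic state is consistent with the answer $a_i$ if and only if it lies in $M_{i,a_i}$, so the ontic basis equals $\cap_{i \in \mathcal{E}} M_{i,a_i}$ by reading the equivalence in both directions. Your additional observation that canonicity is not needed for the set equality (only for cardinality statements such as Theorem~\ref{toy:org:number}) is a nice clarification but does not change the substance of the argument.
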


           Lemmas~\ref{toy:org:canonical} and~\ref{toy:org:basis} allow us to find a necessary
           condition for the validity of epistemic states. 

           \begin{restatable}[Dimension of valid epistemic states]{theorem}{toyorgnumber}\label{toy:org:number}
               Let $E$ be an ontic basis of a valid epistemic state
               on $N$ elementary systems, which covers the answers to a subset of $k$ out of $2N$ canonical questions. Then $|E| =
               2^{2N-k}$.
           \end{restatable}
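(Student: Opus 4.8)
The plan is to reduce the statement to a counting argument driven by \cref{toy:org:canonical}. Since $E$ is the ontic basis of a valid epistemic state specified by answers $\{a_i\}_{i\in\mathcal E}$ to $k$ of the $2N$ questions in a canonical set $\mathcal Q$, with $\mathcal M(Q_i)=(M_{i,0},M_{i,1})$, \cref{toy:org:basis} gives $E=\bigcap_{i\in\mathcal E}M_{i,a_i}$. The central idea is that each pair $(M_{j,0},M_{j,1})$ is a bipartition of the full set of ontic states $O$, so I may split $E$ according to the (unfixed) answers to the remaining $2N-k$ questions and thereby write it as a union of complete intersections, to which canonicity applies.

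Concretely, let $\mathcal E^c=\{1,\dots,2N\}\setminus\mathcal E$ index the $2N-k$ questions whose answers are not fixed. Using $M_{j,0}\cup M_{j,1}=O$ for every $j$ and distributing the intersection, I would write
\begin{equation}
E=\bigcap_{i\in\mathcal E}M_{i,a_i}=\bigcup_{\vec c\in\{0,1\}^{\mathcal E^c}}\Bigl(\bigcap_{i\in\mathcal E}M_{i,a_i}\cap\bigcap_{j\in\mathcal E^c}M_{j,c_j}\Bigr).
\end{equation}
Each summand is a full intersection $\bigcap_{i=1}^{2N}M_{i,d_i}$ over the complete $2N$-bit string $\vec d$ that equals $\vec a$ on $\mathcal E$ and $\vec c$ on $\mathcal E^c$, so by \cref{toy:org:canonical} it contains exactly one ontic state. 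Moreover the union is disjoint: for $\vec c\neq\vec c'$ there is some $j\in\mathcal E^c$ with $c_j\neq c_j'$, and since $M_{j,0}\cap M_{j,1}=\varnothing$ the two summands lie in disjoint halves of the $j$th partition.

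The count then finishes the argument: there are exactly $2^{|\mathcal E^c|}=2^{2N-k}$ answer completions $\vec c$, each contributing one distinct ontic state, so $|E|=2^{2N-k}$. I do not anticipate a genuine obstacle, as the result is purely combinatorial; the only points needing care are checking that the decomposition is both exhaustive and pairwise disjoint (both immediate from each $Q_j$ inducing a true bipartition of $O$) and noting that the role of \emph{validity} is precisely to guarantee that the $k$ fixed answers sit inside a genuine canonical set, so that \cref{toy:org:canonical} can be invoked on the completed strings. The formula depends only on $k$ and $N$, not on any finer structure of the chosen state.
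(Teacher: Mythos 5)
Your proposal is correct and takes essentially the same route as the paper's proof: both decompose $E$ according to the $2^{2N-k}$ possible answer completions on the unknown questions, invoke \cref{toy:org:canonical} to conclude each completed answer string picks out exactly one ontic state, and verify disjointness and exhaustiveness of that decomposition. The only difference is presentational --- you obtain exhaustiveness directly by distributing the intersection over $M_{j,0}\cup M_{j,1}=O$, whereas the paper argues it separately by noting every ontic state in $E$ is characterized by some completion $\mathbf{b}$.
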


        \subparagraph{Simplest example: the toy qubit \cite{Spekkens_2007}.}\label{toy:org:ex:n_1}
                    From the previous discussion, a state on a single elementary
                    system needs to have an ontic basis of size either
                    $2^{2-1} = 2$ or $2^{2-0} = 4$ and any permutation
                    of $2$ or $4$ ontic states in the ontic basis is
                    valid. The valid states are
                    \begin{gather}
                        \begin{tabular}{|{c}@{}|@{}{c}|{c}|{c}|}
                    \hline
                \colorcell{cyan}&\colorcell{cyan}&\colorcell{white}&\colorcell{white}\\ \hline
                \end{tabular}\ ,\qquad
                        \begin{tabular}{|{c}@{}|@{}{c}|{c}|{c}|}
                    \hline
                \colorcell{white}&\colorcell{white}&\colorcell{cyan}&\colorcell{cyan}\\ \hline
                \end{tabular}\  ,\\
                        \begin{tabular}{|{c}@{}|@{}{c}|{c}|{c}|}
                    \hline
                \colorcell{cyan}&\colorcell{white}&\colorcell{cyan}&\colorcell{white}\\ \hline
                \end{tabular}\ ,\qquad
                        \begin{tabular}{|{c}@{}|@{}{c}|{c}|{c}|}
                    \hline
                \colorcell{white}&\colorcell{cyan}&\colorcell{white}&\colorcell{cyan}\\ \hline
                \end{tabular}\ , \\
                        \begin{tabular}{|{c}@{}|@{}{c}|{c}|{c}|}
                    \hline
                \colorcell{cyan}&\colorcell{white}&\colorcell{white}&\colorcell{cyan}\\ \hline
                \end{tabular}\ ,\qquad
                        \begin{tabular}{|{c}@{}|@{}{c}|{c}|{c}|}
                    \hline
                \colorcell{white}&\colorcell{cyan}&\colorcell{cyan}&\colorcell{white}\\ \hline
                \end{tabular}\ ,
                    \end{gather}
                    and the non-maximal information state
                    \begin{equation}\label{toy:org:fullymixed}
                        \begin{tabular}{|{c}@{}|@{}{c}|{c}|{c}|}
                    \hline
                \colorcell{cyan}&\colorcell{cyan}&\colorcell{cyan}&\colorcell{cyan}\\ \hline
                \end{tabular}\ .
                    \end{equation}

                    This single system is analogous to a qubit system
                    quantum mechanics. In particular, we can identify
                    the six maximum information states with the following pure states of a qubit \cite{Spekkens_2007}:

                    \begin{align}
                        \begin{split}
                            (1\lor 2) &\iff |0\rangle, \\
                            (3\lor 4) &\iff |1\rangle ,\\
                            (1\lor 3) &\iff |+\rangle ,\\
                            (2\lor 4) &\iff |-\rangle ,\\
                            (2\lor 3) &\iff |+ i \rangle, \\
                            (1\lor 4) &\iff |- i\rangle ,
                        \end{split}
                    \end{align}
                    where $|\pm\rangle =(|0\rangle \pm
                    |1\rangle)/\sqrt{2}$ and $|\pm i \rangle
                    =(|0\rangle \pm i
                    |1\rangle)/\sqrt{2}$,
                    and the non-maximal information state with the
                    fully mixed qubit state \cite{Spekkens_2007}:

                    \begin{equation}
                        (1\lor 2 \lor 3 \lor 4) \iff \frac{\mathbbm 1}2 .
                    \end{equation}

           \paragraph{Valid states at the level of subsystems.} Theorem~\ref{toy:org:number} only says something about the
           knowledge balance principle applied to the global state.
           However, for a state to be valid, the knowledge balance
           principle also has to be fulfilled on each subsystem. As an example consider the valid bipartite state
           $(1\cdot 1)\lor (1\cdot 2)\lor  (2\cdot 1)\lor (2\cdot 2)$
           graphically represented as

           \begin{equation}
            \begin{tabular}{|{c}@{}|@{}{c}|{c}|{c}|}
                    \hline
                \colorcell{white}&\colorcell{white}&\colorcell{white}&\colorcell{white}\\ \hline
                \colorcell{white}&\colorcell{white}&\colorcell{white}&\colorcell{white}\\ \hline
                \colorcell{cyan}&\colorcell{cyan}&\colorcell{white}&\colorcell{white}\\ \hline
                \colorcell{cyan}&\colorcell{cyan}&\colorcell{white}&\colorcell{white}\\ \hline
                \end{tabular} \quad.
           \end{equation}
            The knowledge balance principle applied to the entire
            state is fulfilled for any permutation of the ontic
            states. We can apply the permutation to the partition
            corresponding to the questions in the canonical set and
            get a new permuted canonical set of questions, because the
            condition in \cref{toy:org:canonical} is still fulfilled.
            Thus, the permuted epistemic state fulfills the knowledge
            balance principle at the global. However, some of these permutations, like the state
            $(1 \cdot 1) \lor (2 \cdot  1) \lor (3 \cdot  1) \lor (4
            \cdot  1)$,

                \begin{equation}
                    \begin{tabular}{|{c}@{}|@{}{c}|{c}|{c}|}
                    \hline
                \colorcell{cyan}&\colorcell{white}&\colorcell{white}&\colorcell{white}\\ \hline
                \colorcell{cyan}&\colorcell{white}&\colorcell{white}&\colorcell{white}\\ \hline
                \colorcell{cyan}&\colorcell{white}&\colorcell{white}&\colorcell{white}\\ \hline
                \colorcell{cyan}&\colorcell{white}&\colorcell{white}&\colorcell{white}\\ \hline
                \end{tabular} \quad , 
                \label{eq:invalid_local_states}
                \end{equation}
            are not valid: in this example, the
            knowledge balance principle is not fulfilled on the second
            system, where the local ontic state is known to be $1$.
            This example shows that the knowledge balance principle on
            the entire state is permutation-symmetric, but when we
            require it to hold for every
            subsystem  we break this permutation symmetry.

            \subparagraph{Local permutations.} Nevertheless, some permutations still map valid states to valid states, even for the composed systems.  These are \textbf{local} permutations: those that act on individual elementary subsystems, leaving the others untouched. This fact leads to an equivalence relation between states \cite{Spekkens_2007}.

                    \begin{definition}[Equivalent epistemic states] A permutation of a composed system is called local if it acts only on an individual elementary subsystem. 
                        Two epistemic states are equivalent if and
                        only if there exists a local permutation that
                        maps one to other.
                    \end{definition}

                    Spekkens shows that this is indeed an equivalence relation \cite{Spekkens_2007}.
                    
            \subparagraph{Example: two elementary systems \cite{Spekkens_2007}.}
                For a system composed of two elementary subsystems (the analogous of two qubits), the equivalence classes of states are those of the pure states
                    \begin{align*}
                        \begin{tabular}{|{c}@{}|@{}{c}|{c}|{c}|}
                    \hline
                \colorcell{white}&\colorcell{white}&\colorcell{white}&\colorcell{white}\\ \hline
                \colorcell{white}&\colorcell{white}&\colorcell{white}&\colorcell{white}\\ \hline
                \colorcell{cyan}&\colorcell{cyan}&\colorcell{white}&\colorcell{white}\\ \hline
                \colorcell{cyan}&\colorcell{cyan}&\colorcell{white}&\colorcell{white}\\ \hline
                \end{tabular} \iff  \ket 0  \otimes \ket 0, \quad 
                        \begin{tabular}{|{c}@{}|@{}{c}|{c}|{c}|}
                    \hline
                \colorcell{white}&\colorcell{white}&\colorcell{white}&\colorcell{cyan}\\ \hline
                \colorcell{white}&\colorcell{white}&\colorcell{cyan}&\colorcell{white}\\ \hline
                \colorcell{white}&\colorcell{cyan}&\colorcell{white}&\colorcell{white}\\ \hline
                \colorcell{cyan}&\colorcell{white}&\colorcell{white}&\colorcell{white}\\ \hline
                \end{tabular} \iff \frac{1}{\sqrt{2}} (|0\rangle \otimes |0\rangle + |1\rangle \otimes |1\rangle),
                    \end{align*}
                and the mixed states 
                    \begin{align*}
                        \begin{tabular}{|{c}@{}|@{}{c}|{c}|{c}|}
                    \hline
                \colorcell{white}&\colorcell{white}&\colorcell{white}&\colorcell{white}\\ \hline
                \colorcell{white}&\colorcell{white}&\colorcell{white}&\colorcell{white}\\ \hline
                \colorcell{cyan}&\colorcell{cyan}&\colorcell{cyan}&\colorcell{cyan}\\ \hline
                \colorcell{cyan}&\colorcell{cyan}&\colorcell{cyan}&\colorcell{cyan}\\ \hline
                \end{tabular} \iff \proj00  \otimes \frac{\mathbb{1}}2, \quad
                        \begin{tabular}{|{c}@{}|@{}{c}|{c}|{c}|}
                    \hline
                \colorcell{cyan}&\colorcell{cyan}&\colorcell{white}&\colorcell{white}\\ \hline
                \colorcell{cyan}&\colorcell{cyan}&\colorcell{white}&\colorcell{white}\\ \hline
                \colorcell{cyan}&\colorcell{cyan}&\colorcell{white}&\colorcell{white}\\ \hline
                \colorcell{cyan}&\colorcell{cyan}&\colorcell{white}&\colorcell{white}\\ \hline
                \end{tabular} \iff \frac{\mathbb{1}}2  \otimes \proj00 ,
                    \end{align*}
                    \begin{align*}
                        \begin{tabular}{|{c}@{}|@{}{c}|{c}|{c}|}
                    \hline
                \colorcell{white}&\colorcell{white}&\colorcell{cyan}&\colorcell{cyan}\\ \hline
                \colorcell{white}&\colorcell{white}&\colorcell{cyan}&\colorcell{cyan}\\ \hline
                \colorcell{cyan}&\colorcell{cyan}&\colorcell{white}&\colorcell{white}\\ \hline
                \colorcell{cyan}&\colorcell{cyan}&\colorcell{white}&\colorcell{white}\\ \hline
                \end{tabular} \iff \frac{1}{2}(\proj00  \otimes \proj00 + \proj11  \otimes \proj11), 
                    \end{align*}
                    \vspace{0.5cm}
                    and
                    \begin{align*}
                        \begin{tabular}{|{c}@{}|@{}{c}|{c}|{c}|}
                    \hline
                \colorcell{cyan}&\colorcell{cyan}&\colorcell{cyan}&\colorcell{cyan}\\ \hline
                \colorcell{cyan}&\colorcell{cyan}&\colorcell{cyan}&\colorcell{cyan}\\ \hline
                \colorcell{cyan}&\colorcell{cyan}&\colorcell{cyan}&\colorcell{cyan}\\ \hline
                \colorcell{cyan}&\colorcell{cyan}&\colorcell{cyan}&\colorcell{cyan}\\ \hline
                \end{tabular} \iff \frac{\mathbb{1}}4. 
                    \end{align*}

            \subparagraph{Inductive validity checks.} The requirement that the knowledge balance principle has
            to apply to each subsystem leads to an inductive
            structure: let 
            \begin{equation}
                E_{1,2,\dots N} = (a_{1,1} \cdot ... \cdot a_{1,N}) \lor (a_{2,1} \cdot ... \cdot a_{2,N}) \lor ... \lor (a_{2^{k},1} \cdot ... \cdot a_{2^{k},N})
            \end{equation}
            be a state whose validity needs to be checked. First we need to check that the state is globally valid. Then the
            knowledge balance requires that each reduced state of $N-1$ subsystems be
            a valid epistemic state too. For example the marginal over
            system 1 (equivalent to tracing out system 1)
            \begin{equation}
                E_{2,\dots N} = (a_{1,2} \cdot ... \cdot a_{1,N}) \lor (a_{2,2} \cdot ... \cdot a_{2,N}) \lor ... \lor (a_{2^{k},2} \cdot ... \cdot a_{2^{k},N})
            \end{equation}
            needs to be a valid state. So we check whether this reduced state is valid, first at a global level\dots But for this to be a valid state, each of its $(N-2)$-subsystem marginals must be valid, and so on. 
            Thus, the knowledge balance principle leads to an inductive
            structure, and a lengthy process to verify validity of composed states. 
            
            Note that it would not be sufficient to simply check the reduced state of all the elementary subsystems, as ``illegal'' correlations between those could make a locally valid state be globally invalid at any of the  levels described above. We go over such an example later on, when we describe measurements; see  (\ref{toy:org:mmt4}).

\subsection{Stabilizer formalism}

In general it
    is hard to directly verify if a given epistemic state over multiple systems is valid, because of the inductive structure of
    epistemic states. However, Pusey \cite{Pusey_2012} noticed that
    the epistemic restriction can be expressed in a way analogous to
    the quantum stabilizer formalism. It is important to note that
    while the quantum stabilizer formalism only allows the description
    of a small corner of the Hilbert space
    \cite{Aaronson_2004,Gottesman_1997}, the toy stabilizer formalism
    can describe every state in the toy theory. The stabilizer approach
    allows us to answer questions that are left open in the previous
    formalism, and to verify the validity of states and operations more efficiently. The proof of the equivalence of the
    stabilizer and epirestricted pictures can be
    found in \cite{Pusey_2012}. 
    
     \paragraph{Quantum stabilizers.}   Let us first briefly review the quantum stabilizer formalism~\cite{Gottesman_1997}; the analogy to the toy stabilizers will be useful ahead. 
    Let $\mathbb{1}_{n}$ be the $n \times n$ identity matrix and 
            \begin{equation}
                X = \begin{pmatrix}
                    0 & 1 \\[6pt]
                    1 & 0 
                    \end{pmatrix},\qquad
                Y = \begin{pmatrix}
                    0 & -i \\[6pt]
                    i & 0 
                    \end{pmatrix},\qquad
                Z = \begin{pmatrix}
                    1 & 0 \\[6pt]
                    0 & -1 
                    \end{pmatrix}
            \end{equation}
            the Pauli matrices. Note that they are not linearly independent as
            $XZ = -iY$. The \emph{Pauli group} on $N$ qubits $P_N$ is defined
            as the following set of matrices
            \begin{equation}
                P_N = \{\alpha M_1 \otimes ...\otimes M_N| \ M_i \in \{\mathbb{1}_{2},X,Y,Z\}, \ \alpha \in \{1,-1,i,-i\}\}.
            \end{equation}
             This group is generated by $\{ i\mathbb{1}, X_k,
             Z_k\}_k$ where $X_k, Z_k$ act on the
            $k$-th qubit with the $X$ or $Z$ operator respectively and trivially on the others.
            
           A \emph{stabilizer group} $S$ is any subgroup of $P_N$
            satisfying the condition that all elements commute and that  $- \mathbb{1}$ is not included. It can be shown that a stabilizer
            subgroup can can have at most $N$ independent elements,  or
            $2^N$ elements in total. We then define the density matrix
            associated with $S$, \footnote{The normalization factor
            of $2^{-N}$ comes from $tr(\rho_s) := 2^{-N}
            \sum_{g\in S} \tr(g) =  2^{-N} tr(\mathbb{1}_{2^{N}})= 1 $,
            as all Pauli operators except the identity have trace zero.}
            \begin{equation}
                \rho_S = 2^{-N} \sum_{g\in S} g.
            \end{equation}
            This is a uniform mixture over $\frac{2^{N}}{|S|}$ states. In
            particular, it is a pure state if $|S| = 2^{N}$. 
            
            Often when the stabilizer formalism is used, one is not directly
            interested in $\rho_S$ but rather in the subspace $V_S$ of states
            $|\psi\rangle$ that are stabilized by $S$, i.e the vectors
            $|\psi\rangle $ such that all $g \in S$ fulfill
            $g|\psi\rangle = |\psi\rangle$. Note that $\rho_S$ is a uniform mixture
            of the basis elements of $V_S$.

       \paragraph{Toy stabilizers.} Now let us relate the formalism above with the language of the original toy theory, following Pusey~\cite{Pusey_2012}. We associate the four elementary ontic states $1, 2, 3, 4$ of the previous
         subsubsection with the vectors $e_1 = (1,0,0,0)^T$, $e_2 =
         (0,1,0,0)^T$, $e_3 = (0,0,1,0)^T$, $e_4 = (0,0,0,1)^T$. We define
         the toy Pauli matrices as
         \begin{equation}
                    \mathcal{X} = \begin{pmatrix}
                            1 & 0 & 0 & 0 \\[6pt]
                            0 & -1 & 0 & 0 \\[6pt]
                            0 & 0 & 1 & 0 \\[6pt]
                            0 & 0 & 0 & -1 
                        \end{pmatrix}, \qquad
                  \mathcal{Y} = \begin{pmatrix}
                            1 & 0 & 0 & 0 \\[6pt]
                            0 & -1 & 0 & 0 \\[6pt]
                            0 & 0 & -1 & 0 \\[6pt]
                            0 & 0 & 0 & 1 
                        \end{pmatrix}, 
            \qquad \mathcal{Z} = \begin{pmatrix}
                            1 & 0 & 0 & 0 \\[6pt]
                            0 & 1 & 0 & 0 \\[6pt]
                            0 & 0 & -1 & 0 \\[6pt]
                            0 & 0 & 0 & -1 
                        \end{pmatrix}.
                \end{equation}
                Similarly to the Pauli matrices, these matrices are not
                independent as $\mathcal{Y} = \mathcal{Z}
                \mathcal{X}$. With these toy Pauli matrices, we define
                the toy Pauli group 
                \begin{equation}
                    G_N = \{\alpha \ p_1 \otimes ... \otimes
                    p_N| \ p_i \in
                    \{\mathbb{1}_{4},\mathcal{X},\mathcal{Y},\mathcal{Z}\}, \ \alpha \in \{1,-1\} \}.
                \end{equation}
                The set $\{\mathcal{X}_k, \mathcal{Z}_k, -\mathbb{1}|
                k \in {1,...,N} \}$ generates $G_N$, where
                $\mathcal{X}_k$ acts as $\mathcal{X}$ on the k-th
                elementary system and $\mathcal{Z}_k$ is defined
                analogously. Each ontic state $o_N = e_{i_1} \otimes
                e_{i_2} ... \otimes e_{i_N}$ is in a different
                combination of eigenspaces of generators of $G_N$.
                Hence, the ontic state is fully characterized by the combination of eigenspaces it lies in.

                \paragraph{``Commutation'' relations.}
                In the quantum stabilizer formalism, commutation relations
                between elements of the Pauli group play a large role.
                In the toy theory though, all elements of $G_N$ commute, as they correspond to diagonal matrices. One can get around this aggravation by imposing ``commutation relations'' by hand, in analogy to the quantum case. That is, we
                define the map $m: G_N \to P_N$ between the toy
                and quantum Pauli groups such that
                $m(\mathcal{X}_k) = X_k$, $m(\mathcal{Z}_k) = Z_k$,
                $m(-\mathbb{1}_{2N}) = - \mathbb{1}_N$ and $m(gh) =
                m(g)m(h)$. We further define a binary relation $\sim$ between two
                elements $g,h \in G_N$  such that $g \sim f$ if and only if
                $m(g)$ and $m(f)$ commute. We forsake rigour of notation for convenience, and  say that in this case $g$ and $f$
                ``commute’’.

                \paragraph{Toy stabilizer groups.}
                In analogy to the quantum stabilizer formalism,
                we can define a stabilizer group $\mathcal{S}$ as a
                commuting subgroup of $G_N$ that does not contain
             $-\mathbb{1}_{2N}$. 
                The epistemic state
                associated with a stabilizer group $\mathcal{S}$ is given by the union of all ontic states stabilized by operators in the group,
                \begin{align*}
                    E_G = \bigvee \{ o_N= e_{i_1} \otimes e_{i_2}\otimes \dots \otimes e_{i_N}: g\ o_N = o_N, \  \forall \ g\in \mathcal S \}. 
                \end{align*}
                For
                example, in an elementary system, the epistemic state $1 \lor 2$ (in  stabilizer
                notation $e_1 \lor e_2$)  corresponds to the
                stabilizer group $\operatorname{span}\{\mathcal{Z}\}$. The
                maximal amount of independent generators of a
                stabilizer group $\mathcal{S}$ is $N$. Epistemic
                states with such a stabilizer group are precisely the pure states.
                
               \paragraph{Relation to original formalism.} We can
                regard each toy Pauli operator as a question where the
                two eigenspaces (for eigenvalues $+1$ and $-1$) can be identified with the
                partition over states corresponding to the question. For example,  $\mathcal{Z}$  corresponds to the question ``Is the ontic state in $\{1,2 \} $?''
                The fact
                that there are at most $N$ independent
                generators in a stabilizer group agrees with the knowledge balance principle, which requires the maximal amount of questions an observer can answer to be $N$, half of the size of the canonical set.
                 The
                commutation requirement ensures that the knowledge
                balance principle is also fulfilled on subsystems.

\subsection{Generalizing to  arbitrary dimensions}

Up to now we only considered the toy theory for systems analogous to qubits. However, the toy theory can be generalized to arbitrary $d$-level and continuous systems. The
generalization of Spekkens' toy theory is inspired by canonical quantization~\cite{Dirac1925},  where a set of observables is  jointly measurable if they all commute, this time relative to the Poisson bracket, instead of the usual matrix commutator~\cite{SpekkensFoundations2016}. 
 
\paragraph{Toy position and momentum.}    
To formalize this notion, we represent a system's $n$ degrees of
        freedom in a language of the ``positions'' $q_1,...,q_n$
        and conjugated ``momenta'' $p_1,...,p_n$.
        The associated phase space is denoted by $\Omega =
        \mathbb{R}^{2n}$ in the continuous case and $\Omega =
        \mathbb{Z}_{d}^{2n}$ in the discrete case.\footnote{ Note that for $d$ not prime $\Omega = \mathbb{Z}_{d}^{2n}$ is
        not a vector space, because $\mathbb{Z}_{d}$ is not a field.
        However, we can see $\mathbb{Z}_{d}^{2n}$ as a module over the
        ring $\mathbb{Z}_{d}$. In this case special care is needed,
        as not every linear algebra result also holds for
        $\mathbb{Z}_{d}^{2n}$ \cite{SpekkensFoundations2016}. In contrary to past research
        \cite{Catani_2017}, in this work we were able to unify and simplify the
        treatment of the prime and non-prime case.} Each point in the phase state $\vec m =
        (q_1,p_1,...,q_n,p_n)^T \in \Omega$ is an \textbf{ontic state} of the
        system. These variables do not have to correspond to actual positions or momenta of particles.

 \paragraph{Quadrature variables.}       Observables  are represented by functionals on the state space, $f:
        \Omega \to \mathbb{R} $ or $\mathbb{Z}_d$.
        Importantly, we will be looking at
        \emph{quadrature variables}: linear combinations of position and
        momentum variables, 
        \begin{align*}
            f(\vec{m}) = a_1\ q_1 + b_1\ p_1 + a_2 \ q_2 + b_2 \ p_2 + \dots + c.
        \end{align*}
        In the discrete case, this expression has to be understood
        within mod $d$. Without loss of generality $c$ can be set to
        zero, because if $f(m)-c$ is known then $f(m)$ is also known.
        Therefore, each quadrature variable can be associated with a
        vector $\vec{f} \in \Omega$ with the coefficients $a_1,b_1,\dots$ and its
        evaluation on an ontic state $\vec m$ can be compactly written as 
        \begin{align*}
            f(\vec{m}) = \vec{f}^T \vec{m}. 
        \end{align*}
        For example, the quadrature variable $q_1 + p_1$ would be
        associated with the vector $\vec{f}^T = (1,1, 0, \dots, 0)$.

\paragraph{Poisson bracket.}        
        The next step is to define the Poisson bracket of two functionals $f$ and $g$. In the continuous case, it follows the definition of classical mechanics \cite{SpekkensFoundations2016},
        \begin{align*}
            [f,g](\vec{m}) = \sum_{i=1}^{n} \left(\frac{\partial f}{\partial \vec q_i}\frac{\partial g}{\partial \vec p_i}-\frac{\partial f}{\partial \vec p_i}\frac{\partial g}{\partial \vec  q_i}\right),
        \end{align*}
        where $\vec{q_i},\vec{p_i}$ are the vectors in phase space
        where all entries are zero except the position or momentum
        of the $i$th degree of freedom, which is $q_i$ or $p_i$ respectively.
        In the discrete case, the
        Poisson bracket can be defined with differences in the
        respective modulo space \cite{SpekkensFoundations2016}:
        \begin{align*}
            \begin{split}
                [f,g](\vec{m}) = \sum_{i=1}^{n} &\Big (f(\vec{m}+\vec{q_i})-f(\vec{m}))(g(\vec{m}+\vec{p_i})-g(\vec{m})\big) \\
                                            &- \big(f(\vec{m}+\vec{p_i})-f(\vec{m}))(g(\vec{m}+\vec{q_i})-g(\vec{m})\Big)_{\text{mod } d} \quad.
            \end{split}
        \end{align*}

\paragraph{Epistemic restriction.}        
     The complete epistemic restriction in then given by
        the principle of classical complementarity:

        \begin{displayquote}
            ``The valid epistemic states are those wherein an agent
            knows the values of a set of quadrature variables that
            commute relative to the Poisson bracket, and is maximally
            ignorant otherwise.'' \cite{SpekkensFoundations2016}
        \end{displayquote}
        Here, ``maximal ignorance'' means that there is a uniform
        probability over all other values of variables.
        It can be shown that this complementarity principle requires observables
        to be linear, i.e.\ quadrature observables.
        
        \paragraph{Sympletic inner product.} 
        Representing quadrature observables as a vector $\vec{f}$ allows for a different expression of the
        commutation rules. We calculate the Poisson bracket for both the
        continuous and discrete case, where the sum has to be understood
        in mod $d$ in the discrete case:
        \begin{align}
            [f,g] = \sum_{i = 1}^n f_{2i - 1} g_{2i} - f_{2i} g_{2i-1}
        \end{align}
        where $f_j$ denotes the $j$th entry of $\vec{f}$. We can rewrite this as a sympletic inner product of the two observables $\vec f $ and $\vec g$, 
        \begin{align}
            [f,g] = \vec{f}^T\ \vec{J}\ \vec{g} =: \langle \vec{f} , \vec{g} \rangle,
        \end{align}
        where we defined the matrix
        \begin{equation}
        \label{eq:j}
            \vec{J} = \begin{pmatrix}
                0 & 1 & 0 & 0 & \dots \\[6pt]
                -1 & 0 & 0 & 0 & \\[6pt]
                0 & 0 & 0 & -1 & \\[6pt]
                0 & 0 & 1 & 0 & \\
                \vdots &  &  & &\ddots  \\
            \end{pmatrix}. 
        \end{equation}
        Therefore two variables commute if and only if they are
        orthogonal with respect to this symplectic inner product \cite{SpekkensFoundations2016}.

        \paragraph{Isotropic subspaces of compatible observables.}
        When an observer knows the value of variables $\vec{f}_1,...,\vec{f}_k$,
     they automatically know the values of all variables in the
        span $\langle \vec{f}_1,...,\vec{f}_k \rangle$. This is because the observer can
        deduce the value of $a \vec{f}_i + b \vec{f}_j$ for known $a,b \in \mathbb{Z}_d
        $ or $\mathbb{R}$ if $\vec{f}_i$ and $\vec{f}_j$ are known. Therefore,
        the known variables form a
        subvector space (in the continuous case or the discrete one with $d$
        prime) or submodule (in the case where $d$ not prime) $V = \langle \vec{f}_1,...,\vec{f}_k
        \rangle$
        of $\Omega$. This subvector space or submodule is called
        \emph{isotropic}, if it satisfies the condition
        \begin{equation}
            \forall \ \vec{f},\vec{g} \in V: \ \langle \vec{f},\vec{g} \rangle = 0.
        \end{equation}
        A vector space or module spanned by commuting elements is
        isotropic by definition. The maximal dimension of an isotropic
        vector space of $\Omega$ is $n$, the number of degrees of
        freedom. Therefore, the maximal amount of variables that can
        be known is $n$, which is exactly half the number of linearly
        independent variables, echoing the knowledge balance
        principle~\cite{SpekkensFoundations2016}.

\paragraph{Valuation vector.}
        For each possible value assignment to a minimal generating set
        $\{\vec{f}_1,...,\vec{f}_k \}$ of an isotropic subvector space or submodule
        corresponds to a vector $\vec{v} \in \Omega$, called the
        valuation vector. For example, if
        $V = \langle \begin{pmatrix} 1 \\[6pt]
            1\end{pmatrix} \rangle$ and we know that $p_1+q_1
            = 0$, the vector $\vec{v}$ is then $\begin{pmatrix} 1 \\[6pt]
                -1\end{pmatrix}$. This vector is not unique: $\vec{v}
                = \begin{pmatrix} 2 \\[6pt]
                    -2 \end{pmatrix}$ would also be an equivalent
                    choice for the valuation vector, as the variables
                    in $V$ have the same valuation for both vectors.

\paragraph{Epistemic states.}        The classical complementarity principle requires that an
        observer be maximally ignorant for all the other variables.
        Therefore, the epistemic state is a probability distribution
        over phase space where we assign equal probability to all ontic
        states $\vec{m} \in \Omega$ which are compatible with the
        knowledge of the observer. More formally, an ontic state
        $\vec{m} \in \Omega$ is compatible with the valuation vector
        $\vec{v}$ if for all $\vec{f} \in V$ 
        \begin{equation}
            \vec{f}^T \vec{m} = \vec{f}^T \vec{v}.
        \end{equation}
        This condition can be rewritten as
        \begin{equation}\label{toy:gen:com_ont}
            \vec{f}^T (\vec{m} -\vec{v}) = 0.
        \end{equation}
        For all ontic states that fulfill the above condition it must
        hold that 
        \begin{equation}
            \vec{m}- \vec{v} \in
        V^{\perp}=\{\vec{m}' \in \Omega| \ \forall \vec{f} \in V \
        \vec{f}^T \vec{m}' = 0\}.
        \end{equation}
        Thus, we can write the set of
        compatible ontic states as the affine subvector space or
        submodule
        
        \begin{equation}
            V^{\perp} + \vec{v} = \{\vec{m}' \in \Omega| \vec{m}'-
        \vec{v} \in V^{\perp}\}.
        \end{equation}
        This forms an equivalence class: the set of ontic states
        that cannot be operationally distinguished through
        observations in $V$, which all correspond to a single epistemic state. We
        summarize the above in the following definition for epistemic
        states. 

        \begin{definition}[Epistemic state (generalized)] Let $V= \langle
        \vec{f}_1,...,\vec{f}_k \rangle$ be the isotropic space/module
        generated by commuting observables $\vec{f}_1,..,\vec{f}_k$,
        that is a set of mutually knowable observables. 
        Let $\vec{v}$ be one possible valuation vector for these
        observables. Together, $\vec{v}$ and $V$ form an epistemic
        state $(V,\vec{v})$. The set of ontic states compatible with that
        valuation is given by $V^{\perp} + \vec{v}$.
        Intuitively, this corresponds to the set of possible ontic
        states, given that we measured the observables in $V$ and
        obtained valuation $\vec{v}$. 
        \end{definition} 

        The resulting probability distribution of possible ontic
        states for a given epistemic state, assuming classical
        complementarity over phase space
        $\mu_{V,\vec{v}}$ is then given by
        \begin{equation}\label{eq:gen:prob_state}
            \mu_{V,\vec{v}}(\vec{m}) = \frac{1}{N_{V}} \delta_{V^{\perp} + \vec{v}}(\vec{m})
        \end{equation}
        with 

        \begin{equation}
            \delta_{V^{\perp} + \vec{v}}(m) = \prod_{\vec{f} \in V} \delta(\vec{f}^T \vec{m}- \vec{f}^T \vec{v})
        \end{equation}
        and $\frac{1}{N_{V}}$ a normalization constant. In the
        discrete case $\delta$ is the Kronecker delta and in the
        continuous case it is the Dirac delta function \cite{SpekkensFoundations2016}.

\paragraph{Pure and mixed states.}  We call a state maximal information or
        \textit{pure} if the corresponding isotropic vector space $V$ has the
        maximal dimension $n$, otherwise we call the state \textit{mixed}
        \cite{SpekkensFoundations2016}.  

\paragraph{Relation to stabilizer formalism.}    
        The above definition of states is reminiscent of the
        stabilizer formulation. In the case of $d = 2$, we
        can consider the functionals $\vec{f} \in V$ as the
        stabilizers, and the valuations as a generalization of the sign
        of the stabilizer. For example, if we know the value of the
        functional $f(\vec{m}) = q$ to be $1$, then we know that all
        possible ontic states are of the form $\begin{pmatrix} q
        \\[6pt]
            p \end{pmatrix} = \begin{pmatrix} 1 \\[6pt]
        b \end{pmatrix}$ with $b \in \{0,1\}$. We can assign the four
        possible ontic states $\begin{pmatrix} 0 \\[6pt]
        0 \end{pmatrix}, \begin{pmatrix} 1 \\[6pt]
        0 \end{pmatrix}, \begin{pmatrix} 0\\[6pt]
        1 \end{pmatrix}, \begin{pmatrix} 1 \\[6pt]
        1 \end{pmatrix}$ in this formalism to the ontic states
        ${1,2,3,4}$ in the previous formalism. Then the 
        epistemic state is $2 \lor 4$ which is stabilized by
        $-\mathcal{X}$. If the valuation of $f(\vec{m}) = q$ were $0$, then
        the state would be stabilized by $\mathcal{X}$. 
        
               \begin{definition}[Correspondence between stabilizer and  general formalisms]
        Let $d = 2$.
         We say  a stabilizer $g = \alpha p_1 \otimes ... \otimes p_n$ corresponds to an observable $\vec f$ (or vice-versa)
         if the entries of the observable  $\{f_{2j-1}, f_{2j}\}_{j \in \{1,...,n\}}$ are related to the stabilizer's individual qubits' observables $\{p_j\}_{j \in \{1,...,n\}}$ as 
        \begin{align}
            (f_{2j-1}, f_{2j} )= (0,0) \iff p_j = \mathbb{1}_2 ,\\
            (f_{2j-1}, f_{2j} )= (0,1)  \iff p_j = \mathcal{Z}, \\
            (f_{2j-1}, f_{2j} )= (1,0)  \iff p_j = \mathcal{X}, \\
            (f_{2j-1}, f_{2j} )= (1,1)  \iff p_j = \mathcal{Y},
        \end{align}
        and $\alpha \in \{-1,1\}$. 
        
        In addition, we say that a stabilizer state $\mathcal{S} = \langle g_1,...,g_N \rangle$ corresponds 
        to a generalized state $(V, \vec v)$  (or vice-versa) if and only if for each observable $\vec f \in V$ there is  a corresponding stabilizer $g \in \mathcal{S} = \alpha p_1 \otimes ... \otimes p_n$ satisfying $\alpha = (-1)^{\vec f^T \vec v}$ (that is, $\vec f^T \vec v = 0 \iff \alpha = 1$ and $\vec f^T \vec v = 1 \iff  \alpha = -1$).
\end{definition}
        
        \begin{restatable}[Correspondence between generalized and stabilizer formalisms is sound]{lemma}{lemmaStateCorrespondenceGentab}\label{toy:gen:stabcorrespondence}
             If $d=2$, each general state $(V,\vec v)$ has a corresponding stabilizer state and each stabilizer state $\mathcal{S}$ has a corresponding general state. Furthermore, if variables $\vec f$ and $\vec g$ correspond to stabilizers $g',g''$ respectively, then $\vec f + \vec g$ corresponds to $g' g''$. 
        \end{restatable}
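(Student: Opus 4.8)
The plan is to first pin down the single-qubit dictionary between bit-pairs and toy Paulis, establish that it is \emph{multiplicative} (the third claim of the lemma), and then bootstrap both existence directions from it, treating the valuation/sign bookkeeping as a linear-algebra problem over $\mathbb{Z}_2$. First I would check that $(f_{2j-1},f_{2j})\mapsto p_j$ is a bijection from $\mathbb{Z}_2^2$ onto $\{\mathbb{1}_4,\mathcal{X},\mathcal{Y},\mathcal{Z}\}$ carrying addition of bit-pairs to multiplication of toy Paulis \emph{with no extra sign}. Since the toy Paulis are diagonal $\pm 1$ matrices squaring to $\mathbb{1}_4$ with $\mathcal{Y}=\mathcal{Z}\mathcal{X}=\mathcal{X}\mathcal{Z}$, the single-qubit group modulo sign is exactly the Klein four-group, and every product of two toy Paulis is again a toy Pauli, never its negative. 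Tensoring over the $n$ qubits shows that the Pauli part of the operator associated to $\vec f+\vec g$ is the product of the Pauli parts of $\vec f$ and $\vec g$; the signs then combine as $(-1)^{\vec f^T\vec v}(-1)^{\vec g^T\vec v}=(-1)^{(\vec f+\vec g)^T\vec v}$, which is precisely the sign demanded for $\vec f+\vec g$. This proves the last sentence of the lemma and serves as the multiplicative backbone for the two existence statements.

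For the forward direction, given $(V,\vec v)$ with $V=\langle\vec f_1,\dots,\vec f_k\rangle$ isotropic, I would let $g_i$ be the toy Pauli corresponding to $\vec f_i$ with sign $(-1)^{\vec f_i^T\vec v}$ and set $\mathcal{S}=\langle g_1,\dots,g_k\rangle$. I must verify this is a legitimate stabilizer group. The relation $g_i\sim g_j$ is by definition the commutation of $m(g_i)$ and $m(g_j)$ in $P_N$, which holds iff their symplectic product $\langle\vec f_i,\vec f_j\rangle$ vanishes mod $2$; isotropy of $V$ gives exactly this, so all generators, hence all of $\mathcal{S}$, commute. To exclude $-\mathbb{1}_{2N}$, note any element is a product $\prod_{i\in I}g_i$ whose Pauli part corresponds to $\sum_{i\in I}\vec f_i$ and whose sign is $(-1)^{(\sum_{i\in I}\vec f_i)^T\vec v}$; if that Pauli part is $\mathbb{1}_{2N}$ then $\sum_{i\in I}\vec f_i=\vec 0$, forcing the sign to be $+1$. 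Finally, each $\vec f\in V$ is some $\sum_{i\in I}\vec f_i$, and the associated element of $\mathcal{S}$ carries the required sign $(-1)^{\vec f^T\vec v}$ independently of the representation, so $\mathcal{S}$ corresponds to $(V,\vec v)$.

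For the reverse direction, given $\mathcal{S}=\langle g_1,\dots,g_N\rangle$ I would read off $\vec f_i$ from the Pauli part of each $g_i$ and set $V=\langle\vec f_1,\dots,\vec f_N\rangle$; commutation again yields $\langle\vec f_i,\vec f_j\rangle=0$, so $V$ is isotropic. The remaining task is to produce a valuation $\vec v$ with $\vec f_i^T\vec v=s_i$ for all $i$, where $(-1)^{s_i}$ is the sign of $g_i$. This is a linear system over $\mathbb{Z}_2$, solvable iff every dependency $\sum_{i\in I}\vec f_i=\vec 0$ is matched by $\sum_{i\in I}s_i=0$. But such a dependency means $\prod_{i\in I}g_i$ has trivial Pauli part, i.e.\ equals $\pm\mathbb{1}_{2N}$; since $-\mathbb{1}_{2N}\notin\mathcal{S}$, it must be $+\mathbb{1}_{2N}$, whose sign $(-1)^{\sum_{i\in I}s_i}$ equals $+1$, giving $\sum_{i\in I}s_i=0$ as needed. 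Any solution $\vec v$ then gives a general state $(V,\vec v)$ corresponding to $\mathcal{S}$.

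The step I expect to be the main obstacle is this valuation consistency in the reverse direction: converting the purely algebraic hypothesis $-\mathbb{1}_{2N}\notin\mathcal{S}$ into solvability of the $\mathbb{Z}_2$ system for $\vec v$. This is where the no-minus-identity condition does real work, and it leans on the earlier observation that products of toy Paulis produce no spurious signs, so that the sign of $\prod_{i\in I}g_i$ is exactly $(-1)^{\sum_{i\in I}s_i}$. The commutation-to-isotropy correspondence also needs care to match the paper's symplectic form $\langle\cdot,\cdot\rangle$ (the matrix $\vec J$) with the commutation relation of the $m(g_i)$, although over $\mathbb{Z}_2$ the sign conventions inside $\vec J$ become immaterial.
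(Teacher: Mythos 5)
Your proof is correct and follows the same overall route as the paper's (the same bit-pair dictionary between observables and toy Paulis, and the same matching of commutation with the symplectic form), but it is substantially more complete in precisely the two places where the paper's own proof is thin. First, in the forward direction the paper verifies only that the constructed generators commute and then declares the resulting stabilizer group valid; you additionally rule out $-\mathbb{1}_{2N}$, using the observation that toy Paulis multiply with no spurious signs (unlike the quantum Paulis, where $XZ=-iY$), so any element of $\mathcal{S}$ with trivial Pauli part has sign $(-1)^{(\sum_{i\in I}\vec f_i)^T\vec v}=+1$. Second, and more importantly, for the reverse direction the paper's justification for the existence of a valuation vector is the remark that ``$d=2$ is prime and each observable attains all valuations'', which addresses a single observable at a time and does not establish the joint consistency of all the sign constraints; your reduction to solvability of the linear system $\vec f_i^T\vec v=s_i$ over $\mathbb{Z}_2$, with the consistency of every dependency $\sum_{i\in I}\vec f_i=\vec 0$ guaranteed exactly by $-\mathbb{1}_{2N}\notin\mathcal{S}$, is the argument that genuinely closes this gap and identifies where the no-minus-identity hypothesis does real work. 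Your treatment of the third claim (multiplicativity) matches the paper's component-wise check. In short: same strategy, but your write-up supplies rigor that the paper's proof omits, and it would strengthen the appendix if substituted for it.
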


   \paragraph{Remark.} It has been stated that for a valid state $(V, \vec v)$, the valuation
        $\vec{v}$ should be chosen in $V$ \cite{SpekkensFoundations2016,Catani_2017}; however, we found this to be unjustified. If we consider the
        case $d = 2$ and the isotropic space $V =
        \langle \vec f \rangle$, with $\vec f =p+q= \begin{pmatrix} 1 \\[6pt]
        1\end{pmatrix}$,  then the only two vectors in $V$ are $\vec v_0 = \begin{pmatrix} 0 \\[6pt]
        0\end{pmatrix}$  and $\vec v_1 = \begin{pmatrix} 1 \\[6pt]
        1\end{pmatrix}$, both of which have valuation $0$, as $\vec f^T \vec v_j \mod 2 = 0,$ $j=0,1$. There is no reason why there shouldn't exist a state satisfying  $p + q =1 $, but the
        only valuation vectors which produce this outcome (like $\begin{pmatrix} 1 \\[6pt]
        0\end{pmatrix}$) lie outside
        of $V$.  Thus, if we restrict the valuation vector $\vec{v}$ to
         lie in $V$, we unnecessarily limit the possible  number of
        outcomes of observables.

\newpage
\section{Measurements}
\label{sec:measurements}

\subsection{Original toy theory}

A measurement partitions the ontic state space into valid epistemic
            states, called the measurement basis.\footnote{We will see later that the converse can be problematic: not all partitions result in intuitive measurements.} The outcome of the
            measurement is determined by the position of the ontic
            state and is the epistemic state compatible with the ontic
            state \cite{Spekkens_2007}. Let us illustrate the measurement process with an example. Consider the following epistemic state:
            \begin{equation} \label{toy:org:mmt_state}
                \begin{tabular}{|{c}@{}|@{}{c}|{c}|{c}|}
                    \hline
                \colorcell{white}&\colorcell{white}&\colorcell{white}&\colorcell{cyan}\\ \hline
                \colorcell{white}&\colorcell{white}&\colorcell{cyan}&\colorcell{white}\\ \hline
                \colorcell{white}&\colorcell{cyan}&\colorcell{white}&\colorcell{white}\\ \hline
                \colorcell{cyan}&\colorcell{white}&\colorcell{white}&\colorcell{white}\\ \hline
                \end{tabular}
            \end{equation}
            and the measurement (where different numbers correspond to the epistemic states of the measurement basis):
            \renewcommand{\arraystretch}{1}
            \begin{equation}\label{toy:org:mmt}
                \begin{tabular}{|M| M |M| M|}

                    \cline{1-4}
                    3&3 & 4& 4 \\
                    \cline{1-4}
                    3&3 & 4& 4 \\
                    \cline{1-4}
                    
                    1&1 & 2& 2 \\
                    \cline{1-4}
                    
                    1&1 & 2& 2 \\
                    \cline{1-4}

                \end{tabular}
            \end{equation}
            \renewcommand{\arraystretch}{0}

            % \begin{equation}\label{toy:org:mmt}
            %     \begin{tabular}{|*{1}{R}|*{1}{R} |*{1}{R}|*{1}{R}|}
            %         \hline
            %         0.5& 0.5 & 0.25& 0.25 \\ \hline
            %         0.5& 0.5 & 0.25& 0.25 \\\hline
            %         1 & 1 & 0.75 & 0.75 \\\hline
            %         1 & 1 & 0.75 & 0.75 \\\hline
                  
            %     \end{tabular}
            % \end{equation}    
           The only results compatible with the epistemic state are
           4 and 1. Suppose the ontic state
           is $(4 \times 4)$; then the measurement outcome has to be
           4.

        \paragraph{Measurement update rule.}
        Let us consider two cases: a measurement of a single system ($N=1$), and a measurement on  multiple systems ($N>1$).
        
            \subparagraph{N = 1.}
                Let us assume the system is in the epistemic state 
                \begin{equation}
                    (1\lor 2) = 
                    \begin{tabular}{|{c}@{}|@{}{c}|{c}|{c}|}
                    \hline
                \colorcell{cyan}&\colorcell{cyan}&\colorcell{white}&\colorcell{white}\\ \hline
                \end{tabular}
                \end{equation}
                we measure in the basis ($a$ and $b$ denote two different basis states)
                \renewcommand{\arraystretch}{1}
                \begin{equation}
                    \{(1 \lor 3),(2 \lor4)\} = 
                        \begin{tabular}{|M| M |M| M|}

                            \cline{1-4}
                            a&b & a& b \\
                            
                            \cline{1-4}

                        \end{tabular}
                \end{equation}
                \renewcommand{\arraystretch}{0}
                and the outcome is $a$. From this measurement result, we
                can conclude that the ontic state before the
                measurement was $1$. This seems like a violation of
                the knowledge balance principle. However, the knowledge
                balance principle does not say anything about the
                knowledge an observer can have about the past state of a system. In fact, only the updated  epistemic state needs
                to fulfill the knowledge balance principle. As in
                quantum mechanics, we require that repeated
                measurement yields the same result. These conditions
                can be fulfilled through the following principle \cite{Spekkens_2007}. A measurement
                gives an unknown disturbance to the ontic state:
                either the permutation $1 \leftrightarrow 2$ is
                applied or no permutation is applied. Under this
                update rule, the post-measurement epistemic state is
                \begin{equation}
                    (1 \lor 3) = 
                    \begin{tabular}{|{c}@{}|@{}{c}|{c}|{c}|}
                    \hline
                \colorcell{cyan}&\colorcell{white}&\colorcell{cyan}&\colorcell{white}\\ \hline 
                \end{tabular} \ \text{or} \ (2 \lor 4) = 
                    \begin{tabular}{|{c}@{}|@{}{c}|{c}|{c}|}
                    \hline
                \colorcell{white}&\colorcell{cyan}&\colorcell{white}&\colorcell{cyan}\\ \hline 
                \end{tabular}
                \end{equation}

            \subparagraph{N $>$ 1.}
                In the case of multiple systems the update rule is
               more complicated. If the measurement is a conjunction
               of measurements on a single subsystem the update rule
               for one subsystem can be used. As an example consider
               the measurement in \cref{toy:org:mmt} applied to the
               state \cref{toy:org:mmt_state} and assume the system is
               in the ontic state $4 \times 4$. The measurement can be
               decomposed into the measurement on the first system
               
               \renewcommand{\arraystretch}{1}
            \begin{equation}
                \begin{tabular}{|M| M |M| M|}

                    \cline{1-4}
                    1&1 & 2& 2 \\
                    \cline{1-4}
                    1&1 & 2& 2 \\
                    \cline{1-4}
                    
                    1&1 & 2& 2 \\
                    \cline{1-4}
                    
                    1&1 & 2& 2 \\
                    \cline{1-4}

                \end{tabular}\ ,
            \end{equation}
            \renewcommand{\arraystretch}{0}
                and a measurement on the second system
            \renewcommand{\arraystretch}{1}
            \begin{equation}
                \begin{tabular}{|M| M |M| M|}

                    \cline{1-4}
                    1&1 & 1& 1 \\
                    \cline{1-4}
                    1&1 & 1& 1 \\
                    \cline{1-4}
                    
                    2&2 & 2& 2 \\
                    \cline{1-4}
                    
                    2&2 & 2& 2 \\
                    \cline{1-4}

                \end{tabular} \ .
            \end{equation}
            \renewcommand{\arraystretch}{0}
               The ontic state requires the measurement outcome of the
               first measurement to be 2. This excludes
               that $1 \cdot 1$ and $2 \cdot 2$ are the ontic state
               and the permutation $3 \leftrightarrow 4$ on
               the first system is randomly applied or not. Therefore, the
               updated state after the first measurement is 

               \begin{equation}\label{toy:org:mmt3}
                \begin{tabular}{|{c}@{}|@{}{c}|{c}|{c}|}
                    \hline
                \colorcell{white}&\colorcell{white}&\colorcell{cyan}&\colorcell{cyan}\\ \hline
                \colorcell{white}&\colorcell{white}&\colorcell{cyan}&\colorcell{cyan}\\ \hline
                \colorcell{white}&\colorcell{white}&\colorcell{white}&\colorcell{white}\\ \hline
                \colorcell{white}&\colorcell{white}&\colorcell{white}&\colorcell{white}\\ \hline
                \end{tabular} \ .
            \end{equation}  

                The second measurement will have the outcome 1 and the
                permutation $3 \leftrightarrow 4$ is randomly applied
                or not on the horizontal system. However, this
                permutation does not change the epistemic state and
                the epistemic state does not change after this
                measurement. 
                
\paragraph{Restriction on valid states.}  This update rule leads to a restriction
                on which epistemic states are valid. Consider the state 

                \begin{equation}\label{toy:org:mmt4}
                    \begin{tabular}{|{c}@{}|@{}{c}|{c}|{c}|}
                    \hline
                \colorcell{white}&\colorcell{cyan}&\colorcell{white}&\colorcell{white}\\ \hline
                \colorcell{cyan}&\colorcell{white}&\colorcell{white}&\colorcell{white}\\ \hline
                \colorcell{cyan}&\colorcell{white}&\colorcell{white}&\colorcell{white}\\ \hline
                \colorcell{cyan}&\colorcell{white}&\colorcell{white}&\colorcell{white}\\ \hline
                \end{tabular} 
                \end{equation} 
               At first glance one would think that this state was a
               valid epistemic state, as all marginals are valid and
               the knowledge balance principle is fulfilled on the
               bulk. However, if one measures the vertical system in
               $\{(1 \lor 3),(2 \lor4)\}$ and the outcome is $(2
               \lor4)$ this update rule says the state should be
               updated to 
               \begin{equation}\label{toy:org:mmt5}
                \begin{tabular}{|{c}@{}|@{}{c}|{c}|{c}|}
                    \hline
                \colorcell{white}&\colorcell{cyan}&\colorcell{white}&\colorcell{cyan}\\ \hline
                \colorcell{white}&\colorcell{white}&\colorcell{white}&\colorcell{white}\\ \hline
                \colorcell{white}&\colorcell{white}&\colorcell{white}&\colorcell{white}\\ \hline
                \colorcell{white}&\colorcell{white}&\colorcell{white}&\colorcell{white}\\ \hline
                \end{tabular}
            \end{equation} 
            which violates the knowledge balance principle. One could
            also argue that then it should not be allowed to measure
            the vertical system in $\{(1 \lor 3),(2 \lor4)\}$.
            However, for each measurement on a single system one can
            find a state with valid marginals that would violate the
            knowledge balance principle in a similar manner. Thus, one
            agent could not measure the system in their possession.
            For this reason, we do not forbid the measurement but instead
            say that the state is invalid \cite{Spekkens_2007}.

\paragraph{Coarse-grained measurements.}
               We can distinguish between maximal 
               and non-maximal information measurements. A maximal information measurement is a
               measurement where the space is partitioned into maximal
               information (or pure) states. In this case the state is updated
               to the state in the measurement basis which the outcome
               corresponds to. A non-maximal information measurement, or coarse-grained measurement,
               has at least some states in the measurement basis are
               non-maximal information (or mixed) states. This means there may be
               multiple valid epistemic states which yield the same
               result under repeated measurement. In this case, the
               measurement update rule cannot be uniquely defined by
               the conditions we imposed above. In quantum mechanics, a
               similar problem also appears and the most common update
               rule is to project the original state into the subspace
               associated with the result. In the toy theory, an
               analogous update rule can be defined using the fidelity
               \cite{Spekkens_2007}.

                \begin{definition}[Fidelity {\cite{Spekkens_2007}}]
                    The fidelity of two epistemic states $E_a$,$E_b$
                    is the classical fidelity of the uniform
                    probability distributions $P_a$,$P_b$ over the
                    ontic basis of the states $E_a$,$E_b$
                    \begin{equation}
                        F(E_a,E_b) = \sum_{o \in O} \sqrt{P_a(o)}\sqrt{P_b(o)}.
                    \end{equation} 
                \end{definition}
                Spekkens defines the updated state as the epistemic state with the
                highest fidelity with the pre-measurement state that
                will lead to the same output under repeated
                measurement \cite{Spekkens_2007}.

\paragraph{Example.} For example,
                consider the measurement

                 \renewcommand{\arraystretch}{1}
            \begin{equation}
                \begin{tabular}{|M| M |M| M|}

                    \cline{1-4}
                    1&1 & 3& 2 \\
                    \cline{1-4}
                    1&1 & 2& 3 \\
                    \cline{1-4}
                    
                    3&2 & 1& 1 \\
                    \cline{1-4}
                    
                    2&3 & 1& 1 \\
                    \cline{1-4}

                \end{tabular} \renewcommand{\arraystretch}{0}
            \qquad
            \text{on  state } 
            \qquad
         \begin{tabular}{|{c}@{}|@{}{c}|{c}|{c}|}
                    \hline
                \colorcell{white}&\colorcell{white}&\colorcell{white}&\colorcell{white}\\ \hline
                \colorcell{cyan}&\colorcell{cyan}&\colorcell{white}&\colorcell{white}\\ \hline
                \colorcell{cyan}&\colorcell{cyan}&\colorcell{white}&\colorcell{white}\\ \hline
                \colorcell{white}&\colorcell{white}&\colorcell{white}&\colorcell{white}\\ \hline
                \end{tabular}\quad .
            \end{equation} 
            Suppose that the measurement outcome is 1. The
            possible post measurement states are those with ontic support in that partition, which are the valid epistemic states
            \renewcommand{\arraystretch}{0}
            \begin{equation}
\begin{tabular}{|{c}@{}|@{}{c}|{c}|{c}|}
                    \hline
                \colorcell{cyan}&\colorcell{cyan}&\colorcell{white}&\colorcell{white}\\ \hline
                \colorcell{cyan}&\colorcell{cyan}&\colorcell{white}&\colorcell{white}\\ \hline
                \colorcell{white}&\colorcell{white}&\colorcell{cyan}&\colorcell{cyan}\\ \hline
                \colorcell{white}&\colorcell{white}&\colorcell{cyan}&\colorcell{cyan}\\ \hline
                \end{tabular} \ ,\quad
                \begin{tabular}{|{c}@{}|@{}{c}|{c}|{c}|}
                    \hline
                \colorcell{white}&\colorcell{white}&\colorcell{white}&\colorcell{white}\\ \hline
                \colorcell{white}&\colorcell{white}&\colorcell{white}&\colorcell{white}\\ \hline
                \colorcell{white}&\colorcell{white}&\colorcell{cyan}&\colorcell{cyan}\\ \hline
                \colorcell{white}&\colorcell{white}&\colorcell{cyan}&\colorcell{cyan}\\ \hline
                \end{tabular} \ ,\quad
                               \begin{tabular}{|{c}@{}|@{}{c}|{c}|{c}|}
                    \hline
                \colorcell{cyan}&\colorcell{cyan}&\colorcell{white}&\colorcell{white}\\ \hline
                \colorcell{cyan}&\colorcell{cyan}&\colorcell{white}&\colorcell{white}\\ \hline
                \colorcell{white}&\colorcell{white}&\colorcell{white}&\colorcell{white}\\ \hline
                \colorcell{white}&\colorcell{white}&\colorcell{white}&\colorcell{white}\\ \hline
                \end{tabular} \ ,\quad
                 \begin{tabular}{|{c}@{}|@{}{c}|{c}|{c}|}
                    \hline
                \colorcell{white}&\colorcell{cyan}&\colorcell{white}&\colorcell{white}\\ \hline
                \colorcell{cyan}&\colorcell{white}&\colorcell{white}&\colorcell{white}\\ \hline
                \colorcell{white}&\colorcell{white}&\colorcell{white}&\colorcell{cyan}\\ \hline
                \colorcell{white}&\colorcell{white}&\colorcell{cyan}&\colorcell{white}\\ \hline
                \end{tabular} \ ,\quad
                \begin{tabular}{|{c}@{}|@{}{c}|{c}|{c}|}
                    \hline
                \colorcell{cyan}&\colorcell{white}&\colorcell{white}&\colorcell{white}\\ \hline
                \colorcell{white}&\colorcell{cyan}&\colorcell{white}&\colorcell{white}\\ \hline
                \colorcell{white}&\colorcell{white}&\colorcell{cyan}&\colorcell{white}\\ \hline
                \colorcell{white}&\colorcell{white}&\colorcell{white}&\colorcell{cyan}\\ \hline
                \end{tabular}\ ,
            \end{equation} 
            where the first is a mixed state, and all others are pure. Indeed the first state can be seen as either a mixture of the second and third or of the fourth and fifth states (see section~\ref{sec:mixtures} for a definition of mixtures). 
            According to Spekkens' update rule we have to choose the state
            with maximal fidelity to the pre-measurement state, which is the third one.\footnote{As we will see  when we generalize the measurement update for arbitrary dimensions (Theorems~\ref{toy:gen:measurementupdate} and~\ref{toy:gen:measurementupdateequivalence}), there is nothing special about fidelity as a particular measure of overlap between two states, at least where this task in concerned. Other measures (with different powers of the distributions instead of a square root) may lead to the same choice of final state.} 
            
\subsection{Stabilizer formalism}

\paragraph{Quantum stabilizer measurement.}
Suppose we want to measure the Pauli
          operator $g$ on a quantum state by the stabilizer group $S$. If $\pm
          g \in S$ the outcome is deterministically $\pm 1$,
          otherwise with equal probability the outcome is $\pm 1$. In
          the latter case, the stabilizer group needs to be updated.
          As the measurement determined the value of $g$, $g$ itself is added
          to the stabilizer group. However, this could lead to an
          invalid stabilizer. Therefore, if there is at least one
          stabilizer $h$ in the generating set of $S$  that
          anticommutes with $g$, we multiply all other anticommuting
          generators with $h$ and replace $h$ with $g$. For example,
          let $S = span\{X_1 , X_2\}$ be a stabilizer group
          corresponding to the state $\ket{+}\ket{+}$. Let us measure the
          observable $Z_1$. Given that  $Z_1 \notin S$ the measurement
          outcome is random; suppose that it is $-1$. Since
          $X_1$ anticommutes  with $Z_1$, we remove $X_1$ and replace
          it with $-Z_1$. Thus, after the measurement the state has
          the stabilizer group $S = span\{-Z_1,X_2\}$, corresponding
           to $\ket{1}\ket{+}$ \cite{Gottesman_1997}. 
        %   \lidia{isn't this also done by gottesman?}.

\paragraph{Toy theory: binary measurements. } In the epirestricted description of the toy theory we
 have argued that measurements are a partition of all
                ontic states into valid epistemic states. A toy
                observable $g$ then partitions the ontic states into
                the epistemic states $span\{g\}$ and $span\{-g\}$.
                The same procedure as in the quantum stabilizer formalism
                can be applied to update the epistemic state \cite{Pusey_2012}.
                We list all generators of the original state, remove the 
                  first generator $h$ that does not commute with $g$, multiply all other generators that do not commute with $g$ by $h$, and finally add either $g$ or $-g$ (depending on the outcome) to the list of generators. 
                  
\paragraph{Measurements on two elementary systems.} However, not all
                partitions are of the form $span\{g\}$ and
                $span\{-g\}$, in particular when a measurement has more than two possible outcomes. Consider the partition: 
                \renewcommand{\arraystretch}{1}
                \begin{equation}
                    \begin{tabular}{|M| M |M| M|}

                        \cline{1-4}
                        1&1 & 3& 2 \\
                        \cline{1-4}
                        1&1 & 2& 3 \\
                        \cline{1-4}
                        
                        3&2 & 4& 4 \\
                        \cline{1-4}
                        
                        2&3 & 4& 4 \\
                        \cline{1-4}

                    \end{tabular} \ .
                \end{equation}
                This partition corresponds to the following stabilizer
                states:  
                \begin{align}
                    \begin{split}
                        span\{\mathcal{X}_1 \mathcal{X}_2,\mathcal{Z}_1 \mathcal{Z}_2\}, \quad 
                       span\{-\mathcal{X}_1\mathcal{X}_2,\mathcal{Z}_1\mathcal{Z}_2\}, \quad 
                        span\{\mathcal{Z}_1,-\mathcal{Z}_2\}, \quad
                        span\{-\mathcal{Z}_1,\mathcal{Z}_2\}.
                    \end{split}
                \end{align}
                We can implement this measurement by first measuring
                $\mathcal{Z}_1 \mathcal{Z}_2$, which corresponds to
                the partition:
                
                \begin{equation}
                    \begin{tabular}{|M| M |M| M|}

                        \cline{1-4}
                        1&1 & 2& 2 \\
                        \cline{1-4}
                        1&1 & 2& 2 \\
                        \cline{1-4}
                        
                        2&2 & 1& 1 \\
                        \cline{1-4}
                        
                        2&2 & 1& 1 \\
                        \cline{1-4}

                    \end{tabular}
                \end{equation}
                \renewcommand{\arraystretch}{0}
                If the outcome is 1, we measure
                $\mathcal{X}_1 \mathcal{X}_2$, otherwise we measure the
                stabilizer $\mathcal{Z}_1$. In this way, we can implement
                more complicated partitions.
                
\paragraph{Measurements on more systems.} However, the above procedure
                does not work in general for more than two elementary systems. For example,  the partition
                \begin{align}
                    \begin{split}
                     span\{\mathcal{Z}_1,\mathcal{Z}_2,\mathcal{X}_3\} , \quad
                        span\{-\mathcal{Z}_1,\mathcal{X}_2,\mathcal{Z}_3\} , \quad
                        span\{\mathcal{X}_1,-\mathcal{Z}_2,-\mathcal{Z}_3\} ,\quad
                        span\{\mathcal{Z}_1,\mathcal{Z}_2,-\mathcal{X}_3\} , \\
                        span\{-\mathcal{Z}_1,-\mathcal{X}_2,\mathcal{Z}_3\} , \quad
                        span\{-\mathcal{X}_1,-\mathcal{Z}_2,-\mathcal{Z}_3\}, \quad
                        span\{\mathcal{Z}_1,-\mathcal{Z}_2,\mathcal{Z}_3\} , \quad
                        span\{-\mathcal{Z}_1,\mathcal{Z}_2,-\mathcal{Z}_3\}
                    \end{split}
                \end{align}
                is a valid measurement but does not have a non-trivial toy observable that is
                common to all the eight states.\footnote{Interestingly, the
                ontic states are partitioned into product states, but
                they cannot be distinguished using only local
                measurements. In quantum mechanics this property is
                known as \emph{non-locality without entanglement}~\cite{Bennett_1999};
                however, the toy theory is a
                local hidden variable theory, indicating that
                non-locality is not necessary~\cite{Pusey_2012}.}
                In cases like this, there is no easy way to find the post-measurement state.  We will see in the next section that this is because maybe not all partitions should be considered valid measurements.

\subsection{Generalizing to arbitrary dimensions}

 \paragraph{Valid measurements.}
        The set of valid measurements must respect the principle
        of classical complementarity. Therefore, we can only jointly
        measure variables that form an isotropic subspace or
        submodule $V_{\pi}$ with outcomes $\vec{v}_{\pi} \in \Omega$, in analogy to commuting operators in quantum theory. 
        The probability to get outcome $\vec{v}_{\pi}$ if the system
        is in an ontic state $\vec{m}$ is given by \cite{SpekkensFoundations2016}
        \begin{equation}
            \xi(\vec{v}_{\pi}|\vec{m}) = \delta_{V_{\pi}^{\perp} + \vec{v}_{\pi}}(\vec{m}).
        \end{equation}
        Intuitively, this means that we can only obtain measurement outcomes that are compatible with the ontic state of the system. 
         We can denote a measurement $V_{\pi}$ and its outcome
        $\vec{v}_{\pi}$ by the pair $(V_{\pi}, \vec{v}_{\pi})$. With
        this conditional probability distribution we can calculate the
        probability for a measurement outcome given the epistemic
        state $(V,\vec{v})$  \cite{SpekkensFoundations2016},
        \begin{equation}
            P(\vec{v}_{\pi}|(V,\vec{v})) = \sum_{\vec{m} \in \Omega} \xi(\vec{v}_{\pi}|\vec{m}) \  \mu_{(V,\vec{v})}(\vec{m}).
        \end{equation}
        \paragraph{Measurement update.}
        The discussion above allows to find the probability for measurement
        outcome, but does not tell us how to update a state $(V,v)$
        after the measurement $V_{\pi}$ with the outcome $\vec{v}_{\pi}$. In
        the past, there was an attempt at defining a measurement
        update which differentiated between the prime and non-prime
        case \cite{Catani_2017}. Here we unify these two cases and
        simplify the treatment of the measurement update. Starting from the principle that repeated measurements should result in the same outcome, we obtain the following result (proof in \cref{appendix:toyproofs:measurements}).
        
        \begin{restatable}[Measurement update rule (generalised)]{theorem}{toygenmeasurementupdate}\label{toy:gen:measurementupdate}
        
            When the epistemic state $(V,\vec{v})$ is subjected to a
            measurement $V_{\pi}$, and an outcome $\vec{v}_{\pi}$ is
            obtained, the epistemic state is updated to $(V', \vec v')$, where
         \begin{equation}
                V' = V_{\pi} \oplus V_{\text{commute}},
            \end{equation}
            \begin{equation} 
                \vec{v}' \in (V_{\pi}^{\perp} +  \vec{v}_{\pi}) \cap (V^{\perp}_{\text{commute}} +
                \vec{v}),
            \end{equation}
        and 
        \begin{align}
            V_{\text{commute}}  
            = \{\vec f \in V: \ [\vec f, \vec f_\pi] = 0, \ \forall \ \vec f_\pi \in V_\pi  \} 
            \subseteq V
        \end{align}
         is the subset of $V$ that commutes with $V_\pi$. 
        \end{restatable}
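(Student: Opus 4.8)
The plan is to pin down $(V',\vec v')$ from two operational requirements and then check that the resulting object is a legitimate epistemic state. The primary requirement, stated in the theorem, is \emph{repeatability}: re-measuring $V_\pi$ on $(V',\vec v')$ must return $\vec v_\pi$ with certainty. Using the outcome probability $\xi(\vec v_\pi\mid\vec m)=\delta_{V_\pi^{\perp}+\vec v_\pi}(\vec m)$ together with $\mu_{(V',\vec v')}$, determinism is equivalent to the inclusion of supports $(V')^{\perp}+\vec v'\subseteq V_\pi^{\perp}+\vec v_\pi$, which splits into $V_\pi\subseteq V'$ (the measured observables must become definite) and $\vec v'\in V_\pi^{\perp}+\vec v_\pi$ (their definite values must be the ones just obtained). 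The companion requirement, which selects a unique update among the repeatable ones, is \emph{minimal disturbance}: the measurement should randomise only those prior observables conjugate to $V_\pi$ and leave undisturbed exactly those that are jointly measurable with it, in analogy with the $1\leftrightarrow 2$ disturbance of the $N=1$ case and with the quantum stabilizer rule that keeps commuting generators. An observable $\vec f\in V$ commutes with the whole measurement iff $[\vec f,\vec f_\pi]=0$ for all $\vec f_\pi\in V_\pi$, i.e. iff $\vec f\in V_{\text{commute}}$; for such $\vec f$ the value $\vec f^{T}\vec v$ survives, giving $V_{\text{commute}}\subseteq V'$ and $\vec v'\in V_{\text{commute}}^{\perp}+\vec v$.

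Next I would argue that these inclusions are tight. Any observable whose value the agent can legitimately assert after the measurement is a combination of something learned ($V_\pi$) and something retained undisturbed ($V_{\text{commute}}$); classical complementarity forbids asserting knowledge of anything else, so $V'$ is exactly the span $V_\pi+V_{\text{commute}}$, written $V_\pi\oplus V_{\text{commute}}$ to emphasise its two roles (the newly measured observables and the undisturbed prior ones). Combining the two valuation conditions then yields $\vec v'\in(V_\pi^{\perp}+\vec v_\pi)\cap(V_{\text{commute}}^{\perp}+\vec v)$, as claimed. It remains to verify that $(V',\vec v')$ is a valid epistemic state, i.e. that $V'$ is isotropic. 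This is a short bilinearity check: for $\vec f_1+\vec g_1$ and $\vec f_2+\vec g_2$ with $\vec f_i\in V_\pi$ and $\vec g_i\in V_{\text{commute}}\subseteq V$, all four symplectic products $[\vec f_1,\vec f_2],\ [\vec f_1,\vec g_2],\ [\vec g_1,\vec f_2],\ [\vec g_1,\vec g_2]$ vanish because $V_\pi$ is isotropic, $V$ is isotropic, and every element of $V_{\text{commute}}$ commutes with every element of $V_\pi$ by definition.

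The remaining substantive point is that the valuation set is nonempty. The two affine sets $V_\pi^{\perp}+\vec v_\pi$ and $V_{\text{commute}}^{\perp}+\vec v$ meet iff $\vec v_\pi-\vec v\in V_\pi^{\perp}+V_{\text{commute}}^{\perp}$. Since $V_\pi\cap V\subseteq V_\pi$ is isotropic it lies in $V_{\text{commute}}$, so $V_\pi\cap V_{\text{commute}}=V_\pi\cap V$, and over a field the condition reduces to $\vec v_\pi-\vec v\in(V_\pi\cap V)^{\perp}$, i.e. $\vec f^{T}\vec v_\pi=\vec f^{T}\vec v$ for every observable $\vec f$ that is both measured and previously known. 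This is precisely the statement that $\vec v_\pi$ is an outcome of nonzero probability for the state $(V,\vec v)$, so I would restrict attention to outcomes that can actually occur and obtain existence from this compatibility.

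I expect the main obstacle to lie in the non-prime case ($d$ not prime), where $\Omega=\mathbb Z_d^{2n}$ is only a module: the steps that are routine over a field — passing between $V_\pi\subseteq V'$ and $(V')^{\perp}\subseteq V_\pi^{\perp}$ via $(V^{\perp})^{\perp}=V$, the identity $(V_\pi\cap V_{\text{commute}})^{\perp}=V_\pi^{\perp}+V_{\text{commute}}^{\perp}$ used for nonemptiness, and the realisation of a prescribed functional on $V'$ by an actual vector $\vec v'\in\Omega$ — can fail or require the nondegeneracy of the standard pairing on $\mathbb Z_d$ together with arguments about submodules. Handling these uniformly, rather than splitting into the prime and non-prime treatments of earlier work, is where the real work of the proof sits; the rest is the bookkeeping sketched above.
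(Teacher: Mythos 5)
Your proposal is correct and arrives at the paper's update rule, but by a genuinely different route. The paper's proof is computational: it conditions the epistemic distribution on the outcome, $\mu'(\vec m)\propto \delta_{V_\pi^{\perp}+\vec v_\pi}(\vec m)\,\mu_{(V,\vec v)}(\vec m)$, identifies the post-measurement support as $(V_\pi^{\perp}+\vec v_\pi)\cap(V^{\perp}+\vec v)$, and converts that intersection into the pair $(V',\vec v')$ via its appendix lemmas (affine intersection, $(V\oplus W)^{\perp}=V^{\perp}\cap W^{\perp}$, and $(V^{\perp})^{\perp}=V$), so in the commuting case the exact equality $V'=V_\pi\oplus V$ falls out of the calculation rather than being postulated; the noncommuting case is then handled by substituting $V_{\text{commute}}$ for $V$, justified by classical complementarity. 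You instead characterize the update axiomatically: repeatability forces $V_\pi\subseteq V'$ and $\vec v'\in V_\pi^{\perp}+\vec v_\pi$, an explicit minimal-disturbance principle forces $V_{\text{commute}}\subseteq V'$ with its valuation retained, complementarity caps $V'$ at the span, and you verify isotropy and nonemptiness by hand. Each approach buys something: your minimal-disturbance axiom makes explicit an assumption the paper leaves implicit (repeatability alone is also satisfied by the more destructive update $(V_\pi,\vec v_\pi)$, so some such principle is genuinely needed to single out the stated rule), and your compatibility criterion $\vec f^{T}\vec v_\pi=\vec f^{T}\vec v$ for all $\vec f\in V_\pi\cap V$ is more informative than the paper's one-line remark that the support is nonempty ``as otherwise the outcome would appear with zero probability''; conversely, the paper's shortcut for nonemptiness is more robust, needing only the monotonicity $V^{\perp}\subseteq V_{\text{commute}}^{\perp}$ and no duality identity $(A\cap B)^{\perp}=A^{\perp}+B^{\perp}$. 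One correction of emphasis: the module-theoretic obstacles you defer for non-prime $d$ (double orthocomplement, realisability of valuations) are not where the remaining work sits in the paper's treatment --- they are disposed of by citation, in particular Lemma~\ref{toy:gen:lin4} ($(V^{\perp})^{\perp}=V$ for submodules of $\mathbb{Z}_d^{2n}$, via \cite{Wilding_2013}), so your field-restricted argument extends uniformly once you invoke that same toolbox.
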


  \paragraph{Simple example.} 
  Let us consider the state 
  \begin{equation}
      \left(V =  \left\langle\begin{pmatrix} 1 \\[6pt]
        0 \end{pmatrix} \right\rangle, \vec{v} = \begin{pmatrix} 1 \\[6pt]
        0 \end{pmatrix}\right)  \iff \begin{tabular}{|{c}@{}|@{}{c}|{c}|{c}|}
                    \hline
                \colorcell{white}&\colorcell{cyan}&\colorcell{white}&\colorcell{cyan}\\ \hline
                \end{tabular}
  \end{equation}
  and the measurement
  
  \begin{equation}
  V_{\pi} = \left\langle \begin{pmatrix} 0 \\[6pt]
        1 \end{pmatrix} \right \rangle \iff
         \renewcommand{\arraystretch}{1}
      \begin{tabular}{|M| M |M| M|}

                            \cline{1-4}
                            a&a & b& b \\
                            
                            \cline{1-4}

                        \end{tabular}.
  \end{equation}
   \renewcommand{\arraystretch}{0}
    % \lidia{insert grid picture here explaining what they are equivalent to in the original formalism. } 
    Then, $V_{commute}$ is given by ${ \left\langle\begin{pmatrix} 0 \\[6pt]
        0 \end{pmatrix} \right\rangle}$, as no non-zero vector in $V$ commutes with a vector in $V_{\pi}$. Suppose we get the measurement outcome $\vec v_{\pi}= \begin{pmatrix} 1 \\[6pt]
        1 \end{pmatrix}$. Therefore, the updated state is
        \begin{equation}
      \left(V_{\pi}\oplus V_{commute} = V_{\pi}, \vec{v} = \begin{pmatrix} 1 \\[6pt]
        1 \end{pmatrix}\right) \iff \begin{tabular}{|{c}@{}|@{}{c}|{c}|{c}|}
                    \hline
                \colorcell{white}&\colorcell{white}&\colorcell{cyan}&\colorcell{cyan}\\ \hline
                \end{tabular}.
  \end{equation}

  \paragraph{Complex example.} 
  Even though the last measurement we considered in the stabilizer formalism cannot be represented by an isotropic subspace $V_{\pi}$ with outcomes $\vec v_{\pi} \in \Omega$, we can still describe the update of state after the measurement, as it is a partition of state space. Let the outcome of the measurement be the state corresponding to $span\{\mathcal{Z}_1,\mathcal{Z}_2,\mathcal{X}_3\}$:
  \begin{equation}
     \left(W = \left\langle \begin{pmatrix} 0 \\[6pt]
        1 \\[6pt] 0  \\[6pt] 0 \\[6pt] 0 \\[6pt] 0\\[6pt] \end{pmatrix},  
        \begin{pmatrix} 0 \\[6pt]
        0 \\[6pt] 0  \\[6pt] 1 \\[6pt] 0 \\[6pt] 0\\[6pt] \end{pmatrix},  
        \begin{pmatrix} 0 \\[6pt]
        0 \\[6pt] 0  \\[6pt] 0 \\[6pt] 1 \\[6pt] 0\\[6pt] \end{pmatrix} \right\rangle, \vec w =  \begin{pmatrix} 0 \\[6pt]
        0 \\[6pt] 0  \\[6pt] 0 \\[6pt] 0 \\[6pt] 0\\[6pt] \end{pmatrix} \right).
  \end{equation}
  Then, we can use the before-defined measurement update, as the measurement update is just dependent on the outcome state. 
  Let us assume that the state before the measurement is
  \begin{equation}
     \left(V = \left\langle \begin{pmatrix} 0 \\[6pt]
        1 \\[6pt] 0  \\[6pt] 0 \\[6pt] 0 \\[6pt] 0\\[6pt] \end{pmatrix},  
        \begin{pmatrix} 0 \\[6pt]
        0 \\[6pt] 0  \\[6pt] 1 \\[6pt] 0 \\[6pt] 0\\[6pt] \end{pmatrix},  
        \begin{pmatrix} 0 \\[6pt]
        0 \\[6pt] 0  \\[6pt] 0 \\[6pt] 0 \\[6pt] 1\\[6pt] \end{pmatrix} \right\rangle, \vec v =  \begin{pmatrix} 0 \\[6pt]
        0 \\[6pt] 0  \\[6pt] 0 \\[6pt] 0 \\[6pt] 0\\[6pt] \end{pmatrix} \right).
  \end{equation}
  The vector space $V_{commute}$ is given by 
  \begin{equation}
      V_{commute} = \left\langle \begin{pmatrix} 0 \\[6pt]
        1 \\[6pt] 0  \\[6pt] 0 \\[6pt] 0 \\[6pt] 0\\[6pt] \end{pmatrix},  
        \begin{pmatrix} 0 \\[6pt]
        0 \\[6pt] 0  \\[6pt] 1 \\[6pt] 0 \\[6pt] 0\\[6pt] \end{pmatrix} \right\rangle
  \end{equation}
  and the vector $\vec v'$ can be chosen to be $\vec v$.
  To sum up, the update state is 
  \begin{equation}
     \left(V = \left\langle \begin{pmatrix} 0 \\[6pt]
        1 \\[6pt] 0  \\[6pt] 0 \\[6pt] 0 \\[6pt] 0\\[6pt] \end{pmatrix},  
        \begin{pmatrix} 0 \\[6pt]
        0 \\[6pt] 0  \\[6pt] 1 \\[6pt] 0 \\[6pt] 0\\[6pt] \end{pmatrix},  
        \begin{pmatrix} 0 \\[6pt]
        0 \\[6pt] 0  \\[6pt] 0 \\[6pt] 1 \\[6pt] 0\\[6pt] \end{pmatrix} \right\rangle, \vec v =  \begin{pmatrix} 0 \\[6pt]
        0 \\[6pt] 0  \\[6pt] 0 \\[6pt] 0 \\[6pt] 0\\[6pt] \end{pmatrix} \right).
  \end{equation}
%   \lidia{put here the example of the previous section (the one that we couldn't solve in the stabilizer formalism) and solve it.}
%   \lidia{put non-trivial example here. ideally showing something that's hard in the stabilizer formalism?}

\paragraph{Relation to original formalism.}
The measurement in the general formalism suffers from the same problem as the measurement in the stabilizer formalism; that is, not all partitions of the ontic space can be formalized as a measurement.
For the partitions that can be seen as a measurement $V_{\pi}$, the update rule is the same as in the original formulation.  
The generalized approach of this section is to be understood as more fundamental, and the original epirestricted formalism as a special case for small dimensions \footnote{This view is also supported by Spekkens (private correspondence).}.   Therefore, not all partitions should be considered valid measurements. For example, it is not clear whether allowing for all partitions as measurements would result in a theory equivalent to a quantum sub-theory, like stabilizer quantum mechanics. At the very least, we would no longer be able to interpret it as measuring a set of commuting observables $\vec f_1, \dots, \vec f_n$.
  
\begin{restatable}[Generalised measurement update rule reduces to the original formalism ]{theorem}{toygenmeasurementupdateequivalence}\label{toy:gen:measurementupdateequivalence}
For the case $d=2$, the measurement update rule  of \ref{toy:gen:measurementupdate} reduces to the measurement update rule in the original formalism.
\end{restatable}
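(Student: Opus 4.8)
The plan is to route the proof through the stabilizer formalism, exploiting the correspondence established in \cref{toy:gen:stabcorrespondence}. Recall that in the stabilizer measurement subsection the stabilizer update rule was argued to implement exactly the original-formalism update rule, so it suffices to show that, for $d=2$, the generalized rule of \cref{toy:gen:measurementupdate} translates under the correspondence into the stabilizer update (remove the first anticommuting generator $h$, multiply the remaining anticommuting generators by $h$, and append $\pm g$). I would fix a state $(V,\vec v)$ with $V = \langle \vec f_1,\dots,\vec f_k\rangle$ and corresponding stabilizer $\mathcal S = \langle h_1,\dots,h_k\rangle$ (so $h_i \leftrightarrow \vec f_i$), a single-observable measurement $V_\pi = \langle \vec f_\pi\rangle$ corresponding to the toy observable $g$, and an outcome $\vec v_\pi$; multi-dimensional $V_\pi$ then follows by iterating over a generating set.

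The central step is a dictionary between the two notions of commutation. For $d=2$ a single-elementary-system toy Pauli $p \in \{\id,\mathcal Z,\mathcal X,\mathcal Y\}$ is encoded by the pair $(f_{2j-1},f_{2j}) \in \{0,1\}^2$, and two such qubit-Paulis $(a,b),(a',b')$ commute precisely when $ab' - a'b = 0 \bmod 2$. Summing over the $n$ elementary systems, the Paulis $m(g')$ and $m(g'')$ commute if and only if $\sum_j (f_{2j-1}g_{2j} - f_{2j}g_{2j-1}) = \vec f^T \vec J\, \vec g = \langle \vec f,\vec g\rangle = 0 \bmod 2$. Thus symplectic orthogonality in the generalized picture is exactly Pauli commutation in the stabilizer picture, so $V_{\text{commute}} = \{\vec f \in V : \langle \vec f,\vec f_\pi\rangle = 0\}$ corresponds under $m$ to the subgroup of $\mathcal S$ whose generators commute with $g$.

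Second, I would match the updated spaces. If every generator already commutes with $g$ then $V_{\text{commute}} = V$ and both rules leave $V$ untouched while appending $\pm g$, i.e.\ $V_\pi$. Otherwise, after relabelling, let $h_1,\dots,h_j$ anticommute with $g$ and $h_{j+1},\dots,h_k$ commute. The stabilizer rule replaces $h_2,\dots,h_j$ by $h_2 h_1,\dots,h_j h_1$ and drops $h_1$. By the multiplicativity clause of \cref{toy:gen:stabcorrespondence}, $h_i h_1 \leftrightarrow \vec f_i + \vec f_1$, and since $\langle \vec f_i + \vec f_1, \vec f_\pi\rangle = 0$, these vectors together with $\vec f_{j+1},\dots,\vec f_k$ span exactly $V_{\text{commute}}$, a codimension-one subspace of $V$ (because $\langle \cdot,\vec f_\pi\rangle$ is a nonzero linear functional on $V$). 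Appending $g$ corresponds to adjoining $V_\pi$, so the updated stabilizer group corresponds to $V_\pi \oplus V_{\text{commute}} = V'$; a generator count ($1 + (k-1) = k$) confirms the descriptions coincide, and since $V'$ is basis-independent the match is insensitive to the arbitrary choice of $h_1$.

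Finally I would match the valuations, using that a generator carries sign $\alpha = (-1)^{\vec f^T \vec v'}$. The condition $\vec v' \in V_{\text{commute}}^{\perp} + \vec v$ forces $\vec f^T \vec v' = \vec f^T \vec v$ for all $\vec f \in V_{\text{commute}}$, so the surviving commuting generators keep their signs and the fixed-up generators acquire sign $(-1)^{(\vec f_i + \vec f_1)^T \vec v} = \alpha_i\alpha_1$, exactly the product produced by the stabilizer rule. The condition $\vec v' \in V_\pi^{\perp} + \vec v_\pi$ forces $\vec f_\pi^T \vec v' = \vec f_\pi^T \vec v_\pi$, pinning the sign of $g$ to the measured outcome; existence of such a $\vec v'$ is guaranteed by \cref{toy:gen:measurementupdate} itself. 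The main obstacle I anticipate is the bookkeeping of the third step together with the degenerate cases in the second: showing the fixed-up generators span precisely $V_{\text{commute}}$ rather than a proper subspace, and cleanly handling the situations where $\pm g \in \mathcal S$ or where $g$ commutes with all of $\mathcal S$ without lying in it, so that the symbol $\oplus$ is interpreted consistently.
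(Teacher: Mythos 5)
Your proposal is correct, but it takes a genuinely different route from the paper's own proof. The paper argues directly against Spekkens' defining criterion for the original-formalism update: among all valid epistemic states whose ontic support lies inside the outcome cell (so that a repeated measurement reproduces the outcome), the post-measurement state is the one with maximal fidelity to the pre-measurement state. Concretely, setting $W = V \cap V'$ and writing $d_{V/W}$, $d_{V'/W}$ for the number of independent generators of $V$, $V'$ outside $W$, the paper computes
\begin{equation*}
F\left((V,\vec v),(V',\vec v')\right) = 2^{-\left(d_{V/W}+d_{V'/W}\right)/2}
\end{equation*}
whenever the supports intersect, so maximal fidelity means adding and removing as few observables as possible; since repeatability forces $V_\pi \subseteq V'$ and isotropy expels the non-commuting part of $V$, the maximizer is exactly $V_\pi \oplus V_{\text{commute}}$, i.e.\ the generalized rule. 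You instead build an explicit dictionary to the stabilizer update algorithm: symplectic orthogonality mod $2$ equals Pauli commutation, the recipe ``drop the first anticommuting generator $h_1$, multiply the other anticommuting generators by $h_1$, append $\pm g$'' lands on a basis of $V_{\text{commute}}$ (your codimension-one count is the right argument), and the two membership conditions on $\vec v'$ reproduce the sign bookkeeping. This is more constructive than the paper's proof and is considerably more explicit about the valuation/sign part, which the paper handles only implicitly via ``the support is in the right cell.'' What you give up is self-containedness: your last step rests on the stabilizer update being equivalent to the original-formalism update, which the paper only cites from Pusey rather than proves, and your reduction of a multi-dimensional $V_\pi$ to iterated single-observable measurements tacitly assumes that sequential binary updates implement the joint measurement on the original-formalism side as well --- consistent with how the paper treats multi-outcome measurements, but it deserves an explicit remark. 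The paper's fidelity computation needs neither input and treats arbitrary $V_\pi$ in one stroke; it also makes transparent the paper's observation that fidelity is not special here, since any overlap measure monotone in $d_{V/W}+d_{V'/W}$ selects the same post-measurement state.
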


\paragraph{Coarse-grained observables.} If $d$ is not prime and the greatest common divisor of the components $f_i$ of $\vec{f}$ is not $1$,
the variable cannot attain all values in $\mathbb{Z}_d$. We
call such variables \emph{coarse-grained}. For example, consider the case
$d = 4$ and $V = \langle\begin{pmatrix} 2 \\[6pt]
2 \end{pmatrix}\rangle$. Then there exist no two numbers $a,b
\in Z_a$ such that $\begin{pmatrix} 2 \\[6pt]
        2 \end{pmatrix}^T \begin{pmatrix} a \\[6pt]
        b \end{pmatrix} = 1$ \cite{Catani_2017}. Note that if the
        greatest common divisor of the components $f_i$ of an observable
        $\vec{f}$ is $1$ then such a variable can attain all values
        by  B\'ezout's lemma \cite{Bezout_1779,Catani_2017}. We call these variables \emph{fine-grained}.
        In the previous formulations the coarse-grained variables
        caused a problem when defining the measurement update rules
        \cite{Catani_2017}. However, in our more general formulation given above, coarse-grained variables are not problematic and can be observed.

\newpage
\section{Superpositions and mixtures}
\label{sec:mixtures}
A key feature of quantum theory is the existence of coherent superpositions and incoherent mixtures of pure states. These can also be found in Spekkens' toy theory, but one needs to be careful: if we mix or superpose arbitrary states, we risk violating the knowledge balance principle. 
%mixed or superposed states avoid violating the knowledge balance principle. 
Examples and intuitions for these restrictions are given in section~\ref{toy:gen:sup_mix} (generalized theory).  

Before we dig in, we would like to remark on two aspects of superpositions in quantum theory. The first one is that any state can be seen as either a diagonal state or a superposition, depending on the choice of basis; this aspect carries over to the toy theory, where every state can be seen as a superposition of elements of a different basis. The second aspect is that any number of quantum states can be superposed with any coefficients (provided we normalize the resulting state); for example $3 \ket0 + 2i \ket+$ is a valid, if unnormalized, quantum state. This is the aspect of superposition that is not always true in the toy theory, precisely because arbitrary superpositions could violate the knowledge-balance principle. The same issue emerges for probabilistic mixtures of states. 

\subsection{Original toy theory}
\label{toy:org:supmix}

\paragraph{Superpositions.}
In the original formulation, there is no definition for
                superposition of epistemic states on more than one
                system.\footnote{As we will see later, in the stabilizer formalism coherent
                superpositions can be defined for certain pairs of
                states composed of an arbitrary number of elementary
                systems \cite{Pusey_2012}.} For a single system, four
                different coherent superpositions can be defined, under some constraints on the pure states involved
                \cite{Spekkens_2007}.
                \begin{definition}[Superpositions \cite{Spekkens_2007}]
                Let $(a \lor b)$ and $(c \lor d)$ be two epistemic
                states with disjoint ontic bases and written such that
                $a < b$ and $c <d$. Then, we can define four coherent
                superpositions
                \begin{align}
                    (a \lor b) +_1 (c \lor d) &:= (a \lor c) \\[5pt]
                    (a \lor b) +_2 (c \lor d) &:= (b \lor d) \\[5pt]
                    (a \lor b) +_3 (c \lor d) &:= (b \lor c) \\[5pt]
                    (a \lor b) +_4 (c \lor d) &:= (a \lor d).
                \end{align} 
                \end{definition}
                
                \paragraph{Quantum analogy.}
                Comparing the resulting state of superposition with
                the analogous qubit state it becomes clear that these
                four different superpositions correspond to qubit
                superpositions $|a \rangle + e^{i \phi} |b\rangle$
                with different relative phases $\phi$
                \cite{Spekkens_2007}
                \begin{align}
                    \begin{split}
                        +_1 &\iff \phi = 0 \\
                        +_2 &\iff \phi = \pi \\
                        +_3 &\iff \phi = \pi/2 \\
                        +_4 &\iff \phi = 3\pi/2 
                    \end{split} 
                \end{align}
                Curiously, the analogy between the qubit and the
                single system epistemic state breaks down under
                superpositions \cite{Spekkens_2007}. Consider the
                superposition 
                
                \begin{equation}
                    (1 \lor 3) +_3 (2 \lor 4) = (2 \lor 3) \iff |+i \rangle.
                \end{equation}
                 However, the analogous qubit superposition gives a
                 different result\cite{Spekkens_2007}:
                 \begin{equation}
                    |+\rangle + e^{i \pi/2} |-\rangle = e^{i\pi/4}|-i\rangle.
                \end{equation}

            \paragraph{Mixtures.}
               Incoherent mixtures can be 
                defined for any number of subsystems, but there is
                still a requirement on which states can be mixed
                \cite{Spekkens_2007}: 
                \begin{definition}[Mixtures \cite{Spekkens_2007}]\label{def:org:mix}
                    Let $E_a$ and $E_b$ be two epistemic states with
                    disjoint ontic bases $O_a$ and $O_b$, such that
                    the union of the ontic basis is a valid epistemic
                    state, then the mixture $E_a + E_b$ between these two states
                    is the state with the ontic basis $O_a \cup O_b$.
                \end{definition}
                From this definition two questions arise. Firstly, it
                is clear from this definition that any state that
                can be written as the mixture of pure states is a
                mixed state. However, the converse is not obviously
                true: are all toy mixed states mixtures of pure states?  We will show in the stabilizer formalism that this is indeed true. Similarly to quantum
                mechanics, this decomposition is not unique as $(1
                \lor 2 \lor 3 \lor 4)$ has multiple decompositions:
                \begin{align}
                    \begin{split}
                        (1 \lor 2 \lor 3 \lor 4) &= (1 \lor 2) + (3 \lor 4) \\
                        (1 \lor 2 \lor 3 \lor 4) &= (1 \lor 3) + (2 \lor 4) \\
                        (1 \lor 2 \lor 3 \lor 4) &= (1 \lor 4) + (2 \lor 3).
                    \end{split}
                \end{align}
                Secondly, the
                definition requires that the union of two ontic bases
                forms again an ontic basis, but it is in general hard
                to check if a state is valid due to the inductive
                nature of the epistemic states. The stabilizer
                formalism can give us a simpler condition \cite{Pusey_2012,Spekkens_2007}.

\subsection{Stabilizer formalism}

            \paragraph{Mixtures.}  Mixtures can only be defined for states where the
            corresponding stabilizer groups are \emph{rephasings} of each
            other. The stabilizer group $S'$ is a rephasing of $S$ if
            and only if for any $g \in S$, $\pm g \in S'$. For two
            stabilizer groups $S$, $S'$ which are rephasings of each
            other the mixture $\frac{1}{2}(\rho_S + \rho_{S'})$ is
            given by the state corresponding to the stabilizer group
            $T = S \cap S'$. 
             This definition allows us to state the following theorem (proof in the appendix).
            \begin{restatable}[Mixed states as convex combinations]{theorem}{toystabmix}\label{toy:stab:mix}
                Any mixed state with stabilizer
                group $\mathcal{S} = span\{g_1, ...,g_{N-k}\}$, $k
                \leq N$ can be written as the mixture of $2^k$
                pure states.
            \end{restatable}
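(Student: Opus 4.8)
The plan is to induct on $k$, peeling off one pure-state degree of freedom at a time and invoking the rephasing mixture rule stated just above the theorem. Throughout we are in the qubit-analog ($d=2$) setting, so the relevant linear algebra lives over $\mathbb{F}_2$. For the base case $k=0$ the group $\mathcal S$ already has $N$ independent generators, hence is pure and is trivially the ``mixture'' of $2^0=1$ pure state. For the inductive step I assume the claim for $k-1$ and take $\mathcal S = \operatorname{span}\{g_1,\dots,g_{N-k}\}$.

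The first thing I would do is produce a toy Pauli $h \in G_N$ that commutes (under $\sim$) with every $g_i$ yet is independent of them, i.e.\ $\pm h \notin \mathcal S$. Given such an $h$, set
\begin{equation}
    S_{\pm} = \operatorname{span}\{g_1,\dots,g_{N-k},\pm h\}.
\end{equation}
Each $S_\pm$ is a commuting subgroup with $N-(k-1)$ independent generators, and neither contains $-\mathbb{1}_{2N}$: since every element of $G_N$ is an involution, a general word is $g$ or $gh$ with $g\in\mathcal S$; if $gh=-\mathbb{1}_{2N}$ then $h=-g\in-\mathcal S$, contradicting $\pm h\notin\mathcal S$, while the words $g\in\mathcal S$ avoid $-\mathbb{1}_{2N}$ by the validity of $\mathcal S$. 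Thus $S_+$ and $S_-$ are genuine stabilizer groups, each with $N-(k-1)$ generators, so by the inductive hypothesis each defines a state that is a mixture of $2^{k-1}$ pure states.

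Next I would verify that $S_+$ and $S_-$ are rephasings of one another and that $S_+\cap S_-=\mathcal S$. Writing $S_\pm=\mathcal S\sqcup(\pm h)\mathcal S$, every element of $S_+$ is either in $\mathcal S\subseteq S_-$ or of the form $hg$ with $-(hg)=(-h)g\in S_-$, which is exactly the rephasing condition (and symmetrically for $S_-$). The three cosets $\mathcal S$, $h\mathcal S$, $(-h)\mathcal S$ are pairwise disjoint because $\pm h\notin\mathcal S$ and $-\mathbb{1}_{2N}\notin\mathcal S$, so the intersection is precisely $\mathcal S$. The rephasing mixture rule then yields
\begin{equation}
    \tfrac12\big(\rho_{S_+}+\rho_{S_-}\big)=\rho_{S_+\cap S_-}=\rho_{\mathcal S},
\end{equation}
and combining this with the inductive hypothesis exhibits $\rho_{\mathcal S}$ as a mixture of $2^{k-1}+2^{k-1}=2^k$ pure states, closing the induction.

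The main obstacle is the very first step, the existence of the independent commuting $h$; this is a symplectic-extension statement. Passing through the map $m$ to the quantum Pauli group (equivalently, using the symplectic inner product $\langle\cdot,\cdot\rangle$ on $\mathbb{F}_2^{2N}$), the generators $g_1,\dots,g_{N-k}$ span an isotropic subspace $W$ with $\dim W=N-k<N$. Isotropy gives $W\subseteq W^{\perp}$, and $\dim W^{\perp}=2N-(N-k)=N+k>\dim W$, so $W^{\perp}\setminus W$ is nonempty; any vector there lifts to a toy Pauli $h$ commuting with all $g_i$ but independent of them. An equivalent, non-inductive route computes the uniform average $2^{-k}\sum_{\vec s\in\{\pm1\}^k}\rho_{S_{\vec s}}$ directly from $\rho_S=2^{-N}\sum_{g\in S}g$ after extending all the way to a full generating set $\{g_1,\dots,g_{N-k},h_1,\dots,h_k\}$: the sign characters $\prod_i s_i^{b_i}$ average to zero unless every $b_i=0$, killing every group element outside $\mathcal S$ and leaving exactly $\rho_{\mathcal S}$.
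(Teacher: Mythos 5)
Your proof is correct and takes essentially the same route as the paper's: both arguments hinge on extending the generators $g_1,\dots,g_{N-k}$ by independent commuting elements and on the rephasing mixture rule $\tfrac{1}{2}(\rho_{S}+\rho_{S'})=\rho_{S\cap S'}$, and your induction on $k$ is precisely the paper's iterative pairwise merging of the $2^{k}$ rephasings of a full extension $\operatorname{span}\{g_{1},\dots,g_{N-k},h_{1},\dots,h_{k}\}$, just traversed from the root of the binary tree downward instead of from the leaves upward. The one genuine improvement on your side is that you actually prove the existence of the commuting independent $h$ (via the isotropy count $\dim W^{\perp}=N+k>\dim W=N-k$ over $\mathbb{F}_2^{2N}$) and verify explicitly that $S_{\pm}$ are valid stabilizer groups, rephasings of one another, with $S_{+}\cap S_{-}=\mathcal{S}$, whereas the paper asserts these steps ``with the same argument as in the stabilizer formalism.''
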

            
            \subparagraph{Example.} Consider the stabilizer
            groups $S = span\{-Z_1\}$ and $S' = span\{Z_1\}$,
            corresponding to the states $|1 \rangle$ and $|0\rangle$
            respectively. The state generated by the rule above has
            the stabilizer group $T = \{\mathbb{1}_2 \}$, which
            corresponds to the state $\frac{1}{2}(|0\rangle \langle
            0|+|1\rangle \langle 1|)$ \cite{Pusey_2012}.

\paragraph{Superpositions.} 
            For superpositions, we allow orthogonal pure states
            which are different and a rephasing of each other. 
            Let $\mathcal{S}$ and $\mathcal{S}'$ be two such pure states and let $\mathcal{S}$ be spanned by $g_1, ..., g_N$ which are independent generators. 
            Then in general $\mathcal{S}'$ is spanned by the same generators, but with potentially different signs.  These two states are either identical or there is at least one generator $g_i \in \mathcal{S}$ where $-g_i \in \mathcal{S}'$. We can multiply all generators with $g_j \neq g_i$ of $\mathcal{S}$ such that $- g_j \in \mathcal{S}'$ with $g_i$ for $\mathcal{S}$ and $-g_i$ for $\mathcal{S}$. 
            The resulting generators are still independent and span the same state. 
            Therefore, in general we can consider two states which are a rephasing of each other to be of the following form
            $\mathcal{S} = span\{g_1, ..., g_N\}$ and
            $\mathcal{S}' = span\{g_1, ..., -g_N\}$ where only the sign of the last generator changes and
            $g_1, ..., g_N$ are independent generators. Then the
            coherent superposition of the two states represented by these groups is given by the stabilizer group $T
            = span\{g_1, ...,g_{N-1}, h\}$ where $h$ is such that $h \notin S
            \cup S'$. Such an $h$ exists by a similar argument as in
            the stabilizer formalism. In every case there are four
            different $h$ that can be chosen. In the case of a single
            system, they correspond directly to the four different
            superpositions defined in \cref{toy:org:supmix} \cite{Pusey_2012}. 
            
            \subparagraph{Example.} Consider $S =
            span\{Z_1Z_2,Z_2\}$ and $S'= span\{Z_1Z_2,-Z_2\}$; these
            correspond to the states $|00\rangle$ and $|11\rangle$
            respectively. By our definition the superposition would be
            $ span\{Z_1Z_2,h\}$ with $h = \pm X_1X_2$ or $h = \pm
            Y_1Y_2$. These superpositions  correspond to
            the states $\frac{1}{\sqrt{2}}(|00\rangle \pm |11\rangle)$
            and $\frac{1}{\sqrt{2}}(|00\rangle \pm i|11\rangle)$
            respectively \cite{Pusey_2012}.

\subsection{Generalizing to arbitrary dimensions}
\label{toy:gen:sup_mix}

\paragraph{Mixtures.}
In the generalized version of the toy theory, it is easy to see why we cannot simply mix any two states. 
Recall that the ontic basis of a valid epistemic state  $(V,\vec{v})$ is the affine subvector space (or submodule) $ V^{\perp} + \vec{v}$.
If we were to mix two arbitrary states $(V_1,\vec{v}_1)$ and $(V_2,\vec{v}_2)$ by taking the union of their ontic basis, the resulting ontic basis of the mixture would be $(V_1^{\perp} + \vec{v}_1) \cup (V_2^{\perp} + \vec{v}_2)$. However, this is not necessarily a subvector space or submodule, and therefore may not represent a valid epistemic state. 

We will see that the conditions for valid mixtures result in the following restriction: we can only mix 
epistemic states that correspond to making the same observations but obtaining different results; we also have to impose that each  of these observables be either \emph{constant} across the the family of states to be mixed (that is, it has the same valuation for all the states) or \emph{totally unknown} (\emph{all possible outcomes} are present in the family of states). 

When we mix these states we will obtain a new state $(V, \vec w)$, for which we should still know the valuation of the constant variables (as they have the same outcome for all states in the family). As for the other variables, all we know is that all outcomes are possible; but this is the same as not knowing their valuation at all (hence the name ``totally unknown''). 
%
%Formally, this means that we will only consider mixtures of families of states of the form $\{(V, \vec{v}_j)\}_j$, that is, states  that have the same isotropic subspace or submodule $V = \langle \vec{f}_1, ..., \vec{f}_k \rangle$, but different valuations. We further require that either the valuation $\vec{f}_i^T \vec{v}_j$ for each of the generating vectors $\vec{f}_i$ be the same for all $\vec{v}_j$, or that there exist all possible combinations of valuations.
%For example, if  neither the valuations for $\vec{f}_1$ nor $\vec{f}_2$ are constant for all vectors $v_j$, then  both must be \emph{totally unknown}: we require that for all  possible valuations $d_1,d_2$ of the two observables, there exists a member of the family, $\vec{v}_j$, such that $\vec{f}_1^T \vec{v}_j = d_1$ and $\vec{f}_2^T \vec{v}_j = d_2$. 
%
%The mixture of all the elements of this family, can  be seen as getting one of the original states with uniform probability.  In this case, we still know the valuation of all the constant variables (as they have the same valuation for all original states). However, we only know that there is a uniform probability over the valuation of the \emph{totally unknown} variables $\vec{f}_i$; but this is the same as not knowing the valuation of these $\vec{f}_i$ at all.
 Therefore, the mixed epistemic state $(V, \vec w)$ would not change if we removed all the \emph{totally unknown} observables from $V$. This implies that an alternative way to obtain the mixture in the first place is by taking any one of the original states $(V, \vec{v}_j)$ and simply remove the totally unknown variables from $V$, leaving only the valuation of the variables that are constant across the family. This procedure is akin to taking a partial trace over the non-constant observables. 
 We can formalize this discussion in the following definitions. 
 \begin{definition}[Constant and totally unknown observables] \label{def:toy:gen:constantunknown}
    Let  $\{(V, \vec{v}_j)\}_j$ be a family of epistemic states. We say that an observable $\vec f_i \in V$ is 
    \begin{itemize}
        \item \emph{constant} or \emph{known} if it has the same valuation for all states in the family, $\vec f_i^T \vec v_j = k_i, \quad \forall \ \vec v_j$;
        \item \emph{totally unknown} if for all possible outcomes $ \ k_{i, \ell}$ of $\vec f_i$, there exists a member of the family satisfying $ \vec f_i^T \vec v_j = k_{i, \ell}$;
        \item \emph{partially known} otherwise.
    \end{itemize}

 \end{definition}
 
 \begin{definition}[Mixture of states (generalized)]
    \label{def:toy:gen:mixtures}
    Let  $\{(V, \vec{v}_j)\}_j$ be a family of epistemic states such that each observable $\vec f_i \in V$ is either constant or totally unknown. 
    Then the \emph{epistemic mixture} of the states in the family is the state $(W,\vec{v}_1)$, where $$W = \left\langle \{ \vec f_i \in V: \vec f_i \text{ constant}\} \right\rangle.$$ 
 \end{definition}
    
The following lemma ensures that this definition results in the intuitive mixture of states.
    
\begin{restatable}[Soundness of generalized mixtures]{lemma}{toygenmixtures}\label{toy:gen:mixtures}
   Definition \ref{def:toy:gen:mixtures} is sound: it results in a valid epistemic state $(W, \vec w)$. The ontic basis of the mixture is the union of the ontic basis of each element of the family, 
   $$W^{\perp} + \vec{w} = \bigcup_{j} (V^{\perp} + \vec{v}_j).$$
   The choice of $\vec w = \vec v_1$ is arbitrary, as by definition  all  the $\vec{v}_j$ are all equivalent valuation vectors for the constant variables.
\end{restatable}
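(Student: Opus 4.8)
The plan is to verify the two assertions separately: first that $(W,\vec w)$ is a genuine epistemic state, and then that its ontic basis is the advertised union; the independence of the representative $\vec w=\vec v_1$ drops out along the way. For validity, note that $W=\langle\{\vec f_i\in V:\vec f_i\text{ constant}\}\rangle$ is a subspace (submodule) of $V$, so $W\subseteq V$ inherits isotropy directly: for any $\vec f,\vec g\in W$ we have $\langle\vec f,\vec g\rangle=0$ because the same holds throughout $V$. It remains to check that $\vec w=\vec v_1$ assigns a well-defined value to every element of $W$. Writing a general $\vec f\in W$ as a combination of constant generators, $\vec f=\sum_i c_i\vec f_i$, linearity gives $\vec f^T\vec v_j=\sum_i c_i\,(\vec f_i^T\vec v_j)=\sum_i c_i k_i$, which by Definition~\ref{def:toy:gen:constantunknown} is independent of $j$. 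The same computation shows $\vec f^T(\vec v_j-\vec v_1)=0$ for all $\vec f\in W$, i.e.\ $\vec v_j-\vec v_1\in W^{\perp}$ for every $j$; this fact is reused below and already proves the final sentence of the lemma, since any $\vec v_j$ defines the same valuation on $W$.

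Next I would establish the ontic-basis identity $W^{\perp}+\vec w=\bigcup_j(V^{\perp}+\vec v_j)$ by two inclusions. The inclusion $\supseteq$ is immediate: $W\subseteq V$ gives $V^{\perp}\subseteq W^{\perp}$, and combined with $\vec v_j-\vec v_1\in W^{\perp}$ this yields $V^{\perp}+\vec v_j\subseteq W^{\perp}+\vec v_j=W^{\perp}+\vec w$ for each $j$, so the whole union sits inside $W^{\perp}+\vec w$. The substantive direction is $\subseteq$: given $\vec m$ with $\vec f^T\vec m=\vec f^T\vec w$ for all $\vec f\in W$, I must produce a family member $\vec v_j$ agreeing with $\vec m$ on \emph{all} of $V$, not merely on $W$. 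Equivalently, $\vec m$ induces a valuation on the totally unknown directions $V/W$, and I need the family to realise it. In the discrete case the clean route is a counting argument: the cosets $V^{\perp}+\vec v_j$ are pairwise disjoint (they correspond to distinct valuations of $V$) and each has cardinality $|V^{\perp}|$, while $W^{\perp}+\vec w$ has cardinality $|W^{\perp}|$; since $\supseteq$ already packs them disjointly inside $W^{\perp}+\vec w$, equality follows once one knows there are exactly $[W^{\perp}:V^{\perp}]$ of them, i.e.\ that the unknown directions attain all their joint values across the family. The continuous case is the measure-theoretic analogue, integrating the delta distributions of \eqref{eq:gen:prob_state} over the mixed-out directions.

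The main obstacle is precisely this last point, and it is more delicate than it looks: the per-observable statement ``each totally unknown $\vec f_i$ attains all its outcomes'' does \emph{not} by itself force the totally unknown observables to attain all \emph{combinations} of outcomes (already for $n=2$, $d=2$ one can exhibit an isotropic $V=\langle q_1,q_2\rangle$ and a family in which $q_1$, $q_2$ and $q_1+q_2$ are each individually totally unknown yet the joint value $(q_1,q_2)=(1,1)$ never occurs, so the union misses a coset). The correct reading, which I would make explicit, is that ``maximal ignorance'' on the mixed-out directions means the family is the full uniform collection realising every joint valuation of $V/W$ consistent with the fixed constant values; under this joint completeness the counting closes and $\subseteq$ holds. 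Care is also needed when $d$ is not prime, where $V$ is only a module and coarse-grained observables do not range over all of $\mathbb{Z}_d$: there the indices $[W^{\perp}:V^{\perp}]$ and the attainable-outcome sets of each $\vec f_i$ replace naive dimension counts, but the disjoint-packing argument is unchanged. Once $\subseteq$ is secured the two inclusions give the ontic-basis identity, validity has already been shown, and $\vec v_j-\vec v_1\in W^{\perp}$ delivers the independence of the choice $\vec w=\vec v_1$, completing the proof.
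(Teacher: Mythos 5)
There is nothing in the paper to compare your proposal against: despite the introduction's promise that all proofs appear in the appendix, the appendix's subsection on mixtures and superpositions proves only Theorem~\ref{toy:stab:mix} and Lemmas~\ref{toy:gen:superpose} and~\ref{toy:gen:superposestabilizer}; no proof of Lemma~\ref{toy:gen:mixtures} is ever given. Your write-up is therefore the only argument on the table, and its positive parts are correct and complete: $W\subseteq V$ inherits isotropy, every element of $W$ is itself constant by linearity, $\vec v_j-\vec v_1\in W^{\perp}$ for all $j$ (which settles both the well-definedness of $\vec w=\vec v_1$ and the final sentence of the lemma), and the inclusion $\supseteq$ follows from $V^{\perp}\subseteq W^{\perp}$ exactly as you say.

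More importantly, the obstruction you isolate in the inclusion $\subseteq$ is genuine, and the paper itself unwittingly confirms it. Definition~\ref{def:toy:gen:constantunknown} makes ``totally unknown'' a per-observable condition, which does not force the family to realize all \emph{joint} valuations of the non-constant directions. Your $d=2$, $n=2$ counterexample is structurally identical to the paper's own ``example of an illegal mixture'': the three states there have $(p_1,p_2)$ valuations $(0,0)$, $(1,0)$, $(0,1)$, so $p_1$, $p_2$ and $p_1+p_2$ are each individually totally unknown, the hypotheses of Definition~\ref{def:toy:gen:mixtures} are satisfied with $W=\{0\}$, and yet the union of the three ontic bases has $12$ elements while $W^{\perp}+\vec w=\Omega$ has $16$. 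So the lemma is false as literally stated: the validity claim survives (the resulting $(W,\vec w)$ is the fully mixed state), but the union identity fails, and the ``mixture'' produced by the definition is not the union of the ontic bases. Your repair --- reading maximal ignorance as the requirement that every joint valuation of the mixed-out directions consistent with the constants be realized in the family --- is the right one; it matches the paper's own informal remark immediately after the lemma (``If we would not require that \emph{all} different possible valuations could be attained\dots''), and under it your direct argument (any $\vec m\in W^{\perp}+\vec w$ induces a valuation on $V$ agreeing with the constants on $W$, hence equals some $\vec v_j$ on all of $V$) closes both inclusions in one stroke, without needing the cardinality count in the discrete case or delta-function integration in the continuous case.
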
 
 
%Intuitively, the reason why we require that \emph{all} different possible valuations be attained for the non-constant variables is that, if that were not the case, then 

If we would not require that \emph{all} different possible valuations could be attained for the non-constant variables, then we would know that some valuations of these variables were impossible. 
This would be more knowledge than not knowing the variables, and less than knowing them completely --- and such intermediate, partial knowledge cannot be described in the toy theory. 
 
\subparagraph{Example of an illegal mixture.} Suppose that, for $d=2$, we wanted to mix the states
        \begin{equation}
            \left(\left\langle\begin{pmatrix} 0 \\[6pt]
                1 \\[6pt] 0 \\[6pt] 0 \\[6pt]\end{pmatrix},\begin{pmatrix} 0 \\[6pt]
                0 \\[6pt] 0 \\[6pt] 1 \\[6pt]\end{pmatrix}\right\rangle, \begin{pmatrix} 0 \\[6pt]
                0 \\[6pt] 0 \\[6pt] 0 \\[6pt]\end{pmatrix} \right), \
            \left(\left\langle\begin{pmatrix} 0 \\[6pt]
                1 \\[6pt] 0 \\[6pt] 0 \\[6pt]\end{pmatrix},\begin{pmatrix} 0 \\[6pt]
                0 \\[6pt] 0 \\[6pt] 1 \\[6pt]\end{pmatrix}\right\rangle, \begin{pmatrix} 0 \\[6pt]
                1 \\[6pt] 0 \\[6pt] 0 \\[6pt]\end{pmatrix} \right),
        \end{equation}
        and
        \begin{equation}
            \left(\left\langle\begin{pmatrix} 0 \\[6pt]
                1 \\[6pt] 0 \\[6pt] 0 \\[6pt]\end{pmatrix},\begin{pmatrix} 0 \\[6pt]
                0 \\[6pt] 0 \\[6pt] 1 \\[6pt]\end{pmatrix}\right\rangle, \begin{pmatrix} 0 \\[6pt]
                0 \\[6pt] 0 \\[6pt] 1 \\[6pt]\end{pmatrix} \right).
        \end{equation}
        We can represent these states in phase space in the following
        grid, the same way it's done in the original toy theory~\footnote{This representation is helpful to connect the original
        formulation and the generalization for $d = 2$ \cite{SpekkensFoundations2016}.}:
        \renewcommand{\arraystretch}{1.1}
        \begin{equation}
            \begin{tabular}{c C C C C C}

                \cline{3-6}
                \multirow{4}{*}{$(p_1,q_1)$ \hspace{0.05cm} }& \multicolumn{1}{c|}{11 \hspace{0.1cm}}&
                \multicolumn{1}{c|}{ }&
                \multicolumn{1}{c|}{ }&\multicolumn{1}{c|}{ }&
                \multicolumn{1}{c|}{ }\\
                \cline{3-6}
                &\multicolumn{1}{c|}{10 \hspace{0.1cm}}&
                \multicolumn{1}{c|}{ }&
                \multicolumn{1}{c|}{ }&\multicolumn{1}{c|}{ }&
                \multicolumn{1}{c|}{ }\\
                \cline{3-6}
                &\multicolumn{1}{c|}{ 01 \hspace{0.1cm}}&
                \multicolumn{1}{c|}{ }&
                \multicolumn{1}{c|}{ }&\multicolumn{1}{c|}{ }&
                \multicolumn{1}{c|}{ }\\
                \cline{3-6}
                &\multicolumn{1}{c|}{ 00 \hspace{0.1cm}}&
                \multicolumn{1}{c|}{ }&
                \multicolumn{1}{c|}{ }&\multicolumn{1}{c|}{ }&
                \multicolumn{1}{c|}{ }\\
                \cline{3-6}
                &\multicolumn{1}{c}{ }&\multicolumn{1}{c}{ \hspace{0.03cm} 00 \hspace{0.03cm}} & \multicolumn{1}{c}{ \hspace{0.03cm} 01 \hspace{0.03cm}} & \multicolumn{1}{c}{ \hspace{0.03cm} 10 \hspace{0.03cm}}& \multicolumn{1}{c}{ \hspace{0.03cm} 11 \hspace{0.03cm}}\\ 
                \multicolumn{1}{c}{}&\multicolumn{1}{c}{}&\multicolumn{4}{c}{$(p_2,q_2)$}

            \end{tabular}.
        \end{equation}
        \renewcommand{\arraystretch}{0}
        In this representation, the above mixing would correspond to mixing the
        states 
        \begin{equation}
           \begin{tabular}{|{c}@{}|@{}{c}|{c}|{c}|}
                    \hline
                \colorcell{white}&\colorcell{white}&\colorcell{white}&\colorcell{white}\\ \hline
                \colorcell{white}&\colorcell{white}&\colorcell{white}&\colorcell{white}\\ \hline
                \colorcell{cyan}&\colorcell{cyan}&\colorcell{white}&\colorcell{white}\\ \hline
                \colorcell{cyan}&\colorcell{cyan}&\colorcell{white}&\colorcell{white}\\ \hline
                \end{tabular}\ , \quad
            \begin{tabular}{|{c}@{}|@{}{c}|{c}|{c}|}
                    \hline
                \colorcell{cyan}&\colorcell{cyan}&\colorcell{white}&\colorcell{white}\\ \hline
                \colorcell{cyan}&\colorcell{cyan}&\colorcell{white}&\colorcell{white}\\ \hline
                \colorcell{white}&\colorcell{white}&\colorcell{white}&\colorcell{white}\\ \hline
                \colorcell{white}&\colorcell{white}&\colorcell{white}&\colorcell{white}\\ \hline
                \end{tabular} \quad \text{and} \quad
            \begin{tabular}{|{c}@{}|@{}{c}|{c}|{c}|}
                    \hline
                \colorcell{white}&\colorcell{white}&\colorcell{white}&\colorcell{white}\\ \hline
                \colorcell{white}&\colorcell{white}&\colorcell{white}&\colorcell{white}\\ \hline
                \colorcell{white}&\colorcell{white}&\colorcell{cyan}&\colorcell{cyan}\\ \hline
                \colorcell{white}&\colorcell{white}&\colorcell{cyan}&\colorcell{cyan}\\ \hline 
                \end{tabular}\ .
        \end{equation}
        However, the union of these states is 
        \begin{equation}
        \begin{tabular}{|{c}@{}|@{}{c}|{c}|{c}|}
                    \hline
                \colorcell{cyan}&\colorcell{cyan}&\colorcell{white}&\colorcell{white}\\ \hline
                \colorcell{cyan}&\colorcell{cyan}&\colorcell{white}&\colorcell{white}\\ \hline
                \colorcell{cyan}&\colorcell{cyan}&\colorcell{cyan}&\colorcell{cyan}\\ \hline
                \colorcell{cyan}&\colorcell{cyan}&\colorcell{cyan}&\colorcell{cyan}\\ \hline
                \end{tabular}\ ,
    \end{equation}
    which is not a valid epistemic state. Therefore, their mixture cannot
    be valid.

        \paragraph{Superpositions.}
      
        In quantum mechanics, position and momentum eigenstates are
        connected via a Fourier transform, i.e 
        \begin{equation}
           \ket{q} = \frac{1}{2 \pi} \ \int_{\mathbb{R}} dp \ e^{ipq} \ \ket{p}.
        \end{equation}
        Thus, certain superpositions of momentum states are  position
        eigenstates. We can define superposition in the toy model in
        an analogous way, starting from the toy analogs for
        $|p\rangle$ and $|q\rangle$.
        
        In the continuous case we could have a vector space $V_p = \left\langle \begin{pmatrix} 0 \\[6pt]
        1 \\[6pt] \end{pmatrix}  \right\rangle$, which is the toy analog of $\ket{p}$. 
        Then we can superpose the states ${(V_p,\vec{v}_p)}_{p \in \mathbb{R}}$, where $\vec v_p =  \begin{pmatrix} p \\[6pt]
        p \\[6pt] \end{pmatrix} $.
        The resulting state of the superposition is the state $(V_q = \left\langle \begin{pmatrix} 1 \\[6pt]
        0 \\[6pt] \end{pmatrix}  \right\rangle, \vec v_q =  \begin{pmatrix} q \\[6pt]
        q \\[6pt] \end{pmatrix} )$. The valuation vector of the superimposed state can be chosen arbitrarily and corresponds to the different choices of $q$ in $e^{ipq}$ in the quantum superposition.

\begin{definition}[Superpositions over one observable]
\label{def:eqsuperpose}
        Let  $\{(V, \vec{v}_j)\}_j$ be a family of epistemic states with $V= \langle \vec f_1, \dots, \vec f_k \rangle  $ such that one of the  observables $\vec f_k$ is totally unknown, and all others are constant. 
        Let $\vec f$ be another observable such that $V':=\langle \vec f_1, \dots, \vec f_{k-1} \rangle \oplus \vec f $ is another isotropic subspace (not equal to $V$). 
        
        We call the new state $(V', \vec v_j)$, where $\vec v_j$ is one of the valuations in the original family, the \emph{superposition over $\vec f_k$} with a choice of valuation $\vec v_j$ and an additional observable $\vec f$.  These different choices correspond to different \emph{phases} in the superposition.
\end{definition}

        \begin{restatable}[{Soundness of generalized superpositions}]{lemma}{toygensuperpose}\label{toy:gen:superpose}
            For the continuous and discrete prime cases, definition~\ref{def:eqsuperpose} is sound.  That is, there is always an observable $\vec f$ such that $V'=\langle \vec f_1, \dots, \vec f_{k-1} \rangle \oplus \vec f $ is an isotropic subspace not equal to $V$. Furthermore, $(V', \vec v_j)$ is a valid epistemic state for every  valuation $\vec v_j$ in the original family. 
        \end{restatable}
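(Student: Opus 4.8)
The plan is to reduce both assertions to standard symplectic linear algebra over a field, which is exactly what the restriction to the continuous and discrete prime cases buys us. First I would write $W_0 = \langle \vec f_1, \dots, \vec f_{k-1}\rangle$ for the span of the constant observables; assuming the generators are independent (a minimal generating set), $\dim W_0 = k-1$ and $\dim V = k$. Since $W_0 \subseteq V$ and $V$ is isotropic, $W_0$ is isotropic as well. I then consider the symplectic complement $C = \{\vec g \in \Omega : \langle \vec g, \vec f_i\rangle = 0, \ i = 1, \dots, k-1\}$, i.e.\ the set of observables commuting with every generator of $W_0$. Because the symplectic form $\langle \vec f, \vec g\rangle = \vec f^T \vec J \vec g$ is non-degenerate (as $\vec J$ is invertible), over a field the rank--nullity argument gives $\dim C = 2n - (k-1)$.

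The decisive step is then a dimension count. Because $V$ is isotropic, $k = \dim V \le n$, and because $W_0 \subseteq V$ with $V$ isotropic, every element of $V$ commutes with $W_0$, so $V \subseteq C$. Combining these, $\dim C = 2n - k + 1 \ge n + 1 > n \ge k = \dim V$, so the inclusion $V \subseteq C$ is strict and I may choose any $\vec f \in C \setminus V$. This $\vec f$ does the job: it is symplectically orthogonal to all of $W_0$ (and to itself, since $\langle \vec f, \vec f\rangle = 0$ always), so $V' = W_0 \oplus \langle \vec f\rangle$ is isotropic; and since $\vec f \notin V \supseteq W_0$, the sum is a genuine $k$-dimensional extension with $V' \ne V$, as required.

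For the second assertion I would invoke the definition of a generalized epistemic state directly: once $V'$ is established to be isotropic, $(V', \vec v_j)$ is a valid epistemic state for any $\vec v_j \in \Omega$, since $\vec v_j$ induces the consistent linear valuation $\vec g \mapsto \vec g^T \vec v_j$ on $V'$ and the associated ontic basis $(V')^{\perp} + \vec v_j$ is non-empty (it contains $\vec v_j$). In particular this holds for each valuation $\vec v_j$ in the original family, and the freedom in the choice of $\vec f$ is precisely what realizes the different phases of the superposition.

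I expect the only real content --- and the main obstacle --- to be the dimension count $\dim C = 2n - (k-1)$ together with the bound $k \le n$. This is exactly the place that uses the field structure: for non-prime $d$, $\Omega = \mathbb{Z}_d^{2n}$ is only a module, the identity $\dim C = 2n - (k-1)$ can fail (the symplectic complement need not have complementary dimension), and a separate treatment would be required --- which is why the lemma is stated only for the continuous and prime cases.
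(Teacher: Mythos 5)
Your proposal is correct and follows essentially the same route as the paper's proof: both take the symplectic complement of the span of the constant observables $\langle \vec f_1,\dots,\vec f_{k-1}\rangle$, use non-degeneracy over a field to get $\dim C = 2n-(k-1)$, and conclude by the dimension count $\dim C > \dim V$ (the paper phrases it as $\ell = 2n-2k+2 \ge 2$) that a suitable $\vec f$ outside $V$ exists. Your write-up is in fact slightly tighter --- it explicitly checks $\langle \vec f,\vec f\rangle = 0$ and addresses the validity of $(V',\vec v_j)$, which the paper's proof leaves implicit --- but the underlying argument is the same.
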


        %\lidia{are these the only allowed superpositions? can we have a lemma, or else a conjecture, on this?}
        %\ladina{The question is what do we want from a superimposed state, because abstractly the super imposed state is just a map $((V,v),(V',v')) \to (W,w)$, where the definition of the map is inspired by how quantum super positions are modeled in the stabilizer formalism. So I'm not really sure what we would conjecture/proof here}
        
        The definition of superpositions in the generalized formalism can be reduced to its stabilizer formalism definition for the case $d=2$, which is formally stated in the following lemma.
        
        \begin{restatable}[Reduction of general superpositions to the stabilizer case]{lemma}{toygensuperposestabilizer}\label{toy:gen:superposestabilizer}
            Definition~\ref{def:eqsuperpose} reduces to the one in the stabilizer formalism for the case $d=2$. 
            That is, the corresponding stabilizer state to the superposition of two general states is the superposition their corresponding stabilizer states, and the corresponding general state of the superposition of two stabilizer states is the superposition of the corresponding general states. 
        \end{restatable}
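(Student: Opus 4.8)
The plan is to show that the bijective correspondence of Lemma~\ref{toy:gen:stabcorrespondence} intertwines the two superposition operations, so that the induced diagram commutes; both asserted directions then follow at once from bijectivity. I would begin by translating the \emph{inputs} of Definition~\ref{def:eqsuperpose} into the stabilizer setup. For $d=2$, a family with $V=\langle \vec f_1,\dots,\vec f_k \rangle$ in which $\vec f_k$ is totally unknown and $\vec f_1,\dots,\vec f_{k-1}$ are constant reduces (up to redundant members) to two states $(V,\vec v_0)$ and $(V,\vec v_1)$ with $\vec f_i^T\vec v_0=\vec f_i^T\vec v_1$ for $i<k$ but $\vec f_k^T\vec v_0\neq \vec f_k^T\vec v_1$. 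Applying the correspondence together with its additive property (Lemma~\ref{toy:gen:stabcorrespondence}), these map to $\mathcal S_0=\langle g_1,\dots,g_{k-1},g_k \rangle$ and $\mathcal S_1=\langle g_1,\dots,g_{k-1},-g_k \rangle$, where $g_i$ corresponds to $\vec f_i$ with sign $(-1)^{\vec f_i^T\vec v_0}$. These are \emph{exactly} two rephased pure states differing only in the sign of the last generator, which is the precise input assumed by the stabilizer superposition.

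Next I would match the \emph{defining constraints} on the free data. In the general picture one chooses $\vec f$ so that $V'=\langle\vec f_1,\dots,\vec f_{k-1}\rangle\oplus\vec f$ is isotropic and $V'\neq V$; in the stabilizer picture one chooses $h$ commuting with $g_1,\dots,g_{k-1}$ with $h\notin\mathcal S_0\cup\mathcal S_1$. The bridge is the standard $d=2$ identity that two Paulis $P,Q$ with vectors $\vec p,\vec q$ obey $PQ=(-1)^{\langle\vec p,\vec q\rangle}QP$, so symplectic orthogonality $\langle\vec f,\vec f_i \rangle=0$ is equivalent to toy-commutation of the corresponding stabilizers; hence isotropy of $V'$ matches the commutation requirement on $h$. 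For the second constraint I would observe that, as group elements, $\mathcal S_0\cup\mathcal S_1$ is supported only on vectors lying in $V$ (with both signs present off $\langle\vec f_1,\dots,\vec f_{k-1}\rangle$), so $h\notin\mathcal S_0\cup\mathcal S_1$ is equivalent to $\vec f\notin V$, which in turn (given that $\vec f$ is independent of $\vec f_1,\dots,\vec f_{k-1}$) is exactly $V'\neq V$. Thus $\vec f$ corresponds to $h$, and applying Lemma~\ref{toy:gen:stabcorrespondence} to the generators of $V'=\langle\vec f_1,\dots,\vec f_{k-1},\vec f \rangle$ shows that the stabilizer state corresponding to $(V',\vec v_j)$ is $T=\langle g_1,\dots,g_{k-1},h \rangle$, with the sign of $h$ fixed to $(-1)^{\vec f^T\vec v_j}$ by the chosen valuation. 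Lemma~\ref{toy:gen:superpose} certifies that $(V',\vec v_j)$ is a genuine epistemic state, so this is a legitimate state-to-state correspondence.

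Finally I would check that the \emph{phase labels} agree. The free choices in the general picture are the additional observable $\vec f$ (equivalently, a new isotropic line through $\langle\vec f_1,\dots,\vec f_{k-1}\rangle$, of which there are two besides $V$) together with the valuation $\vec v_j\in\{\vec v_0,\vec v_1\}$, giving $2\times 2=4$ outputs; on the stabilizer side the four admissible $h$ split analogously into two Pauli ``types'' (fixed by $\vec f$) and two signs (fixed by $\vec f^T\vec v_j$). Passing to the two-dimensional symplectic quotient $\langle\vec f_1,\dots,\vec f_{k-1}\rangle^{\perp}/\langle\vec f_1,\dots,\vec f_{k-1}\rangle\cong\mathbb{Z}_2^2$ reduces the bookkeeping to the single-system case, where one verifies directly that the four general superpositions $(V',\vec v_j)$ correspond to the four stabilizer superpositions $T$, and that both match the four single-system superpositions $+_1,\dots,+_4$. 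Since the correspondence is a bijection, commutativity of this diagram yields both directions of the lemma. I expect the main obstacle to be the careful sign/valuation bookkeeping: tracking how the valuation $\vec v_j$ (which, per the Remark, need not lie in $V$) fixes the sign $(-1)^{\vec f^T\vec v_j}$ of $h$, and confirming that distinct vectors $\vec f$ spanning the same line $V'$ nevertheless yield the same stabilizer state $T$, so that the four-to-four matching is exact rather than merely surjective.
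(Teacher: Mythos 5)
Your proposal is correct and follows essentially the same route as the paper's proof: both hinge on the correspondence of Lemma~\ref{toy:gen:stabcorrespondence} to map the input pair $(V,\vec v_0),(V,\vec v_1)$ to the rephased stabilizer groups $\langle g_1,\dots,g_k\rangle$, $\langle g_1,\dots,-g_k\rangle$, and to match the free datum $\vec f$ (isotropy of $V'$, $V'\neq V$) with the stabilizer datum $h$ (commutation, $h\notin\mathcal S\cup\mathcal S'$), then transport valuations/signs. If anything, you are more careful than the paper on one point it glosses over --- the explicit equivalence, given independence, between $h\notin\mathcal S_0\cup\mathcal S_1$ and $\vec f\notin V$ --- and your phase-counting via the symplectic quotient is a pleasant extra verification rather than a deviation.
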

        
       \subparagraph{Example.}
       Let us consider the case where $d = 3, n=2$,  the  isotropic vector space 
       \begin{equation}
            V = \left\langle\begin{pmatrix} 1 \\[6pt]
                0 \\[6pt] 0 \\[6pt] 0 \\[6pt]\end{pmatrix},\begin{pmatrix} 0 \\[6pt]
                0 \\[6pt] 1 \\[6pt] 0 \\[6pt]\end{pmatrix}\right\rangle.
        \end{equation}
        and the three states with valuation vectors 
        \begin{align}
            \begin{split}
              \vec  v_{0} &= \begin{pmatrix} 0 \\[6pt]
                0 \\[6pt] 0 \\[6pt] 0 \\[6pt]\end{pmatrix}, 
           \vec v_{1} = \begin{pmatrix} 1 \\[6pt]
                0 \\[6pt] 0 \\[6pt] 0 \\[6pt]\end{pmatrix},
          \vec  v_{2} = \begin{pmatrix} 2 \\[6pt]
                0 \\[6pt] 0 \\[6pt] 0 \\[6pt]\end{pmatrix}.
            \end{split}
        \end{align}
        The first observable in $V$ is totally unknown for all states, while the second one is constant. Therefore, we can take the superposition over the three states. 
        The vector 
        \begin{equation}
             \vec f= \begin{pmatrix} 0 \\[6pt]
                1 \\[6pt] 0 \\[6pt] 0 \\[6pt]\end{pmatrix}
        \end{equation}
        commutes with the second observable generating $V$. Therefore,
        \begin{equation}
            V' = \left\langle\begin{pmatrix} 0 \\[6pt]
                1 \\[6pt] 0 \\[6pt] 0 \\[6pt]\end{pmatrix},\begin{pmatrix} 0 \\[6pt]
                0 \\[6pt] 1 \\[6pt] 0 \\[6pt]\end{pmatrix}\right\rangle.
        \end{equation}
        can be seen as the isotropic space of the superposition of the rows. All three states $(V', \vec v_0)$, $(V', \vec v_1)$ and $(V', \vec v_2)$ are valid superpositions.
        Note that there are other choices available for the new variable, for example 
        \begin{equation}
            \vec f'= \begin{pmatrix} 0 \\[6pt]
                0 \\[6pt] 0 \\[6pt] 1 \\[6pt]\end{pmatrix}
        \end{equation}
        also commutes with the second observable generating $V$ and $V'$. Hence,
        \begin{equation}
            V'' = \left\langle\begin{pmatrix} 0 \\[6pt]
                1 \\[6pt] 0 \\[6pt] 0 \\[6pt]\end{pmatrix},\begin{pmatrix} 0 \\[6pt]
                0 \\[6pt] 0 \\[6pt] 1 \\[6pt]\end{pmatrix}\right\rangle
        \end{equation}
        is also an isotropic space, and as such the superpositions $(V'', \vec v_0)$, $(V'', \vec v_1)$ and $(V'', \vec v_2)$ are all valid states.

\newpage
\section{Entanglement}
\label{sec:entanglement}

In quantum mechanics, the phenomenon of entanglement implies the existence of global states that cannot be written as a product of the states of subsystems, and demonstrates non-classical statistical relations between the subsystems~\cite{Einstein1935, Bell_1964}. As the toy theory mimics quantum theory to an extent, we are also able to establish the notion of entanglement there.  The proofs for new results in this section can be found in Appendix~\ref{appendix:toyproofs}.

\paragraph{Meaning of entanglement.}   One may wonder what it means to define ``entanglement'' in a local hidden variable theory. 
In Spekkens' toy theory, to have knowledge about an individual system is to have answers to some of the questions about its ontic state. To have knowledge of two systems being entangled is to be able to answer some questions on how these two systems are correlated. In other words, the information contained in a composite system can be divided into the information carried by individual systems, and the information contained in the correlations between observations made on individual subsystems~\cite{Brukner2001}. 
 Thanks to the knowledge balance principle, it is hence possible to either possess maximal knowledge about the states of subsystems -- which results in their epistemic states being uncorrelated; or about the correlations between them, which is described as a correlated epistemic state -- and an entangled one, if it is also a state of maximal knowledge. 
 To sum up, to be in an entangled state in the toy theory simply means to have maximal possible information about the correlations between systems. In particular, this is not at odds with the toy theory being an example of a local hidden variable theory.

\subsection{Original toy theory}
\label{subsec:entanglement-toy}
                To define entanglement in the epirestricted picture, we first need to define product
                states, i.e.\ non-entangled states.
                
                \begin{definition}[Product states]
                    We say that a state is of a product form between two systems $A$ and $B$ (or is a \emph{product state}) if it can be written as 
                    $$E_A \times E_B = \bigvee_{i=1}^{k_A} \bigvee_{j=1}^{k_B} (a_i
                \times b_j)  = (a_1
                \times b_1) \lor (a_1 \times b_2) \lor \dots \lor (a_2
                \times b_1) \lor \dots,$$ where
                $E_A = (a_1 \lor a_2 \lor \dots a_{k_A})$ and $E_B = (b_1
                \lor b_2 \lor \dots b_{k_B})$ are two valid epistemic states on systems $A$ and $B$ respectively.  
                \end{definition}
                
                \begin{restatable} [Product states are sound]{theorem}{toyproduct}
                 \label{toy:org:ent}
                Let $E_A = (a_1 \lor a_2 \lor \dots a_{k_A})$ and $E_B = (b_1
                \lor b_2 \lor \dots b_{k_B})$ be two valid epistemic states on two systems $A$ and $B$. Then the state $E_A \times E_B = \bigvee_{i=1}^{k_A} \bigvee_{j=1}^{k_B} (a_i
                \times b_j) $ is a valid epistemic state. 
                \end{restatable}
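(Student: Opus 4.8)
The plan is to check the validity of $E := E_A \times E_B$ against the inductive definition of validity, exploiting the fact that the whole hierarchy of knowledge-balance checks collapses into a single numerical condition on marginals. Unrolling the inductive check (global knowledge balance, plus validity of every $(N-1)$-subsystem marginal, recursively) together with \cref{toy:org:number} shows that an epistemic state on $N$ elementary systems, with ontic basis $O$, is valid if and only if for \emph{every} subset $S$ of its subsystems the projection $O \upharpoonright S$ of the ontic basis onto the $S$-coordinates has cardinality $\bigl| O \upharpoonright S \bigr| = 2^{2|S| - k_S}$ for some integer $0 \le k_S \le |S|$; the bound $k_S \le |S|$ encodes knowledge balance, i.e.\ that one never knows more than half of the canonical questions on $S$. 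I would record this reformulation first, as it replaces the recursive definition by a uniform statement over all marginals.

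The central structural fact is that marginals of a product factorise. Writing $S = S_A \sqcup S_B$, where $S_A$ are the chosen $A$-subsystems and $S_B$ the chosen $B$-subsystems, the product ontic basis $O_A \times O_B$ projects onto the coordinates in $S$ to give exactly $(O_A \upharpoonright S_A) \times (O_B \upharpoonright S_B)$, so $\bigl| O \upharpoonright S \bigr| = \bigl| O_A \upharpoonright S_A \bigr| \cdot \bigl| O_B \upharpoonright S_B \bigr|$. Since $E_A$ and $E_B$ are valid, the criterion above gives $\bigl| O_A \upharpoonright S_A \bigr| = 2^{2|S_A| - k_{S_A}}$ and $\bigl| O_B \upharpoonright S_B \bigr| = 2^{2|S_B| - k_{S_B}}$ with $k_{S_A} \le |S_A|$ and $k_{S_B} \le |S_B|$. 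Multiplying yields $\bigl| O \upharpoonright S \bigr| = 2^{2|S| - k_S}$ with $k_S = k_{S_A} + k_{S_B} \le |S_A| + |S_B| = |S|$, which is exactly the validity criterion for $E$. As $S$ was arbitrary, $E$ is valid.

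To make the global case $S = \{1, \dots, N\}$ self-contained, rather than leaning on the counting reformulation, I would certify it with the canonical-set machinery. Given canonical sets $\mathcal Q_A$ on $A$ and $\mathcal Q_B$ on $B$, extend each $A$-question to act only on the $A$-coordinates and each $B$-question only on the $B$-coordinates, obtaining the partitions $(M^A_{i,0} \times O_B,\ M^A_{i,1} \times O_B)$ and $(O_A \times M^B_{j,0},\ O_A \times M^B_{j,1})$. For any $2N$-bit answer string the corresponding intersection factorises as $\bigl( \cap_i M^A_{i,a_i} \bigr) \times \bigl( \cap_j M^B_{j,a_j} \bigr)$, a product of two singletons by \cref{toy:org:canonical} applied to each factor, hence a singleton; so \cref{toy:org:canonical} certifies that $\mathcal Q_A \cup \mathcal Q_B$ is a canonical set on $A \times B$. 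Applying \cref{toy:org:basis} to the subset of questions already answered for $E_A$ and $E_B$ then gives ontic basis $\bigl( \cap_{i \in \mathcal E_A} M^A_{i,a_i} \bigr) \times \bigl( \cap_{j \in \mathcal E_B} M^B_{j,a_j} \bigr) = O_A \times O_B$, built from $m_A + m_B \le N_A + N_B = N$ answered questions, confirming both global validity and global knowledge balance.

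The main obstacle is not a computation but the justification of the counting reformulation used in the first step: one must show that ``global validity'' is genuinely equivalent to the numerical condition $|O| = 2^{2N-k}$ with $k \le N$, i.e.\ that \cref{toy:org:number} admits a converse. This follows by observing that any ontic basis of size $2^{2N-k}$ can be realised as $\cap_{i \le k} M_{i,a_i}$ for a suitable canonical set: partition the $2^{2N}$ ontic states into $2^{k}$ equal blocks with the given basis as one block, and read off the resulting bijection to $\{0,1\}^{2N}$, which is canonical by \cref{toy:org:canonical}. Once this equivalence is in hand, the remaining ingredients --- factorisation of marginals, and the fact that a marginal of a valid state is again valid (immediate from the same reformulation) --- are routine, and the recursion bottoms out harmlessly at single elementary systems, where validity merely requires $|O| \in \{2,4\}$.
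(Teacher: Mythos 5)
Your proof collapses at its first step: the ``counting reformulation'' of validity is false, and the paper itself exhibits the counterexample. The state in (\ref{toy:org:mmt4}), namely $(1\times 1)\lor(2\times 1)\lor(3\times 1)\lor(4\times 2)$, has a global ontic basis of size $4=2^{2\cdot 2-2}$ and marginal projections of sizes $4$ and $2$, so it satisfies your cardinality criterion for every subset of subsystems (with $k_S\le |S|$ throughout); nevertheless it is invalid, because measuring one of its subsystems in $\{(1\lor 3),(2\lor 4)\}$ and obtaining the outcome $(2\lor 4)$ forces the update (\ref{toy:org:mmt5}), which violates the knowledge balance principle on the other subsystem. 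This is exactly the caveat in the ``Inductive validity checks'' paragraph of the paper: checking the knowledge balance principle numerically on the bulk and on all reduced states is necessary but \emph{not} sufficient, because ``illegal'' correlations can survive all such checks. So validity is not equivalent to your condition, the ``if'' direction of your claimed equivalence is unprovable, and everything downstream of it (including the assertion that marginal-validity is ``immediate from the same reformulation'') is unsupported. Your converse of \cref{toy:org:number} --- the block-partition construction --- is fine as far as it goes, but it only certifies the knowledge balance at the global level, which the counterexample also passes.

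What survives of your argument are two necessary conditions: marginals of a product factorize, and the union of the two local canonical sets (each extended to act trivially on the other factor) is a canonical set for the composite, which via \cref{toy:org:canonical} and \cref{toy:org:basis} yields global knowledge balance for $E_A\times E_B$. To finish, however, one must verify the structural requirement that the correlations in the product state are legal, and this is what the paper does by changing formalism entirely: in the stabilizer picture, $E_A$ and $E_B$ correspond to toy stabilizer groups $\mathcal{S}_A$ and $\mathcal{S}_B$ acting on disjoint systems; their product $\mathcal{S}_A\cdot\mathcal{S}_B$ is automatically a commuting subgroup of $G_N$ not containing $-\mathbb{1}$, and the set of ontic states it stabilizes is precisely the product of the two ontic bases, so $E_A\times E_B$ is a stabilizer state and hence valid by Pusey's equivalence (see the proof of \cref{toy:stab:prod}, to which the paper defers). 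Either adopt that route, or prove directly that product states can never trigger the measurement-update failure exhibited by (\ref{toy:org:mmt4}); without one of these, the theorem is not established.
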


                Using this definition, we can define entanglement and correlations. 
                \begin{definition}[Entanglement]
                    We call a pure epistemic state defined on systems
                    $A$ and $B$
                    \emph{entangled} if it cannot be written as product of
                    epistemic states on $A$ and $B$. 
                    
                    We call a mixed state \emph{correlated} if it is not  of product form. We say a mixed state is \emph{entangled} if it cannot be written as a mixture of
                    product states.
                \end{definition}

            \paragraph{Examples.}    An example for a
                correlated state would be the state
                % \begin{equation}
                %     \begin{tabular}{|*{1}{R}|*{1}{R} |*{1}{R}|*{1}{R}|}
                %         \hline
                %         0 & 0 & 1& 1 \\ \hline
                %         0 & 0 & 1& 1 \\\hline
                %         1 & 1 & 0& 0 \\\hline
                %         1 & 1 & 0& 0 \\\hline
                      
                %     \end{tabular}.
                % \end{equation}
                
                \begin{equation}
                \begin{tabular}{|{c}@{}|@{}{c}|{c}|{c}|}
                    \hline
                \colorcell{white}&\colorcell{white}&\colorcell{cyan}&\colorcell{cyan}\\ \hline
                \colorcell{white}&\colorcell{white}&\colorcell{cyan}&\colorcell{cyan}\\ \hline
                \colorcell{cyan}&\colorcell{cyan}&\colorcell{white}&\colorcell{white}\\ \hline
                \colorcell{cyan}&\colorcell{cyan}&\colorcell{white}&\colorcell{white}\\ \hline
               
                \end{tabular}
              \end{equation}
This is the toy analog of the classically correlated state $\proj {00}{00} + \proj{11}{11}$.
                An example of an entangled state would be the state
                % \begin{equation}
                %     \begin{tabular}{|*{1}{R}|*{1}{R} |*{1}{R}|*{1}{R}|}
                %         \hline
                %         0 & 0 & 0& 1 \\ \hline
                %         0 & 0 & 1& 0 \\\hline
                %         0 & 1 & 0& 0 \\\hline
                %         1 & 0 & 0& 0 \\\hline
                      
                %     \end{tabular}
                % \end{equation}
                \begin{equation}
                \begin{tabular}{|{c}@{}|@{}{c}|{c}|{c}|}
                    \hline
                \colorcell{white}&\colorcell{white}&\colorcell{white}&\colorcell{cyan}\\ \hline
                \colorcell{white}&\colorcell{white}&\colorcell{cyan}&\colorcell{white}\\ \hline
                \colorcell{white}&\colorcell{cyan}&\colorcell{white}&\colorcell{white}\\ \hline
                \colorcell{cyan}&\colorcell{white}&\colorcell{white}&\colorcell{white}\\ \hline
               
                \end{tabular}
              \end{equation}
                which can be seen as the toy theory analog of the
                Bell state $|00\rangle + |11\rangle$. This state can
                be generalized to states composed of more systems. In
                general these states are analogous to cat states $|0\rangle^{\otimes N}
                + |1\rangle^{\otimes N}$ and are given by 
             \begin{align}\label{toy:org:ent_states}
                    \begin{split}
                        &\lor_{x_1,...,x_{N-1} \in \{0,1\}}
                \left(\pi_1(x_1) \times ... \times \pi_{N-1}(x_{N-1}) \times \pi_N((x_1 + ...+ x_{N-1}
                )_{\text{mod 2}})\right) \lor \\ 
                &\lor_{x_1,...,x_{N-1} \in \{0,1\}}
                (\pi_1'(x_1)\times...\times\pi_{N-1}'(x_{N-1}) \times \pi_N'((x_1 + ...+ x_{N-1}
                )_{\text{mod 2}}))
                    \end{split}
                \end{align}
                were $\pi'_i$ and $\pi_i$ are injective functions
                which map from $\{0,1\} \to \{1,2,3,4\}$ and fulfill
                $\pi_i(\{0,1\} ) \cap \pi'_i(\{0,1\} ) = \emptyset$.
                We show in the stabilizer formalism that the
                states of this type are valid. For all different $\pi'_i$ and
                $\pi_i$, these states are equivalent under local
                transformations. Note that for systems with $N \geq 4$,
                \cref{toy:org:ent_states} is not the only allowed type of an entangled state.

\subsection{Stabilizer formalism}
\label{subsec:entanglement-stab}

\paragraph{Quantum pure state entanglement.} In quantum stabilizer formalism, pure state entanglement has a straightforward interpretation. Let $S$ be a stabilizer group of $n$ qubits, and $A$,$B$ -- two subsystems consisting of $k_A$ and $k_B = n-k_A$ qubits respectively. Then the stabilizer group can be split into: a local subgroup $S_A \cdot S_B =
            span\{S_A,S_B\}$, such that $S_A$ and $S_B$ only contain
            stabilizers of the systems $A$ and $B$ respectively; and
            the subgroup of the rest $S_{AB}$. It can be shown that
            the size of the minimal generating set of $S_{AB}$ is
            double the entanglement entropy of the state stabilized by
            $S$ \cite{Fattal2004}.
            
\paragraph{Toy pure state entanglement.}
            We can use the same idea as in quantum mechanics to
            characterize entanglement in the toy theory.
            \begin{restatable} [Characterization of pure entangled states]{theorem}{toystabprod}\label{toy:stab:prod}
                Let $\mathcal{S}$ be a stabilizer group of a maximal
                information state and $A$, $B$ two parties that hold
                each $N_A$ and $N_B$ such that $N_A + N_B = N$
                different elementary subsystems respectively, then the
                state is entangled if and only if $\mathcal{S} \neq
                \mathcal{S}_A \cdot \mathcal{S}_B$.
            \end{restatable}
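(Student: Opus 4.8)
The plan is to prove the equivalence ``the state is a product state $\iff \mathcal{S} = \mathcal{S}_A \cdot \mathcal{S}_B$'', since for a pure state ``entangled'' is defined precisely as ``not a product state''. The first move is to reduce the group equality to a dimension count. Writing $k_A, k_B$ for the number of independent generators of $\mathcal{S}_A, \mathcal{S}_B$, I would note that $\mathcal{S}_A \cap \mathcal{S}_B = \{\mathbb{1}\}$: a toy Pauli supported on both $A$ and $B$ trivially must be $\pm\mathbb{1}$, and $-\mathbb{1} \notin \mathcal{S}$. Hence $\mathcal{S}_A \cdot \mathcal{S}_B$ has $2^{k_A + k_B}$ elements. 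Since the state is pure, $\mathcal{S}$ has $N$ independent generators and $|\mathcal{S}| = 2^N$; as $\mathcal{S}_A \cdot \mathcal{S}_B \subseteq \mathcal{S}$ always holds, we get $\mathcal{S} = \mathcal{S}_A \cdot \mathcal{S}_B$ if and only if $k_A + k_B = N$. The whole theorem thus reduces to the statement: the state is a product state iff $k_A + k_B = N$.

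For the direction $\mathcal{S} = \mathcal{S}_A \cdot \mathcal{S}_B \Rightarrow$ product, I would show that the ontic basis factorizes. An ontic state $o = o_A \otimes o_B$ is fixed by all of $\mathcal{S}$ iff it is fixed by every generator; since a generator $g_A \otimes \mathbb{1}_B \in \mathcal{S}_A$ acts as the identity on $B$, it fixes $o$ iff $g_A o_A = o_A$, and symmetrically for $B$. Hence the ontic basis is $O_A \times O_B$, where $O_A$ is stabilized by $\mathcal{S}_A$ and $O_B$ by $\mathcal{S}_B$, and both are valid epistemic states because $\mathcal{S}_A, \mathcal{S}_B$ are themselves legitimate (commuting, $-\mathbb{1}$-free) stabilizer groups. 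So the state is $E_A \times E_B$, a product state.

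For the harder direction, suppose the state is a product state $E_A \times E_B$. A dimension count using \cref{toy:org:number} forces both factors to be pure: $|E_A|\cdot|E_B| = 2^N$, while $|E_A| \ge 2^{N_A}$ and $|E_B| \ge 2^{N_B}$ with equality exactly for pure states, so both inequalities are tight. Each pure factor then has $N_A$ (resp.\ $N_B$) independent known observables; embedding a known observable of $A$ as $g_A \otimes \mathbb{1}_B \in G_N$ produces a toy Pauli that fixes every element of $O_A \times O_B$, hence stabilizes the global state and lies in $\mathcal{S}_A$. This gives $k_A \ge N_A$, and since a commuting $-\mathbb{1}$-free subgroup on $N_A$ systems has at most $N_A$ independent generators, $k_A = N_A$; symmetrically $k_B = N_B$, whence $k_A + k_B = N$ and $\mathcal{S} = \mathcal{S}_A \cdot \mathcal{S}_B$.

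The step I expect to be the crux is the claim used in the last direction that every toy Pauli fixing all ontic states of the (pure) global state actually belongs to $\mathcal{S}$, i.e.\ that the stabilizer group of a pure state is maximal and unique. This is where one must lean on Pusey's equivalence between the stabilizer and epirestricted pictures: for a pure state the $2^N$-element stabilizer group is exactly the set of toy Paulis fixing the ontic basis, and in particular the automatic ``commutation'' (under $\sim$) of these locally supported operators is inherited from that correspondence rather than verified by hand. Once this maximality is granted, the embedding argument genuinely places the local known observables inside $\mathcal{S}_A$ and $\mathcal{S}_B$, and the counting closes the proof.
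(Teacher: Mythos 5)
Your proof is correct and follows essentially the same route as the paper's: one direction shows that the product of the two locally stabilized epistemic states is stabilized by $\mathcal{S}$ and saturates the $2^N$-element ontic basis of the pure state, and the other shows that the factors of a pure product state must themselves be pure, so their local stabilizers exhaust $\mathcal{S}$ by a generator count. The ``crux'' you flag --- that every toy Pauli fixing the entire ontic basis of a pure state must belong to $\mathcal{S}$ --- is exactly the uniqueness-of-the-stabilizer-group claim that the paper's own $\Longleftarrow$ argument also invokes (asserting that distinct maximal stabilizer groups stabilize distinct states, without further justification), so your explicit appeal to Pusey's equivalence makes the dependence clearer rather than introducing a new gap.
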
 
            
 \paragraph{Toy mixed state entanglement.}      
           For non-maximal information states the situation is more
           complicated because the state may not be factorizable, but can still be written as the sum of two product states and, therefore, not be entangled. In those cases, the stabilizer group may not factorize.
           Consequently, the above theorem holds for mixed states only
           in one direction:
           
           \begin{restatable}[Characterizing mixed entangled states]{theorem}{toystabprodtwo}\label{toy:stab:prod2}
            Let $\mathcal{S}$ be a stabilizer group of a non-maximal
            information state and $A$, $B$ two parties that hold each
            $k$ and $n-k$ elementary subsystems respectively. If $\mathcal{S} =
            \mathcal{S}_A \cdot \mathcal{S}_B$ then the state is not entangled.  The converse is not necessarily true. 
           \end{restatable}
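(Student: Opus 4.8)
The plan is to show that whenever the stabilizer group factorizes, $\mathcal{S} = \mathcal{S}_A \cdot \mathcal{S}_B$, the associated mixed state can be written as a uniform mixture of \emph{product} pure states; by the definition of entanglement for mixed states, this means the state is not entangled. The strategy combines the pure-state characterization of Theorem~\ref{toy:stab:prod} with the convex decomposition of Theorem~\ref{toy:stab:mix}, and the only real work is to arrange the decomposition so that every pure state appearing in it is itself a product state.

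First I would fix notation: say $A$ holds $N_A$ and $B$ holds $N_B$ elementary subsystems with $N_A+N_B=N$, and write $\mathcal{S}_A = span\{g^A_1,\dots,g^A_{r_A}\}$ and $\mathcal{S}_B=span\{g^B_1,\dots,g^B_{r_B}\}$, where each $g^A_i$ acts nontrivially only on $A$ and each $g^B_j$ only on $B$. Since the state is non-maximal, $r_A+r_B = r < N$. The key structural observation is that each local factor is a sub-maximal toy stabilizer group living in the Pauli group of its own subsystems, and hence can be completed to a maximal local stabilizer group using local generators only: there exist $\tilde{\mathcal{S}}_A \supseteq \mathcal{S}_A$ with $N_A$ independent $A$-local generators and $\tilde{\mathcal{S}}_B \supseteq \mathcal{S}_B$ with $N_B$ independent $B$-local generators, neither containing $-\id$. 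Writing $h_1,\dots,h_k$ for the $k = N-r$ newly added local generators ($k\ge 1$ since the state is mixed), every sign assignment $\vec s \in \{0,1\}^k$ yields a maximal stabilizer group $\mathcal{S}_{\vec s} = span\{g^A_\bullet,\,g^B_\bullet,\,(-1)^{s_1}h_1,\dots,(-1)^{s_k}h_k\}$. Because all of its generators are local, $\mathcal{S}_{\vec s}$ factorizes as $(\mathcal{S}_{\vec s})_A\cdot(\mathcal{S}_{\vec s})_B$, so by Theorem~\ref{toy:stab:prod} the corresponding pure state is a product state across the $A\mid B$ cut.

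It then remains to assemble the mixture and to settle the converse. By Theorem~\ref{toy:stab:mix}, completing $\mathcal{S}$ by the $k$ generators $h_1,\dots,h_k$ and averaging over their signs recovers the original state, $\rho_{\mathcal{S}} = 2^{-k}\sum_{\vec s}\rho_{\mathcal{S}_{\vec s}}$; since the $h_j$ were chosen local, this exhibits $\rho_{\mathcal{S}}$ as a uniform mixture of the $2^k$ product states found above, so the state is not entangled. For the claim that the converse fails, I would exhibit the classically correlated state $\tfrac12(\pure{00}+\pure{11})$, a manifest mixture of the product states $\pure{00}$ and $\pure{11}$ (hence not entangled), whose toy stabilizer group is $span\{\mathcal{Z}_1\mathcal{Z}_2\}$: this contains $\mathcal{Z}_1\mathcal{Z}_2$ but neither $\mathcal{Z}_1$ nor $\mathcal{Z}_2$, so $\mathcal{S}\neq\mathcal{S}_A\cdot\mathcal{S}_B$ while the state is separable.

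The main obstacle I expect is the local completion step. One must verify that a sub-maximal toy stabilizer group supported on $A$'s subsystems can be extended to a maximal one using only $A$-local generators; I would transport the standard quantum completion argument through the commutation map $m$, choosing signs so as to avoid $-\id$. Once this is available the rest is routine: $\tilde{\mathcal{S}}_A\cdot\tilde{\mathcal{S}}_B$ is automatically a legitimate stabilizer group, since $A$- and $B$-local operators act on disjoint subsystems and therefore ``commute'' and cannot multiply to $-\id$, and the sign-averaging identity $\rho_{\mathcal{S}} = 2^{-k}\sum_{\vec s}\rho_{\mathcal{S}_{\vec s}}$ holds for \emph{any} completion by $k$ independent generators (this is exactly the computation underlying Theorem~\ref{toy:stab:mix}), in particular for the local one.
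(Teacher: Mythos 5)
Your proposal is correct, but it takes a genuinely different route from the paper's. The paper disposes of the forward direction in one line by pointing back at the proof of Theorem~\ref{toy:stab:prod}: if $\mathcal{S}=\mathcal{S}_A\cdot\mathcal{S}_B$, then an ontic state $o_A\times o_B$ is stabilized by all of $\mathcal{S}$ exactly when $o_A$ is stabilized by $\mathcal{S}_A$ and $o_B$ by $\mathcal{S}_B$, so the global ontic basis is the Cartesian product of the two local ontic bases and the state is literally a product state $E_A\times E_B$ of (possibly mixed) marginals --- hence not even correlated, a fortiori not entangled; the counting in that argument survives the drop of maximality, since $2^{2N_A-r_A}\cdot 2^{2N_B-r_B}=2^{2N-r}$. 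You instead produce a convex decomposition: you complete $\mathcal{S}_A$ and $\mathcal{S}_B$ \emph{locally}, average over the $2^k$ sign assignments exactly as in the proof of Theorem~\ref{toy:stab:mix}, and invoke Theorem~\ref{toy:stab:prod} on each maximal factorized group, exhibiting the state as a uniform mixture of $2^k$ pure product states. This is longer and hinges on the local-completion step, which you rightly flag as the main obstacle; note, however, that the paper already assumes the global version of that completion without proof inside Theorem~\ref{toy:stab:mix}, so transporting it through $m$ to a single subsystem is an admissible extra step, not a gap. What your route buys: it matches the definition of ``not entangled'' verbatim (a mixture of product states) and yields the stronger structural fact that these states decompose into \emph{pure} product states. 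What the paper's route buys: brevity, and the stronger conclusion that the state is uncorrelated, not merely unentangled. On the converse, your treatment is actually more complete than the paper's: the paper only remarks that for mixed states ``not entangled'' does not imply ``product,'' without exhibiting a witness in the proof, whereas you give the required counterexample --- the toy analog of $\frac{1}{2}(\pure{00}+\pure{11})$ with stabilizer group $span\{\mathcal{Z}_1\mathcal{Z}_2\}$, which is a mixture of product states yet has trivial local subgroups, so $\mathcal{S}\neq\mathcal{S}_A\cdot\mathcal{S}_B$. That explicit example is what actually establishes the sentence ``the converse is not necessarily true.''
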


     \paragraph{Examples.}      In the stabilizer formalism we can also describe the
           epistemic ``cat states'' defined in \cref{toy:org:ent_states} and,
           thus, show that they are valid. Let us define the
           injections $\pi_j,\pi'_j$
           in the following way  
           \begin{align}
            \pi(0) &= \pi_j(0) = 1 & \pi(1) &= \pi_j(1) = 2 \\
            \pi'(0) &= \pi'_j(0) = 3 & \pi'(1) &= \pi'_j(1) = 4.
           \end{align}
           In this case, each state in the ontic basis is stabilized by
           $span\{\mathcal{Z}_i \mathcal{Z}_{i+1}| \ i \in
           \{1,...,N-1\}\}$, because for each ontic state in the ontic
           basis all subsystems have their ontic state either in the $+1$
           eigenspace of $\mathcal{Z}$ or in the $-1$ eigenspace of
           $\mathcal{Z}$. This state is additionally stabilized by $
           \prod_{i = 1}^{N} \mathcal{X}_i$, because the ontic state of
           the last system in each state in the ontic basis ensures that
           there is always an even amount of systems that are in the
           eigenstate of $\mathcal{X}$. Thus, the epistemic state in
           \cref{toy:org:ent_states} with our particular choice of
           injections is stabilized by $\prod_{i = 1}^{N}
           \mathcal{X}_i$ and all elements in $ \{\mathcal{Z}_i
           \mathcal{Z}_{i+1}|\ i \in \{1,...,N-1\}\}$, which commute and
           are independent. There is no additional independent stabilizers as we already identified $N$ independent commuting
           stabilizers. Note that any other choice of injections corresponds to a local permutation of the initial choice of injectors -- this provides us with a procedure of finding the stabilizers for an arbitrary choice of injectors $\pi_j, \pi_j'$. 
           %To find the stabilizers of any choice of injections, one observes that a different choice of injections is a local permutation to the previous choice of injections. Thus, if the local permutation is applied to the previously found stabilizer, we have found the stabilizers for a different choice of injections $\pi_j, \pi_j'$. 

\subsection{Generalization for an arbitrary dimension}
\label{subsec:entanglement-gen}

To define entanglement for a general case, we first need to understand what it
        means for a state to be completely uncorrelated. Consider two parties, Alice and Bob,
        where Alice has access to a subsystem $A$, and Bob -- to a subsystem $B$. 
        
        \begin{definition}[Product states (generalized)]
            Let $(V,\vec{v})$ be a state over two subsystems $A$ and $B$. We say that it is a product state if 
            $V =
        V_A \oplus V_B$, where $V_A$ is spanned by vectors
        which only have non zero entries on  $A$, and analogously for $V_B$. 
        \end{definition}
        
        In this case Alice, who has access to observables in $V_A$, knows nothing about the ontic
        state of system $B$,  and vice-versa. This definition does
        not depend on $\vec{v}$ at all, in the same way as entanglement
        in the stabilizer formalism did not depend on the sign of the
        stabilizers (that is, their valuation). We now proceed to defining entanglement and correlations.
 
         \begin{definition}[Generalized entanglement]\label{def:gen:ent}
                    We call an epistemic state $(V,\vec{v})$, shared by $A$ and $B$, \emph{correlated} if it is not product, and \emph{entangled} if cannot be
                    written as a mixture of product states.
                \end{definition}
        Let us illustrate these definitions for the generalized toy theory with few simple examples for $d=2$, which we accompany by their original toy theory counterparts.
        
        \begin{itemize}
        
            \item Entangled state: 
            \begin{equation}
                \left(\left\langle \begin{pmatrix} 1 \\[6pt] 0 \\[6pt]
            1 \\[6pt] 0\end{pmatrix},\begin{pmatrix} 0 \\[6pt] 1
            \\[6pt] 0 \\[6pt]
            1 \end{pmatrix} \right\rangle, \begin{pmatrix} 1 \\[6pt]  1
            \\[6pt]   1 \\[6pt] 
            1 \end{pmatrix} \right) \quad \Longleftrightarrow \quad \begin{tabular}{|{c}@{}|@{}{c}|{c}|{c}|}
                    \hline
                \colorcell{cyan}&\colorcell{white}&\colorcell{white}&\colorcell{white}\\ \hline
                \colorcell{white}&\colorcell{cyan}&\colorcell{white}&\colorcell{white}\\ \hline
                \colorcell{white}&\colorcell{white}&\colorcell{cyan}&\colorcell{white}\\ \hline
                \colorcell{white}&\colorcell{white}&\colorcell{white}&\colorcell{cyan}\\ \hline
                \end{tabular}
            \end{equation}
            
            \item Correlated state:
            \begin{equation*}
                \left(\left\langle \begin{pmatrix} 1 \\[6pt] 0 \\[6pt]
                    1\\[6pt] 0 \\[6pt]\end{pmatrix}\right\rangle,\begin{pmatrix} 1 \\[6pt]  0
                        \\[6pt]   0 \\[6pt] 
                        0 \end{pmatrix}\right)  = \left(\left\langle \begin{pmatrix} 1 \\[6pt] 0 \\[6pt]
                            1 \\[6pt] 0\end{pmatrix},\begin{pmatrix} 0 \\[6pt] 0
                            \\[6pt] 1 \\[6pt]
                            0 \end{pmatrix} \right\rangle, \begin{pmatrix} 1 \\[6pt]  0
                            \\[6pt]   0 \\[6pt] 
                            0 \end{pmatrix} \right) +  \left(\left\langle \begin{pmatrix} 1 \\[6pt] 0 \\[6pt]
                                1 \\[6pt] 0\end{pmatrix},\begin{pmatrix} 0 \\[6pt] 0
                                \\[6pt] 1 \\[6pt]
                                0 \end{pmatrix} \right\rangle, \begin{pmatrix} 0 \\[6pt]  0
                                \\[6pt]   1 \\[6pt] 
                                0 \end{pmatrix} \right) \end{equation*}
                                \begin{equation}
                                \Longleftrightarrow \quad \begin{tabular}{|{c}@{}|@{}{c}|{c}|{c}|}
                    \hline
                \colorcell{cyan}&\colorcell{white}&\colorcell{cyan}&\colorcell{white}\\ \hline
                \colorcell{white}&\colorcell{cyan}&\colorcell{white}&\colorcell{cyan}\\ \hline
                \colorcell{cyan}&\colorcell{white}&\colorcell{cyan}&\colorcell{white}\\ \hline
                \colorcell{white}&\colorcell{cyan}&\colorcell{white}&\colorcell{cyan}\\ \hline
                \end{tabular} = 
                 \begin{tabular}{|{c}@{}|@{}{c}|{c}|{c}|}
                    \hline
                \colorcell{white}&\colorcell{white}&\colorcell{white}&\colorcell{white}\\ \hline
                \colorcell{white}&\colorcell{cyan}&\colorcell{white}&\colorcell{cyan}\\ \hline
                \colorcell{white}&\colorcell{white}&\colorcell{white}&\colorcell{white}\\ \hline
                \colorcell{white}&\colorcell{cyan}&\colorcell{white}&\colorcell{cyan}\\ \hline
                \end{tabular} + 
                \begin{tabular}{|{c}@{}|@{}{c}|{c}|{c}|}
                    \hline
                \colorcell{cyan}&\colorcell{white}&\colorcell{cyan}&\colorcell{white}\\ \hline
                \colorcell{white}&\colorcell{white}&\colorcell{white}&\colorcell{white}\\ \hline
                \colorcell{cyan}&\colorcell{white}&\colorcell{cyan}&\colorcell{white}\\ \hline
                \colorcell{white}&\colorcell{white}&\colorcell{white}&\colorcell{white}\\ \hline
                \end{tabular}
            \end{equation}

            \item Product state:
            \begin{equation}
                \left(\left\langle \begin{pmatrix} 1 \\[6pt] 0 \\[6pt]
                                    1 \\[6pt] 0\end{pmatrix},\begin{pmatrix} 0 \\[6pt] 0
                                    \\[6pt] 1 \\[6pt]
                                    0 \end{pmatrix} \right\rangle, \begin{pmatrix} 1 \\[6pt]  0
                                    \\[6pt]   0 \\[6pt] 
                                    0 \end{pmatrix} \right) = \left\langle \begin{pmatrix} 1 \\[6pt] 0 \\[6pt]
                                            0 \\[6pt]
                                            0\end{pmatrix}\right\rangle
                                            \oplus \left\langle \begin{pmatrix} 0
                                            \\[6pt]
                                            0
                                            \\[6pt] 1 \\[6pt]
                                            0 \end{pmatrix} \right\rangle \quad \Longleftrightarrow \quad \begin{tabular}{|{c}@{}|@{}{c}|{c}|{c}|}
                    \hline
                \colorcell{cyan}&\colorcell{white}&\colorcell{cyan}&\colorcell{white}\\ \hline
                \colorcell{white}&\colorcell{white}&\colorcell{white}&\colorcell{white}\\ \hline
                \colorcell{cyan}&\colorcell{white}&\colorcell{cyan}&\colorcell{white}\\ \hline
                \colorcell{white}&\colorcell{white}&\colorcell{white}&\colorcell{white}\\ \hline
                \end{tabular} 
            \end{equation}
        \end{itemize}

\newpage
\section{Transformations}
\label{sec:transformations}

To characterize all possible transformations in  Spekkens' toy theory, we first have to consider what kind of transformations are allowed for ontic states. 

\subsection{Original toy theory}
\label{sec:org:trans}

\paragraph{Valid physical transformations.}
We call transformations $T$ that act on ontic states as \emph{physical transformations}. We denote their effect on epistemic states in the natural way: any epistemic state $E = \lor_i o_i $  is mapped to $T(E) = \lor_i T(o_i)$~\cite{Spekkens_2007}. 
We say a physical transformation is \emph{valid} if and only if it maps all valid epistemic states to valid epistemic states (also at the level of subsystems). 

\paragraph{Physical transformations must be bijective.} Suppose that \emph{many-to-one} transformations on ontic states are possible, for example $T: 1\mapsto 3, 2 \mapsto 3$. 
Then the epistemic state $(1 \lor 2)$ is mapped to $3$. Therefore, the output of this transformation violates the knowledge-balance principle and as such, many-to-one transformations are excluded. 
Now consider the case of \emph{one-to-many} transformations: for example, $T':1 \mapsto (3 \lor 4), 2 \mapsto 2$. Then the epistemic state $(1 \lor 2)$ would be mapped to $(2 \lor 3 \lor 4)$. This state is invalid, because it has an ontic basis of size 3. Hence, one-to-many transformations are also excluded, and the only allowed ontic state transformations are bijective, or one-to-one. In summary, physical transformations are reversible permutations of ontic states. However, being a permutation is not a sufficient condition for validity. 

\paragraph{Remark on irreversible transformations.}
There is still a way in the toy theory for irreversible transformations in which agents lose knowledge --- for example when their systems of interest interact with an environment they cannot track. 
Similarly to the Stinespring dilation in quantum mechanics, we can  formalize a transformation that does not preserve knowledge as a bijection of ontic states applied to a larger system, followed by tracing out a subsystem.  In other words, this is a physical, reversible transformation at the level of ontic states, followed by an epistemic ``action'' of forgetting a subsystem. These transformations do increase the size of the ontic basis of a state, and will be the subject of an upcoming work; we will also get back to them when we generalize to arbitrary dimensions. In the following we will focus only on what we call physical transformations, that is, permutations of ontic states.

\paragraph{Entangling transformations.} A transformation is \emph{local} if it can be decomposed into transformations acting only on individual subsystems. Beyond that, like in quantum theory, we are interested in characterizing the set of entangling transformations. 

\begin{definition}[Entangling transformations] A transformation is called \textbf{non-entangling} if and only if every product state on any division of subsystems $A_1$,...,$A_k$ is mapped to a product state of subsystems $A_1$,...,$A_k$; otherwise a transformation is called \textbf{entangling}.
\end{definition}

The following theorem  characterizes a class of local transformations.
\begin{restatable}[Sufficient condition for local transformations]{theorem}{toyorgtrafo}\label{toy:org:trafo}
    All non-entangling transformations that preserve the amount of information stored in each subsystem (that is, the size of the ontic support of each marginal doesn't change) are local.
\end{restatable}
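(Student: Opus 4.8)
The plan is to show that a transformation $T$ satisfying the hypotheses must act coordinate-wise. Regarding $T$ as a permutation of the ontic states $\{1,2,3,4\}^N$, I write it as $T=(\tau_1,\dots,\tau_N)$, where $\tau_i(o)\in\{1,2,3,4\}$ is the $i$-th component of the image of $o=(o_1\times\dots\times o_N)$. The whole goal is to prove that $\tau_i(o)$ depends only on $o_i$; once this holds, $T = T_1\times\dots\times T_N$ with $T_i:=\tau_i$, which is precisely the statement that $T$ is local.

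First I would probe each subsystem in isolation. For a fixed $i$ and a single-system pure state $(a\lor b)$, consider the product state $E^{(i,ab)}$ that is full on every subsystem except the $i$-th, where it equals $(a\lor b)$; this is a valid epistemic state by Theorem~\ref{toy:org:ent}. Since $T$ is non-entangling on the finest partition into elementary systems, $T(E^{(i,ab)})$ is again a product state $E_1'\times\dots\times E_N'$, and as its ontic basis has size $2\cdot 4^{N-1}$, exactly one factor is pure. Here the information-preservation hypothesis does the decisive work: it forces $|E_i'|=|E^{(i,ab)}_i|=2$ and $|E_m'|=4$ for $m\neq i$, so the pure factor is precisely the $i$-th one. (This is also what rules out subsystem permutations such as swap, which are non-entangling but move information between subsystems.) I thus obtain a well-defined map $\phi_i(\{a,b\}):=E_i'$ on single-system pure states, together with the global equivalence $\tau_i(o)\in\phi_i(\{a,b\}) \iff o_i\in\{a,b\}$, valid for every ontic state $o$.

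Next I would establish the structure of $\phi_i$. Because $T$ is a bijection and the basis of $E^{(i,ab)}$ together with that of its complement $E^{(i,cd)}$ (where $\{c,d\}=\{1,2,3,4\}\setminus\{a,b\}$) partition the whole ontic space, their images must also partition it; hence $\phi_i$ is a bijection of the six single-system pure states that sends complementary pairs to complementary pairs, and therefore permutes the three partitions $\{\{1,2\},\{3,4\}\}$, $\{\{1,3\},\{2,4\}\}$, $\{\{1,4\},\{2,3\}\}$. Consequently $\phi_i(\{1,2\})$ and $\phi_i(\{1,3\})$ lie in distinct partitions. Since any element of $\{1,2,3,4\}$ is pinned down by the answers to two questions drawn from distinct partitions, and those answers --- ``$\tau_i(o)\in\phi_i(\{1,2\})$?'' and ``$\tau_i(o)\in\phi_i(\{1,3\})$?'' --- equal ``$o_i\in\{1,2\}$?'' and ``$o_i\in\{1,3\}$?'' respectively by the equivalence above, the value $\tau_i(o)$ is determined by $o_i$ alone. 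This produces the coordinate-wise bijections $T_i$ and completes the argument.

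The main obstacle I anticipate is not the product-state bookkeeping but this final pinning-down step: one must verify that $\phi_i$ cannot collapse two distinct single-system questions onto the same one, equivalently that $S_{12}:=\phi_i(\{1,2\})$ and $S_{13}:=\phi_i(\{1,3\})$ are neither equal nor complementary, since otherwise the two membership questions would fail to separate all four values of $\tau_i(o)$. This is exactly where bijectivity of $T$ and complementarity-preservation of $\phi_i$ are needed, and it is the crux that upgrades coarse-grained (pair-level) information about $\tau_i$ into its exact value.
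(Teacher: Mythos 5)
Your proof is correct, and it reaches the conclusion by a route that differs in a useful way from the paper's. Both arguments probe $T$ with the same family of states --- pure on one elementary system, maximally mixed on the rest --- and both use the hypotheses identically: non-entangling makes the image a product state, and information-preservation pins the unique pure factor to the same position $i$ (which is also what rules out swaps, exactly as you note). The divergence is in how pair-level information is upgraded to point-level information. The paper peels off subsystem~1: it asserts the ontic-level factorization $T((a\lor b)\cdot o)=(c\lor d)\cdot o'$, intersects the images of $(a\lor b)\cdot o$ and $(a\lor f)\cdot o$ to conclude $T(a\cdot o)=g\cdot o'$, and then recurses on the remaining $N-1$ systems. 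Notably, the shared-tail part of that claim --- that the two image points of $\{a\cdot o,\,b\cdot o\}$ have identical components outside system~1 --- is stated in the paper without explicit justification (patching it requires a further intersection argument over pure product states on the tail). Your coordinate-wise organization sidesteps this entirely: you never track tails, only the membership equivalence $\tau_i(o)\in\phi_i(\{a,b\})\iff o_i\in\{a,b\}$, which follows from bijectivity alone, plus the observation that $\phi_i$ preserves complementation and hence permutes the three pair-partitions of $\{1,2,3,4\}$, so that the two questions built from $\phi_i(\{1,2\})$ and $\phi_i(\{1,3\})$ separate all four values of $\tau_i(o)$. This yields $\tau_i(o)=T_i(o_i)$ for every $i$ simultaneously, with no induction and no unproved intermediate step; the price is a modest amount of combinatorial bookkeeping about the three partitions, which the paper's recursive argument keeps implicit.
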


\begin{restatable}[Non-entangling $\implies$  swaps and local permutations]{corollary}{toyorgperm}\label{toy:org:perm}
    All non-entangling transformations are composed of local permutations and system swaps. 
    A system swap $S_{i,j}$ is defined in the following way
    \begin{equation}
        S_{i,j}: o = a_1 \cdot a_2 \cdot ...
            \cdot a_i \cdot ... \cdot a_j \cdot ... \cdot a_n \to S_{i,j} o = a_1 \cdot a_2 \cdot ... 
            \cdot a_j \cdot ... \cdot a_i \cdot ... \cdot a_n.
    \end{equation}
\end{restatable}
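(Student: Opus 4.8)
The plan is to deduce this from \cref{toy:org:trafo}: a general non-entangling transformation may move information between subsystems, but I claim it can only do so by permuting the subsystem labels, and any such label permutation is generated by the swaps $S_{i,j}$. Once those swaps are stripped off, the remainder preserves the amount of information in each subsystem and is therefore local by the theorem. Concretely, given a non-entangling transformation $T$, I would exhibit a permutation $\sigma$ of the $N$ subsystem labels, let $S_\sigma$ be a composition of swaps $S_{i,j}$ realising $\sigma$, and show that $T' := S_\sigma^{-1}\circ T$ is non-entangling and preserves the size of every marginal's ontic basis. Since $S_\sigma$ maps product states to product states, $T'$ is again non-entangling, so \cref{toy:org:trafo} gives that $T'$ is local, i.e.\ a product of local permutations; then $T = S_\sigma \circ T'$ is a composition of swaps and local permutations, as claimed.

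To construct $\sigma$ I would use probe states together with the fact, established earlier, that any physical transformation is a bijective permutation of ontic states. Let $P_i$ be the product state in which subsystem $i$ is pure and every other subsystem is fully mixed. Because $T$ is a bijection of the $4^N$ ontic states it preserves the cardinality of any ontic basis, and because $T$ is non-entangling with respect to the finest partition into elementary subsystems, $T(P_i)$ is again a product state over that partition. For a product state the size of the ontic basis is the product of the sizes of its single-subsystem marginals, and each valid elementary marginal has size $2$ (pure) or $4$ (mixed); hence $T(P_i)$ has exactly one pure marginal, and I define $\sigma(i)$ to be its position.

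The remaining steps are (i) that $\sigma(i)$ does not depend on which pure state was placed on subsystem $i$, (ii) that $\sigma$ is a bijection, and (iii) that, after composing with $S_\sigma$, the size of every marginal is preserved on all states and not only on the probes. For (i) and (ii) I would argue by monotonicity: since $T$ is a bijection, inclusions of ontic bases are preserved, and refining the knowledge on one subsystem only shrinks the ontic basis. Applying this to states in which several subsystems are simultaneously pure pins down that the set of ``pure positions'' of the image is determined by the set of pure positions of the input; running the same argument for the (also non-entangling) inverse $T^{-1}$ yields a map $\sigma'$ inverse to $\sigma$, forcing $\sigma$ to be a genuine permutation of the labels. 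Step (iii) then follows because $S_\sigma^{-1}\circ T$ sends each $P_i$-type probe back to a state pure on subsystem $i$, and the same monotonicity bookkeeping propagates this to arbitrary product states, hence to all states via their marginals.

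The main obstacle is exactly this rigidity claim (i)--(ii): ruling out that a non-entangling bijection could split the information originally localised in one subsystem across several output subsystems in a state-dependent way. The resolution I expect to rely on is the interplay between the bijectivity of $T$ (so that ontic-basis inclusions and cardinalities are simultaneously preserved) and non-entanglement applied to both $T$ and $T^{-1}$, which together leave no room for anything but a fixed relabelling of subsystems. Everything after that is routine: swaps generate all label permutations, $S_\sigma^{-1}\circ T$ meets the hypotheses of \cref{toy:org:trafo}, and the theorem finishes the proof.
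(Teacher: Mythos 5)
Your strategy is sound and rests on the same engine as the paper's proof --- probe states with one pure subsystem and the rest fully mixed, cardinality arguments from bijectivity, swaps to relocate the pure marginal, and \cref{toy:org:trafo} to conclude locality --- but the two proofs are organized differently. The paper never builds a global label permutation: it probes with the single state $C=(a\lor b)\cdot(1\lor2\lor3\lor4)^{N-1}$, finds the one output position $i$ that is pure, composes with the single swap $S_{1,i}$, and then re-runs the \emph{argument} of the proof of \cref{toy:org:trafo} (not the theorem as a black box) to split off the first system as $S_{1,i}T=T_1\cdot T_{N-1}$, finishing by induction on $T_{N-1}$. You instead extract the whole permutation $\sigma$ up front and then apply \cref{toy:org:trafo} once to $S_\sigma^{-1}\circ T$; the cost is precisely your steps (i)--(iii) (well-definedness of $\sigma$, its bijectivity, and marginal preservation on all states), which the paper's recursion never needs as standalone statements because it discharges one subsystem per step. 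Two remarks. First, your resolution of (i) needs more than cardinality and inclusion monotonicity: if the probe $(1\lor2)$ on system $i$ landed pure at position $j$ while the overlapping probe $(1\lor3)$ landed pure at $k\neq j$, the overlap of the two hypothetical images would have exactly the cardinality ($4^{N-1}$) that bijectivity demands, so counting alone yields no contradiction; one genuinely needs the multi-pure product states you mention, the forced product form of their images, and injectivity of $T$ on sets of ontic states (the hypothetical configuration forces two distinct probe states to have identical images). Second, in fairness, the paper leans on the same rigidity fact implicitly when it asserts that ``the same argument as in the proof above holds'' for $S_{1,i}T$ --- that argument requires \emph{every} probe pure on system 1, not just $C$, to map to a state pure at the same position. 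On balance your route is correct and more modular, since it uses \cref{toy:org:trafo} as stated rather than its proof, but it is longer; the paper's induction is leaner because each step needs only one probe and the global $\sigma$ never has to be constructed.
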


                This corollary characterizes the non-entangling
                transformations in the original formulation of the toy theory. Entangling transformations, on the other hand, are
                easier to characterize in the stabilizer formalism.

\subsection{Stabilizer formalism}

          \paragraph{Conditions for allowed transformations.} In the stabilizer picture, any 
             physical
            transformation can be written as
            a permutation matrix $P$ that maps an ontic state
            $\vec{v}$ to another, $P \vec{v}$. If $\vec{v}$ is stabilized by
            $g$ then $P \vec{v}$ is stabilized by $PgP^{\dagger}$,
            because $P$ is unitary. To ensure that this is still a
            valid stabilizer we require that $PgP^{\dagger} \in G_N$.
            We want to ensure that stabilizer groups get mapped to
            stabilizer groups, therefore, we require that $PgP^{\dagger}$ and
            $PhP^{\dagger}$ commute if and only if $g$ and $h$
            commute. These two conditions fully characterize the
            subgroup of permutations that are valid \cite{Pusey_2012}. 
            In particular, a valid transformation $P$ is fully characterized by
            its action on a generating set of the toy Pauli group $G_N$.

            \paragraph{Toy CNOT gate.}
             The two-system permutation
                \renewcommand{\arraystretch}{0.7}
                \begin{tikzpicture}[overlay, remember picture, yshift=-1cm, shorten >=.5pt, shorten <=.5pt]
                    \draw [<->] ([yshift=2.5pt]{pic cs:a})  to ([yshift=2.5pt]{pic cs:b});
                    \draw [<->] ([yshift=2.5pt]{pic cs:c})  to ([yshift=2.5pt]{pic cs:d});
                    \draw [<->] ([yshift=3.5pt]{pic cs:f})  to ([yshift=3.5pt]{pic cs:i});
                    \draw [<->] ([yshift=1.5pt]{pic cs:e})  to ([yshift=1.5pt]{pic cs:j});
                    \draw [<->] ([yshift=2.5pt]{pic cs:g})  to ([yshift=2.5pt]{pic cs:l});
                    \draw [<->] ([yshift=2.5pt]{pic cs:h})  to ([yshift=2.5pt]{pic cs:k});
                \end{tikzpicture}
                    \begin{equation}
                        \begin{tabular}{|C|C|C|C|}
                            \hline
                            \tikzmark{f}& \tikzmark{g} & \tikzmark{i}&  \tikzmark{k} \\\hline
                            \tikzmark{e}& \tikzmark{h} &\tikzmark{j} &  \tikzmark{l} \\\hline
                             & \tikzmark{b} & &  \tikzmark{d} \\\hline
                             & \tikzmark{a} & &  \tikzmark{c} \\\hline
                          
                        \end{tabular}
                    \end{equation}
                    can be seen as an analog to the CNOT-gate.
            In
            the stabilizer formalism the CNOT gate can 
            be defined by the action on generators of $G_2$ \cite{Pusey_2012}:
            \begin{align}
                \begin{split}
                    \mathcal{X}_1 \to \mathcal{X}_1 \mathcal{X}_2, \ \mathcal{X}_2 \to \mathcal{X}_2\\
                    \mathcal{Z}_1 \to \mathcal{Z}_1, \ \mathcal{Z}_2 \to \mathcal{Z}_1\mathcal{Z}_2
                \end{split}.
            \end{align}

\paragraph{Clifford group.} The group of valid physical transformations in the toy theory is analogous to the quantum Clifford group. 
In quantum mechanics, the transformations of stabilizer states are elements of the Clifford group $C_N$, which is the group of  all unitary operations $U$ fulfilling
$ U^{\dagger}gU \in P_N, \quad \forall g \in P_N$, where $P_N$ is the Pauli group for $N$ qubits.
 These unitary operations transform stabilizer groups into stabilizer groups \cite{Gottesman_1998}. The same applies in the toy theory:
\begin{restatable}[\cite{Pusey_2012}]{lemma}{stabgroups}\label{stab:groups} 
    A stabilizer group $S = span\{g_1,...,g_k\}$ is transformed to a new stabilizer group $S' = span\{ U^{\dagger}g_1U,...,  U^{\dagger}g_kU\}$ if and only if the transformation $U$ belongs to the set
    $$ C_N = \{ U: U^{\dagger}gU \in G_N, \quad \forall g \in G_N\}, $$
    where $G_N$ is the toy Pauli group.
    Therefore $C_N$ forms the set of allowed unitary transformations in the toy theory. 
\end{restatable}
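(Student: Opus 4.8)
The plan is to prove both implications of the equivalence by reducing the condition \emph{``$U$ maps every stabilizer group to a stabilizer group''} to a condition on the generators of $G_N$. Recall that a toy stabilizer group is a $\sim$-commuting subgroup of $G_N$ that omits $-\id$, and that $U$ acts on operators by conjugation $g \mapsto U^\dagger g U$, which is a group homomorphism respecting products and inverses. Hence $U^\dagger S U$ is again a stabilizer group precisely when three things hold: (i) $U^\dagger S U \subseteq G_N$; (ii) its elements still $\sim$-commute; and (iii) $-\id \notin U^\dagger S U$. Since a stabilizer group is fixed by a generating set, I would check these conditions on generators.

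For the forward direction (stabilizer groups are preserved $\Rightarrow U \in C_N$), I would first note that every toy Pauli satisfies $g^2 = \id$, being a signed tensor product of diagonal $\pm 1$ matrices; so for $g \neq -\id$ the cyclic group $\langle g \rangle = \{\id, g\}$ is a valid, trivially $\sim$-commuting stabilizer group. By hypothesis $U^\dagger \langle g\rangle U = \langle U^\dagger g U\rangle$ is a stabilizer group, which forces $U^\dagger g U \in G_N$; together with $U^\dagger(-\id)U = -\id \in G_N$ this gives $U^\dagger g U \in G_N$ for every $g \in G_N$, i.e.\ $U \in C_N$.

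For the converse, suppose $U \in C_N$. Condition (i) is immediate, since the generators map into $G_N$ and conjugation is a homomorphism, so the whole subgroup lands in $G_N$. Condition (iii) follows because $U^\dagger g U = -\id$ would imply $g = U(-\id)U^\dagger = -\id \notin S$. The remaining condition (ii), preservation of $\sim$-commutation, is the crux and the main obstacle.

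The difficulty is that, unlike the quantum Clifford case where $U^\dagger g U \in P_N$ automatically preserves commutation (conjugation preserves operator commutators), in the toy theory every element of $G_N$ is diagonal and literally commutes, so conjugation says nothing about the \emph{artificial} relation $\sim$ defined through the map $m$ into $P_N$. Concretely, after quotienting by the global sign $\{\pm\id\}$, conjugation by $U \in C_N$ descends to a linear automorphism $L$ of the observable space $\mathbb{F}_2^{2N}$ (via the dictionary of \cref{toy:gen:stabcorrespondence}), and $g \sim h$ holds exactly when the associated observables are orthogonal for the symplectic inner product of \eref{eq:j}. Thus preserving $\sim$ is equivalent to $L$ being symplectic, which is \emph{not} forced by $U \in C_N$ alone: for $N \geq 2$ there are linear bijections of phase space preserving $G_N$ but not the symplectic form. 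The way I would close the argument is to extract $\sim$-preservation from the hypothesis in the forward direction --- applying it to two-generator stabilizer groups $\langle g, h\rangle$ with $g \sim h$ and $h \neq \pm g$ forces $U^\dagger g U \sim U^\dagger h U$ --- and to build this symplectic/commutation requirement into $C_N$ for the converse, which is exactly the second condition singled out in the discussion preceding the lemma. This mirrors the quantum Clifford group, where the same condition is automatic, and identifies $C_N$ as the group of valid toy transformations.
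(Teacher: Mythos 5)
Your analysis is sound, and it takes a genuinely different --- and in fact more careful --- route than the paper, whose own proof does not survive the objection you raise. The paper's entire proof is one line: it writes $[U^{\dagger}g_iU,U^{\dagger}g_jU] = U^{\dagger}[g_i,g_j]U = 0$ and concludes that $S'$ is a stabilizer group. Read with the literal matrix commutator this is vacuous: every element of $G_N$ is diagonal, so the identity holds for trivial reasons and says nothing about the relation $\sim$ that actually enters the definition of a toy stabilizer group. Read instead with the bracket pulled back through the map $m$, the middle equality is exactly the unproven (and, as you observe, generally false) claim that conjugation preserves $\sim$. The paper moreover never proves the ``only if'' direction at all --- your argument via the single-generator groups $\{\id,g\}$ supplies it --- and never checks that $-\id\notin S'$, which you dispatch as condition (iii).

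Your diagnosis of the converse is also correct, and it shows that \cref{stab:groups} is false as literally stated once $N\geq 2$: the permutation of ontic states induced by an affine map $\vec m\mapsto L\vec m+\vec b$ with $L\in \mathrm{GL}(2N,\mathbb{Z}_2)$ sends every signed character of $\mathbb{Z}_2^{2N}$ to a signed character, hence lies in $C_N$ as displayed, while its induced action on observables preserves symplectic orthogonality (equivalently $\sim$, by \cref{toy:gen:stabcorrespondence}) only if $L$ is symplectic. Since $\mathrm{Sp}(2N,\mathbb{Z}_2)$ is a proper subgroup of $\mathrm{GL}(2N,\mathbb{Z}_2)$ for $N\geq 2$, one can pick a non-symplectic $L$ and two distinct nonzero symplectically orthogonal observables $\vec f,\vec g$ whose images are not orthogonal; the group generated by the two positive-sign toy Paulis corresponding to $\vec f$ and $\vec g$ is then a valid stabilizer group (four elements, none equal to $\pm\id$ except $\id$ itself), yet its image under conjugation fails to $\sim$-commute and so is not a stabilizer group. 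Your proposed repair --- adding to $C_N$ the requirement that $U^{\dagger}gU$ and $U^{\dagger}hU$ commute iff $g$ and $h$ do --- is precisely the second condition the paper itself states in the paragraph preceding the lemma (and which Pusey imposes); with that amendment your two-directional argument goes through, whereas the paper's one-line argument still would not.
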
 
Any transformation can be determined by its action on a
generating set of the toy Pauli group $G_n$, for example
$\{\mathcal X_k, \mathcal Z_k\}_{k=1,\dots,N}$. The quantum Clifford group is generated by the Hadamard gate, phase gate and the CNOT gate \cite{Gottesman_1998};           analogously,  all valid physical transformations in the toy theory are generated by the toy CNOT gate and single system transformations \cite{Pusey_2012}.

\subsection{Generalizing to arbitrary dimensions}
\label{toy:gen:trans}

\paragraph{Reversible physical transformations.}
        The underlying theory of ontic states is symplectic, therefore
        allowed transformations must also be symplectic, that is they
        should conserve the symplectic inner product. Because allowed
        transformations have to map epistemic states to epistemic states,
        the transformations must be affine
        \begin{equation}
            \vec{m} \to S \vec{m} + \vec{a}
        \end{equation}
where $\vec{a} \in \Omega$ and $S$ is a symplectic matrix, that is $S^T J S = J$, where $J$ is given by~\cref{eq:j}. Thus, $S$ conserves symplectic inner products. Importantly, $S$ also has an inverse if $d$ is not prime, because $S^{-1} = J^T S^T J$, and does not require for the numbers $a\in\mathbb{Z}_d$ to have an inverse (which does not necessarily exist if $d$ is not prime). These transformations are the equivalent of unitary transformations in quantum theory~\cite{SpekkensFoundations2016}.

        The probability distribution $\mu$ corresponding to the
        epistemic state $(V, \vec{v})$ is transformed as \cite{SpekkensFoundations2016}
        \begin{equation}
            \mu(\vec{m}) \to \mu'(\vec{m}) = \mu(S^{-1} \vec{m} - \vec{a}).
        \end{equation}
        From this definition we can derive the transformation
        rule of $(V,\vec{v})$, and determine the structure of the transformed vector space.
        
        \begin{restatable}[{Reversible transformations (generalized)}]{lemma}{toygentrafo}\label{toy:gen:trafo}
            Valid reversible transformations are \emph{sympletic transformations} represented by a pair $(S, \vec a)$, where $a \in \Omega$ and $S$ is a sympletic matrix. 
             An epistemic state $(V,\vec{v})$ transforms under such a symplectic
            transformation as
            \begin{equation}
                (V,\vec{v}) \to (\ (S^T)^{-1} V,\  S(\vec{v}+\vec{a})\ ).
            \end{equation}
        \end{restatable}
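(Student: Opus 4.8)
The plan is to prove the lemma in two stages: first the characterization — that every valid reversible transformation must have the affine symplectic form $(S,\vec a)$ — and then the update rule, closing with a check that any such pair does send valid states to valid states. Throughout, a valid reversible transformation is a bijection $T\colon \Omega \to \Omega$ of the ontic phase space that carries the ontic basis of every valid epistemic state onto the ontic basis of another valid state and that conserves the symplectic inner product, as required of admissible transformations in the paragraph preceding the lemma.

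\emph{Affineness.} First I would establish the affine form from reversibility together with preservation of validity. The ontic basis of $(V,\vec v)$ is the coset $V^{\perp}+\vec v$, whose cardinality is $|V^{\perp}|$ and thus depends only on the rank $k$ of the isotropic space $V$; in the fine-grained discrete case it equals $d^{2n-k}$, which strictly decreases as $k$ grows from $0$ (fully mixed, the whole space) to $n$ (pure). Since $T$ is a bijection it preserves cardinalities, so it sends the ontic basis of a rank-$k$ state to that of another rank-$k$ state, and in particular permutes the hyperplanes (the $k=1$ ontic bases $\{\vec m : \vec f^T\vec m = c\}$). A bijection of the affine space $\Omega$ mapping hyperplanes to hyperplanes is, by the fundamental theorem of affine geometry, a semiaffine map $\vec m \mapsto S\vec m + \vec a$ with $S$ a bijective semilinear transformation; because the only ring automorphism of $\mathbb{R}$ and of $\mathbb{Z}_d$ is the identity, the accompanying automorphism is trivial and $S$ is genuinely linear and invertible.

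\emph{Symplecticity.} With $T(\vec m)=S\vec m+\vec a$ in hand, I compute the induced (Heisenberg) action on observables: the functional $\vec f$ pulls back along $T^{-1}$ to $(S^T)^{-1}\vec f$, dropping the additive constant (which may be set to zero, as in the text). Hence an isotropic $V$ is carried to $(S^T)^{-1}V$, and preservation of validity forces $(S^T)^{-1}$ to map isotropic spaces to isotropic spaces, which for $n\ge 2$ already constrains $S$ to a symplectic similitude $S^T\vec J S = \lambda\vec J$. To pin the scale $\lambda = 1$ I invoke the defining requirement that transformations conserve the symplectic inner product, i.e. $\langle (S^T)^{-1}\vec f,(S^T)^{-1}\vec g\rangle = \langle \vec f,\vec g\rangle$ for all $\vec f,\vec g$; this reads $S^{-1}\vec J(S^{-1})^T = \vec J$, equivalently $S^T\vec J S = \vec J$. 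Thus $S$ is symplectic, and (as noted in the text) $S^{-1}=\vec J^T S^T \vec J$ exists even when $d$ is not prime.

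\emph{Update rule and sufficiency.} For the converse and the explicit rule, I push the ontic basis of $(V,\vec v)$ forward under a symplectic pair $(S,\vec a)$: its image is $S V^{\perp} + S\vec v + \vec a$. The short computation $((S^T)^{-1}\vec f)^T\vec m = \vec f^T S^{-1}\vec m$ shows $\vec f^T S^{-1}\vec m = 0$ for all $\vec f\in V$ iff $\vec m\in S V^{\perp}$, so that $S V^{\perp} = ((S^T)^{-1}V)^{\perp}$. Writing $V' := (S^T)^{-1}V$, the image coset is exactly $(V')^{\perp} + S(\vec v+\vec a)$, the ontic basis of $(V',\,S(\vec v+\vec a))$, which is the claimed rule. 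Finally $V'$ is isotropic, since $S^{-1}$ is again symplectic and $\langle (S^T)^{-1}\vec f,(S^T)^{-1}\vec g\rangle = \vec f^T S^{-1}\vec J(S^{-1})^T\vec g = \vec f^T\vec J\vec g = 0$; this is consistent with the quoted probability rule $\mu(\vec m)\to\mu(S^{-1}\vec m-\vec a)$. I expect the main obstacle to lie in \emph{Affineness}: applying the fundamental theorem of affine geometry over the ring $\mathbb{Z}_d$ for non-prime $d$, where $\Omega$ is only a module and coarse-grained observables spoil the clean rank-to-cardinality correspondence. I would handle this by restricting the cardinality argument to fine-grained observables and appealing to the module version of the theorem together with the triviality of $\mathrm{Aut}(\mathbb{Z}_d)$; separating the genuine symplectic condition from the similitude factor $\lambda$ is the other delicate point, resolved by the conservation postulate rather than by validity alone.
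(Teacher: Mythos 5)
Your derivation of the update rule is correct and, in substance, the same computation as the paper's, run in the opposite direction: the paper characterizes the new state via the support of the transformed distribution, showing $\mu'(\vec{m})\neq 0 \iff \vec{m}\in SV^{\perp}+S(\vec{v}+\vec{a})$, and then computes $(SV^{\perp})^{\perp}=(S^T)^{-1}V$ using the biduality lemma $(V^{\perp})^{\perp}=V$ (\cref{toy:gen:lin4}); you instead push the ontic basis forward and verify directly that $SV^{\perp}=\bigl((S^T)^{-1}V\bigr)^{\perp}$ from $((S^T)^{-1}\vec{f})^T\vec{m}=\vec{f}^TS^{-1}\vec{m}$. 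Your direction is marginally cleaner, since it never invokes biduality --- the one ingredient whose justification for non-prime $d$ rests on the exact-rings result cited in \cref{toy:gen:lin4} --- and your isotropy check for $(S^T)^{-1}V$ is exactly the paper's \cref{toy:gen:space}. The genuinely different content is your attempt to \emph{prove} the characterization clause (that valid reversible maps must be affine and symplectic): the paper does not prove this at all, but asserts affineness and symplectic-product conservation as modeling postulates in the paragraph preceding the lemma. Your observation that validity alone can only force a symplectic similitude $S^T\vec{J}S=\lambda\vec{J}$ (e.g.\ $S=\mu\id$ preserves all valid states), with $\lambda=1$ supplied only by the conservation postulate, is a worthwhile sharpening of what the postulate actually buys.

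Three soft spots in that extra part, none fatal to the lemma as the paper scopes it. First, your rank-preservation step rests on ``a bijection preserves cardinalities,'' which fails in the continuous case: cosets of different dimension in $\mathbb{R}^{2n}$ are equinumerous, so there you would need dimension preservation by another route (e.g.\ the inclusion lattice of intersections of hyperplanes, using that $T^{-1}$ is also valid so that $T$ maps hyperplanes \emph{onto} hyperplanes). Second, the fundamental theorem of affine geometry over the ring $\mathbb{Z}_d$ for non-prime $d$ is not available off the shelf; you flag this yourself, and since the paper treats affineness as an assumption, it is legitimate to assume it rather than derive it. Third, a convention slip: with $T(\vec{m})=S\vec{m}+\vec{a}$ the pushed-forward valuation is $S\vec{v}+\vec{a}$, not $S(\vec{v}+\vec{a})$; the lemma's formula corresponds to $T(\vec{m})=S(\vec{m}+\vec{a})$, the convention implicit in the rule $\mu'(\vec{m})=\mu(S^{-1}\vec{m}-\vec{a})$. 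The paper itself mixes these two parametrizations (they describe the same family after relabeling $\vec{a}\mapsto S\vec{a}$), but your proof should fix one and use it consistently.
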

       
        \begin{restatable}[Validity of transformed  states]{lemma}{toygenspace}\label{toy:gen:space}    
            The transformed vector space defined in
            \cref{toy:gen:trafo} is isotropic.
        \end{restatable}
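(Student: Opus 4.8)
The plan is to start from the transformation rule proved in \cref{toy:gen:trafo}, according to which a symplectic transformation $(S,\vec a)$ maps the vector space of the state to $V' = (S^T)^{-1} V = \{(S^T)^{-1}\vec f : \vec f \in V\}$. Since $V'$ is the image of the subspace (or submodule) $V$ under the linear map $(S^T)^{-1}$, and this map is invertible even for non-prime $d$ because $S^{-1} = \vec J^T S^T \vec J$ exists, $V'$ is automatically again a subspace/submodule. The only real content of the lemma is therefore the isotropy condition, $\langle \vec f',\vec g'\rangle = 0$ for all $\vec f',\vec g' \in V'$.

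First I would take arbitrary $\vec f' = (S^T)^{-1}\vec f$ and $\vec g' = (S^T)^{-1}\vec g$ with $\vec f,\vec g \in V$, and expand the symplectic inner product, using $\big((S^T)^{-1}\big)^T = S^{-1}$:
\[
\langle \vec f',\vec g'\rangle = \vec f'^T\, \vec J\, \vec g' = \vec f^T\, S^{-1}\, \vec J\, (S^T)^{-1}\, \vec g .
\]
This reduces the entire statement to the single matrix identity $S^{-1}\, \vec J\, (S^T)^{-1} = \vec J$. The core step is to derive this from the symplectic condition $S^T \vec J S = \vec J$ recorded in \cref{toy:gen:trafo}. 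I would first note the elementary property of the form matrix $\vec J$ of \cref{eq:j}, namely $\vec J^2 = -\id$, so that $\vec J^{-1} = -\vec J$; this is an entrywise identity and therefore holds equally over $\mathbb{R}$ and over $\mathbb{Z}_d$. Taking inverses on both sides of $S^T \vec J S = \vec J$ gives $S^{-1}\, \vec J^{-1}\, (S^T)^{-1} = \vec J^{-1}$; substituting $\vec J^{-1} = -\vec J$ and cancelling the common overall sign yields exactly $S^{-1}\, \vec J\, (S^T)^{-1} = \vec J$. Substituting back into the display gives $\langle \vec f',\vec g'\rangle = \vec f^T \vec J \vec g = \langle \vec f,\vec g\rangle$, which vanishes because $V$ is isotropic, completing the proof.

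The main obstacle is not the algebra but ensuring that every manipulation remains legitimate over the ring $\mathbb{Z}_d$ when $d$ is not prime: the identity $S^{-1}\vec J (S^T)^{-1} = \vec J$ must be obtained purely through additions, multiplications and the explicitly available inverses, never through a division or a determinant argument, and the invertibility of $S$ (hence of $S^T$ and of the product $S^T\vec J S$) must be invoked through $S^{-1} = \vec J^T S^T \vec J$ rather than through $\det S \neq 0$. Since the derivation uses only $S^T \vec J S = \vec J$, the entrywise identity $\vec J^2 = -\id$, and transposition and inversion of invertible matrices --- all available over $\mathbb{Z}_d$ --- the argument holds uniformly in the continuous, discrete-prime and discrete-non-prime cases.
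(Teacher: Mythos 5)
Your proof is correct and takes essentially the same route as the paper's: both reduce isotropy of $V' = (S^T)^{-1}V$ to the single matrix identity $S^{-1}\,\vec J\,(S^T)^{-1} = \vec J$, derived from the symplectic condition $S^T \vec J S = \vec J$, and then conclude $\langle \vec f', \vec g'\rangle = \vec f^T \vec J\, \vec g = 0$. The only (immaterial) difference is in how that identity is obtained --- the paper substitutes the explicit inverses $S^{-1} = \vec J^T S^T \vec J$ and $(S^T)^{-1} = \vec J^T S \vec J$ and simplifies using $\vec J^T \vec J = \id$, whereas you invert both sides of the symplectic condition and use $\vec J^{-1} = -\vec J$; both manipulations are equally valid over $\mathbb{R}$ and over $\mathbb{Z}_d$ for arbitrary $d$.
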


        \paragraph{Irreversible transformations.}
        Similarly to CPTP maps in quantum mechanics there are maps
        where the information about the system is not preserved. These
        maps are defined in analogy to the Stinespring dilation for
        CPTP maps. An irreversible map can be seen as a reversible map
        applied to the system and an ancilla, resulting in the probability
        distribution $\mu'(\vec{m}_s,\vec{m}_a )$, where $\vec{m}_s$ is
        the ontic state of the system, and $\vec{m}_a $ the ontic state
        of the ancilla. We then marginalize over the ancilla system
        $\mu'(\vec{m}_s) = \sum_{\vec{m}_a}
        \mu'(\vec{m}_s,\vec{m}_a )$~\cite{SpekkensFoundations2016}.

\newpage
\section{Discussion}
\label{sec:conclusions}

 \subsection{Relation to other theories}

\paragraph{Mimicking quantum behaviour.} We saw that the toy theory is equivalent to the stabilizer sub-theory of quantum mechanics. %\lidia{is this true? only for some dimensions?}.
The toy theory reproduces  many quantum  phenomena~\cite{Spekkens_2007,SpekkensFoundations2016,Catani_2017,Disilvestro_2017}
including  interference, non-commutativity of observables, coherent superposition, collapse of the wave function, complementary bases, no-cloning~\cite{Wootters_1982}, teleportation~\cite{Bennett_1993}, key distribution~\cite{Bennett_2014}, locally indistinguishable product bases~\cite{Bennett_1999}, unextendible product bases~\cite{Bennett_1999_2}, some aspects of entanglement, the Choi-Jamio{\l}kowski isomorphism between states and operations~\cite{Choi_1975,Jamio_kowski_1972}, the Stinespring dilation of irreversible operations~\cite{Stinespring_1955}, error
   correction  and $k$-threshold key sharing~\cite{Disilvestro_2017}. 
However, there are quantum features that cannot be reproduced in the toy theory, such as Bell inequality violations~\cite{Bell_1964} and contextuality~\cite{Kochen,Spekkens_2005,Liang_2011}. This is because the toy theory is a local hidden variable theory, and its underlying ontological theory is non-contextual.\footnote{It has been possible to extend Spekkens' toy model to produce a Bell inequality violation; however, this extension is no longer epistemically restricted~\cite{van_Enk_2007}.}

\paragraph{Quadrature quantum mechanics.} In some cases Spekkens' toy theory is equivalent to quadrature quantum mechanics, a subtheory of quantum theory~\cite{SpekkensFoundations2016,Catani_2017}. The starting point of quadrature quantum mechanics are the
    projectors onto the position and momentum observables:
    \begin{equation}
        \Pi_{q}(\texttt{q})= \proj{ \texttt{q}}{ \texttt{q}}, \qquad
        \Pi_{p}(\texttt{p})= \proj{\texttt{p}}{ \texttt{p}},
    \end{equation}
    where we are considering a single (discrete or continuous) quantum system, $\{\ket{\texttt q}\}_{\texttt q }$ is an arbitrary basis which we name ``position basis'', and the momentum basis $\{\ket{\texttt p}\}_{\texttt p }$ is a conjugate basis, satisfying $\inprod{\texttt q}{\texttt p}= \frac 1{2 \pi \hbar} e^{i \frac{\texttt q\texttt p}{\hbar}}$ in the continuous case and $\inprod{\texttt q}{\texttt p}= \frac 1{\sqrt{d}} e^{i 2 \pi \frac{\texttt q\texttt p}{d}}$ in the discrete case, where $d$ is the dimension of the system. 
    To form the quadrature subtheory, we restrict ourselves to the subset of states of our Hilbert space for each we either know the position or momentum, that is, the states that can be represented by one of these projectors. 
    In other words, the valid states in this subtheory are those analogous
    to states in Spekkens' toy theory where one knows the value of the
    variables $q \sim \texttt{q}$ or $p \sim \texttt{p}$ respectively.
    If we have several such quantum systems, the valid states of the subtheory are those given by the tensor product of some of these projectors. It can be
    shown that for any two sets of commuting variables, there exists a
    symplectic transformation $S$ which transforms one set into the
    other. Therefore, with the the unitary
    representation of the symplectic group, we can apply this to the
    above projectors and find states analogous to any epistemic state
    $(V,\vec{v})$.
    The allowed transformations are given by the unitary
    representation of phase space displacements and the symplectic
    group; quadrature quantum mechanics is closed under these transformations. 
    All allowed measurements are sets of projectors which
    are obtained by applying a transformation to the projector valued
    measurement $\{\Pi_{q}(\texttt{q})\}_{\texttt{q}}$ \cite{SpekkensFoundations2016}.
    For even values of the dimension $d$ of individual systems, this connection cannot be made, since 
   then quadrature quantum mechanics becomes contextual, and thus cannot be
   equivalent to the non-contextual toy theory~\cite{Catani_2017}.

\paragraph{Liouville mechanics and Gaussian quantum mechanics.}
A classical example of a theory that distinguishes between ontic and epistemic states is Liouville mechanics~\cite{Liouville1838}. There, the epistemic state is characterized by a probability distribution over the phase space of ontic states. 
If we take Liouville mechanics and introduce an 
epistemic restriction similar to the Heisenberg uncertainty principle, we end up with a theory equivalent to Gaussian quantum mechanics~\cite{Bartlett_2012}.

\subsection{Summary of new results} 
\label{sec:discussion:results}

Besides the review, this paper introduced the following new technical contributions, which we hope bring Spekkens' toy theory into a more refined state for future work: 
\begin{enumerate}
    \item[Theorem~\ref{toy:org:number}] We translated the knowledge balance principle of the entire state into a statement about sets, and used it to find a necessary condition for epistemic states.

    \item[Corollary~\ref{toy:org:perm}] We proved in the original formulation that non-entangling transformations are decomposed into local transformations and system swaps.
    
    \item[Lemma~\ref{toy:gen:trafo}] We translated the definition of transformation: from an action on the probability distribution over phase space into an action on the epistemic state.

    \item[Section \ref{toy:gen:sup_mix}] We generalized the definition of superpositions and mixtures from the stabilizer formalism to the generalized framework. However, we were only able to prove soundness of the definition of superpositions in the case where the dimension is prime and the continuous case, and we leave the other cases for future work.
    
    \item[Definition~\ref{def:gen:ent}] We defined entanglement in the original formulation and its generalization to arbitrary dimensions. As far as we're aware, previously there was no definition of entanglement, but some epistemic states were known to be analogous to Bell states in quantum mechanics. We used this as a starting point for our definition of entanglement.
    
    \item[Lemma~\ref{toy:org:ent}] We introduced a criterion for entanglement in the original formulation, based on a similar statement for stabilizer quantum mechanics.
    
    \item[Theorem~\ref{toy:gen:measurementupdate}] We simplified the measurement update rule and unified the prime and non-prime dimensional cases. 
\end{enumerate}

\subsection{Open questions and next steps}

\paragraph{Observer and memories. } This review ends where our next paper begins: the groundwork has been laid to model observers' memories as physical systems, and each measurement as  the joint evolution of the system measured and the memory. As we will see, this is not without its peculiarities. For example, we will see that it is impossible for a physical agent in the toy theory to model arbitrary probability distributions (and mixtures of states). 

\paragraph{Dynamics.} Is there an equivalent of the Schr\"odinger equation, and Hamiltonian operators, for the toy theory? What constraints should they satisfy? As far as we know these are open questions which would help characterize the theory better. 

\paragraph{Subsystem decomposition.} In quantum mechanics, we can factor a global Hilbert space as the tensor product of several subsystems in uncountable ways: a state can be entangled in a factorization $\hilbert \sim \hilbert_A \otimes \hilbert_B$, and product in a different decomposition $\hilbert \sim \hilbert_C \otimes \hilbert_D$. In the toy theory, the subsystem structure is imposed rigidly, but it would be interesting to relax this assumption and allow for multiple subsystem decompositions of a a global system.   For example, consider an epistemic state like that of (\ref{eq:invalid_local_states}), which is valid if taken globally, but invalid at the level of our pre-determined subsystem decomposition (because the knowledge balance principle doesn't hold for every subsystem). However, there is a mapping to another decomposition of global ontic states into different subsystems for which the epistemic state is valid. This tells us that the validity of an epistemic state is relative to a choice of subsystem decomposition of the global space. It could be fruitful to study the implications of this, both in the toy theory and in quadrature quantum theory.

\paragraph{Where to learn more.} In this review we went over the formalism(s) of the toy theory. To explore applications like how to implement different information-processing tasks within the theory, we recommend the following resources: 
\begin{itemize}
    \item[\cite{Spekkens_2007}] The original article by Spekkens covers protocols for teleportation, dense coding, steering and entanglement swapping. It also includes proofs of the no-cloning theorem, lack of Bell inequality violations and non-contextuality of the theory. 
    
    \item[\cite{Pusey_2012}] Wherein Pusey introduces the stabilizer formalism and proves its equivalence to the original formalism for the toy theory. It also introduces the toy  Mermin-Peres square. 
    
    \item[\cite{SpekkensFoundations2016, Catani_2017}] Equivalence of the toy theory to stabilizer quantum theory (for odd prime dimensions in Spekkens' article \cite{SpekkensFoundations2016} and generalized to odd integers by  Catani and Browne \cite{Catani_2017}). Spekkens also proves that quadrature and stabilizer quantum mechanics are equivalent.
\end{itemize}

\vspace{0.7cm} 

\subsection*{Acknowledgements} 
We thank Rob Spekkens and Lorenzo Catani for valuable discussions.
NN and LdR acknowledge support from the Swiss
National Science Foundation through 
SNSF project No.\ $200020\_165843$ and through 
the National Centre of
Competence in Research \emph{Quantum Science and Technology}
(QSIT). 
LdR further acknowledges support from  the FQXi large grant \emph{Consciousness in the Physical World}. 

\subsection*{Author contributions}
This review is the first part of LH's semester project as a masters student. As such, the bibliographic research, novel technical contributions and first draft of the paper are hers. LdR and NN have supervised her and revised the manuscript. 

%............. Appendix ..............

\newpage
\appendix

\addcontentsline{toc}{section}{\sc{Appendix}}

\section{Linear algebra lemmas}
\label{appendix:la}
Here we list the results from linear algebra we used in the refinement of the generalization of Spekkens' toy theory.

  \begin{lemma}[\cite{Fischer_2014}]\label{toy:gen:lin1}
            Let $W \subset \Omega$ be a subvector space or
            submodule and $\vec{w} \in \Omega$. Then for any $\vec{a} \in W+\vec{w}$

            \begin{equation}
                W + \vec{w} = W + \vec{a}
            \end{equation}
        \end{lemma}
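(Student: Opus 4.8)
The plan is to treat this as the standard fact that an affine coset of the additive subgroup $W$ is determined by any one of its elements as a base point, and then to check that the argument uses \emph{only} the additive-group structure of $W$ (closure under addition and under additive inverses), so that it applies uniformly to both the subvector-space case and the submodule case. This last point is the whole reason the lemma is worth stating carefully in this paper: since $\mathbb{Z}_d$ need not be a field, I want a proof that never touches scalar inverses.

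First I would unpack the hypothesis $\vec{a} \in W + \vec{w}$. By definition of the affine set $W + \vec{w} = \{\vec{x} + \vec{w} : \vec{x} \in W\}$, this means precisely that $\vec{a} = \vec{x}_0 + \vec{w}$ for some $\vec{x}_0 \in W$, equivalently $\vec{a} - \vec{w} \in W$. This single observation carries all the content of the lemma.

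Then I would establish the two inclusions. For $W + \vec{w} \subseteq W + \vec{a}$, I take an arbitrary element $\vec{y} + \vec{w}$ with $\vec{y} \in W$ and rewrite it as $(\vec{y} - \vec{x}_0) + \vec{a}$; since $W$ is closed under subtraction, $\vec{y} - \vec{x}_0 \in W$, so the element lies in $W + \vec{a}$. For the reverse inclusion $W + \vec{a} \subseteq W + \vec{w}$, I take $\vec{z} + \vec{a}$ with $\vec{z} \in W$ and rewrite it as $(\vec{z} + \vec{x}_0) + \vec{w}$; closure under addition gives $\vec{z} + \vec{x}_0 \in W$, so the element lies in $W + \vec{w}$. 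The two inclusions together give the claimed equality.

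There is essentially no obstacle here beyond bookkeeping; the only thing requiring genuine attention is to confirm that nothing more than the additive-group axioms of $W$ is ever invoked. Because closure under addition and under additive inverses holds equally for a subvector space and for a submodule, the same computation settles both cases simultaneously, which is exactly the uniform treatment the surrounding discussion wants.
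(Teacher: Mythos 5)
Your proof is correct and follows essentially the same route as the paper's: both establish the two inclusions by writing $\vec{a} = \vec{x}_0 + \vec{w}$ with $\vec{x}_0 \in W$ and using closure of $W$ under addition and subtraction. Your added remark that the argument uses only the additive-group structure (so it covers the submodule case without needing scalar inverses) is a nice clarification but does not change the substance.
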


        \begin{proof}
            Let $\vec{b} \in W +\vec{w}$ then
            \begin{equation}
                \vec{b} = \vec{w}_1 + \vec{w} 
            \end{equation}
            for some $\vec{w}_1 \in W$. As $\vec{a} \in W + \vec{w}$ we know that 
            \begin{align}
                \begin{split}
                    \vec{a} &= \vec{w_2} + \vec{w} \\
                    \iff \vec{w} &= \vec{a} - \vec{w}_2 
                \end{split}
            \end{align}
            for some $\vec{w}_2 \in W$. Plugging the expression for
            $\vec{w}$ into the expression for $\vec{b}$ we find:
            \begin{equation}
                \vec{b} = \vec{w}_1 - \vec{w}_2 + \vec{a} \in W + \vec{a}
            \end{equation}
            Therefore, $W +\vec{w} \subset W + \vec{a}$.

            Let $\vec{c} \in W + \vec{a}$ then we can write 
            \begin{equation}
                \vec{c} = \vec{w}_3 + \vec{a} = \vec{w}_3 + \vec{w}_2 +\vec{w} \in W + \vec{w}
            \end{equation}
            where we used the expression for $\vec{a}$ from above. From this
            equation we can conclude that $W +\vec{a} \subset W + \vec{w}$.
        \end{proof}

        \begin{lemma}[\cite{Gallier_2011}]\label{toy:gen:lin2}
            Let $W, V \subset \Omega$ be two subvector spaces or
            submodules and $\vec{v},\vec{w} \in \Omega$. Then if $(W +
            \vec{w}) \cap (V + \vec{v}) \neq \emptyset$, it holds that
            
            \begin{equation}
                (W + \vec{w}) \cap (V + \vec{v}) = (W \cap V) + \vec{u}
            \end{equation}
            with $\vec{u} \in (W + \vec{w}) \cap (V + \vec{v})$.
        \end{lemma}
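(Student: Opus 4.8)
The plan is to reduce the problem to the case of two cosets sharing a common base point, and then verify the resulting identity by a direct double inclusion. The nonemptiness hypothesis is exactly what gives us the anchor: first I would pick any $\vec{u} \in (W + \vec{w}) \cap (V + \vec{v})$, which exists precisely because the intersection is assumed nonempty. This $\vec{u}$ will be the translation vector in the statement.

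Next I would re-base both affine spaces at $\vec{u}$ using Lemma~\ref{toy:gen:lin1}. Since $\vec{u} \in W + \vec{w}$, that lemma gives $W + \vec{w} = W + \vec{u}$, and since $\vec{u} \in V + \vec{v}$, it gives $V + \vec{v} = V + \vec{u}$. Thus the intersection we want to compute equals $(W + \vec{u}) \cap (V + \vec{u})$, and the whole task reduces to showing
\begin{equation}
(W + \vec{u}) \cap (V + \vec{u}) = (W \cap V) + \vec{u}.
\end{equation}

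For this identity I would prove two inclusions. For ``$\supseteq$'', any element of $(W\cap V) + \vec u$ has the form $\vec y + \vec u$ with $\vec y \in W \cap V$; since $\vec y \in W$ it lies in $W + \vec u$, and since $\vec y \in V$ it lies in $V + \vec u$, hence in the intersection. For ``$\subseteq$'', take $\vec x$ in the intersection and write $\vec x = \vec w_1 + \vec u$ with $\vec w_1 \in W$ and $\vec x = \vec v_1 + \vec u$ with $\vec v_1 \in V$; subtracting $\vec u$ yields $\vec w_1 = \vec x - \vec u = \vec v_1$, so this common vector lies in $W \cap V$ and therefore $\vec x \in (W\cap V) + \vec u$.

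I expect no deep obstacle here; the only point requiring care is that everything must go through over a module (the non-prime $d$ case), not just a vector space. The argument is safe because it uses only additive group operations --- closure of $W$ and $V$ under addition and the existence of additive inverses --- and never divides by a scalar, so it remains valid when $\mathbb{Z}_d$ is merely a ring. The conceptual heart of the proof is the re-basing step via Lemma~\ref{toy:gen:lin1}, which converts the awkward intersection of two differently-translated cosets into the transparent intersection of two cosets with a shared origin.
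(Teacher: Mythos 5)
Your proof is correct and follows essentially the same route as the paper's: both anchor the intersection at a common point $\vec{u}$ via Lemma~\ref{toy:gen:lin1} and then identify the intersection with $(W \cap V) + \vec{u}$ by comparing representations of elements relative to $\vec{u}$. Your version is slightly more explicit (spelling out the double inclusion and noting that only additive structure is used, so the module case goes through), but there is no substantive difference.
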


        \begin{proof}
            If $(W + \vec{w}) \cap (V + \vec{v}) \neq \emptyset$ then there exists a
            $\vec{u} \in (W + \vec{w}) \cap (V + \vec{v})$.
            \Cref{toy:gen:lin1} allows us to write
            \begin{align}
                \begin{split}
                    W +\vec{w} &= W + \vec{u}\\
                    V + \vec{v} &= V + \vec{u}.
                \end{split}
            \end{align}
            This means each element in $V + \vec{v}$ is of the form
            $\vec{u}_1 = \vec{v}_1 + \vec{u}$ for $\vec{v}_1 \in V$,
            and each element in $W + \vec{w}$ is of the form
            $\vec{u}_2 = \vec{w}_1 + \vec{u}$ for $\vec{w}_1 \in W$.
            Therefore $\vec{u}_1$ is in $W + \vec{w}$ if and only if $\vec{v}_1
            \in W$ and $\vec{u}_2$ is in $V + \vec{v}$ if and only if $\vec{w}_1 \in V$.
            Therefore we can conclude that $(W + \vec{w}) \cap (V + \vec{v}) = (V \cap
            W)+ \vec{u}$. $ $
        \end{proof}

        \begin{lemma}[{\cite{Catani_2017}}]\label{toy:gen:lin3}
            Let $V,W \subset \Omega$ be two subvector spaces or
            submodules then it holds that 
            \begin{equation}
                (V \oplus W)^{\perp} = V^{\perp} \cap W^{\perp}
            \end{equation}
        \end{lemma}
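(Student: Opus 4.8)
The plan is to prove the two set inclusions separately, relying only on the bilinearity of the pairing $(\vec f, \vec m') \mapsto \vec f^T \vec m'$ that defines the orthogonal complement in this section. The crucial observation is that a vector is orthogonal to the span $V \oplus W$ if and only if it is orthogonal to each of the two summands, and this equivalence is purely formal: it uses neither finite-dimensionality nor the existence of inverses, so it should hold verbatim in both the vector-space setting and the module setting ($d$ not prime).

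First I would establish the inclusion $(V \oplus W)^{\perp} \subseteq V^{\perp} \cap W^{\perp}$. Taking any $\vec x \in (V \oplus W)^{\perp}$, by definition $\vec f^T \vec x = 0$ for every $\vec f \in V \oplus W$. Since both $V \subseteq V \oplus W$ and $W \subseteq V \oplus W$, the vector $\vec x$ in particular annihilates every element of $V$ and every element of $W$, which is exactly the statement that $\vec x \in V^{\perp}$ and $\vec x \in W^{\perp}$, hence $\vec x \in V^{\perp} \cap W^{\perp}$.

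For the reverse inclusion $V^{\perp} \cap W^{\perp} \subseteq (V \oplus W)^{\perp}$, I would take $\vec x \in V^{\perp} \cap W^{\perp}$ together with an arbitrary element of the direct sum, written as $\vec v + \vec w$ with $\vec v \in V$ and $\vec w \in W$. Bilinearity of the pairing then gives $(\vec v + \vec w)^T \vec x = \vec v^T \vec x + \vec w^T \vec x = 0 + 0 = 0$, so $\vec x$ is orthogonal to every element of $V \oplus W$, i.e.\ $\vec x \in (V \oplus W)^{\perp}$. Combining the two inclusions yields the claimed equality.

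There is no substantial obstacle here; the only point warranting a moment's care is the module case, where orthogonal complements are less well-behaved than over a field (double-complement and dimension-counting identities can fail, as cautioned elsewhere in the paper). However, since the argument invokes only the distributivity of the pairing over sums and the fact that each element of $V \oplus W$ decomposes as a $V$-part plus a $W$-part, it is insensitive to those pathologies and goes through unchanged for $\Omega = \mathbb{Z}_d^{2n}$.
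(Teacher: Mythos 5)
Your proposal is correct and follows essentially the same route as the paper's own proof: one inclusion via $V, W \subseteq V \oplus W$, the other via decomposing an arbitrary element as $\vec{v} + \vec{w}$ and using linearity of the pairing. Your additional remark that the argument is insensitive to module-theoretic pathologies is a nice observation, but the core argument is identical.
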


        \begin{proof}
            Let $\vec{a} \in (V \oplus W)^{\perp}$ and, therefore, for all vectors
            $\vec{u} \in (V \oplus W)$ it holds that $\vec{a}^{T} \vec{u} = 0$. In
            particular, this holds for $\vec{a} \in V^{\perp}$ and $\vec{a} \in
            W^{\perp}$ as $V$ and $W$ are subsets of $(V \oplus W)$. Therefore, we can conclude that
            $(V \oplus W)^{\perp} \subset V^{\perp} \cap W^{\perp}$.

            Let $\vec{b} \in V^{\perp} \cap W^{\perp}$ and let $\vec{u} \in (V
            \oplus W)$ be arbitrary. Then we find that
            \begin{equation}
                \vec{u}^T \vec{b} = \vec{u}_w^T\vec{b} + \vec{u}_v^T \vec{b} = 0
            \end{equation}
            for $\vec{u}_w \in W$ and $\vec{u}_v \in V$ such that $\vec{u}_w
            + \vec{u}_v = \vec{u}$. Therefore, $V^{\perp} \cap W^{\perp} \subset
            (V \oplus W)^{\perp}$
        \end{proof}

        \begin{lemma}[\cite{Wilding_2013,Lam_2004}]\label{toy:gen:lin4}
            Let $V \subset \Omega$ be a subset or submodule. Then it
            holds that $(V^{\perp})^{\perp} = V$.
        \end{lemma}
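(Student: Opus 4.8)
The plan is to prove the two inclusions $V \subseteq (V^\perp)^\perp$ and $(V^\perp)^\perp \subseteq V$ separately. The first is immediate; the second rests on a nondegeneracy (or duality) input whose justification differs between the continuous, discrete-prime, and discrete-non-prime cases, and it is this split that organizes the whole argument.

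First I would dispatch $V \subseteq (V^\perp)^\perp$. Recall that in this paper $V^\perp = \{\vec{x} \in \Omega : \vec{f}^T \vec{x} = 0 \text{ for all } \vec{f} \in V\}$ is defined through the ordinary symmetric dot product, not the symplectic form. Hence for any $\vec{v} \in V$ and any $\vec{x} \in V^\perp$ we have $\vec{x}^T \vec{v} = \vec{v}^T \vec{x} = 0$, so $\vec{v}$ annihilates every element of $V^\perp$, i.e.\ $\vec{v} \in (V^\perp)^\perp$. This step uses only symmetry of the form and holds verbatim over $\mathbb{R}$ and over $\mathbb{Z}_d$.

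For the reverse inclusion I would split on the nature of $\Omega$. In the continuous case $\Omega = \mathbb{R}^{2n}$, and likewise in the discrete case with $d$ prime (where $\mathbb{Z}_d$ is a field), the dot product is nondegenerate with identity Gram matrix, so $V^\perp$ is the kernel of the surjection $\Omega \to V^*$, $\vec{x} \mapsto (\vec{f} \mapsto \vec{f}^T \vec{x})$, giving $\dim V + \dim V^\perp = 2n$. Applying this twice yields $\dim (V^\perp)^\perp = \dim V$, and combined with the inclusion already proven this forces equality. In the finite cases it is cleaner to argue by cardinality: the pairing $(\vec{x},\vec{y}) \mapsto \vec{x}^T \vec{y}$ composed with the character $k \mapsto e^{2\pi i k/d}$ identifies $\Omega$ with its Pontryagin dual, under which $V^\perp$ is exactly the annihilator subgroup of $V$. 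Perfectness of this pairing gives $|V|\,|V^\perp| = |\Omega| = d^{2n}$; applying it to $V^\perp$ gives $|(V^\perp)^\perp| = |\Omega|/|V^\perp| = |V|$, so the finite sets $V \subseteq (V^\perp)^\perp$ of equal cardinality must coincide.

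The main obstacle is the non-prime case, where $\mathbb{Z}_d$ is only a ring and $V$ a submodule, so there is no dimension to count and the rank--nullity argument breaks down. Here I would lean on the fact that $\mathbb{Z}_d$ is a finite commutative Frobenius (self-injective) ring, which is precisely the property guaranteeing the double-annihilator identity $(V^\perp)^\perp = V$ for submodules of the free module $\mathbb{Z}_d^{2n}$; this is the content supplied by the cited references \cite{Wilding_2013,Lam_2004}, and it is equivalent to the perfect-pairing statement above. An elementary alternative, avoiding any appeal to Frobenius-ring machinery, is to factor $\mathbb{Z}_d \cong \prod_p \mathbb{Z}_{p^{k_p}}$ by the Chinese remainder theorem, reduce the claim componentwise to each $\mathbb{Z}_{p^{k_p}}$, and verify $|V|\,|V^\perp| = |\Omega|$ there using the structure theorem for finite modules over the local ring $\mathbb{Z}_{p^k}$. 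I expect this bookkeeping, rather than any conceptual difficulty, to be the most delicate part of a fully self-contained write-up.
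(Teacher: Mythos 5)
Your proof is correct, but it takes a genuinely different route from the paper, for the simple reason that the paper gives no argument at all: its ``proof'' of this lemma is a pure deferral to the cited references (\cite{Lam_2004} for the continuous and prime cases, \cite{Wilding_2013} for general $d$), whereas you actually supply the mathematics. The easy inclusion $V \subseteq (V^\perp)^\perp$ and the rank--nullity count over fields are both fine. One observation that would streamline your write-up considerably: your Pontryagin-duality cardinality argument nowhere uses primality of $d$. Perfectness of the pairing $(\vec{x},\vec{y}) \mapsto e^{2\pi i\, \vec{x}^T \vec{y}/d}$ and the identity $|H|\,|\mathrm{Ann}(H)| = |\Omega|$ hold for \emph{any} subgroup $H$ of the finite abelian group $\mathbb{Z}_d^{2n}$, and every submodule is in particular a subgroup (while $V^\perp$ is automatically a submodule). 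So that single argument already disposes of the non-prime case, and your final paragraph invoking Frobenius rings or a Chinese-remainder reduction --- though also correct, since $\mathbb{Z}_d$ is indeed a finite commutative quasi-Frobenius ring, which is essentially what \cite{Wilding_2013} provides --- is dispensable; what you flag as ``the main obstacle'' is already covered by an argument you wrote down two sentences earlier. A last nitpick shared by you and the paper: the lemma's ``subset or submodule'' must be read as ``subspace or submodule,'' since $(V^\perp)^\perp$ is always a subspace/submodule and the identity therefore fails for arbitrary subsets; your proof correctly assumes the linear structure throughout.
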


        \begin{proof}
            The proof for this in the case for general $d$ can be found in
            \cite{Wilding_2013}. In the case for general vector spaces 
            ($d$ prime or the continuous case) the proof can be found
            in \cite{Lam_2004}.
        \end{proof}

\section{Proofs of new results}
\label{appendix:toyproofs}
\subsection{States}

Here you can find the proofs of statements listed in \cref{sec:states}.
\toyorgcanonical*
\begin{proof}
                ``$\implies$'' Let $\mathcal{Q}$ be a canonical set of
                questions, then it fulfills that each ontic states is
                uniquely defined through a set of answers. Thus, the
                intersection $\cap_{i} M_{i,a_i}$ will contain exactly
                one element.

                ``$\Longleftarrow$'' Let $\mathcal{Q}$ be a set of
                $2N$ questions such that the correspond set of partitions
                $(M_{i,0},O\setminus M_{i,0})= (M_{i,0},M_{i,1}) \in
                \mathcal{M}(\mathcal{Q})$ that fulfills $|\cap_{i} M_{i,a_i}| =
                1$ for all $2N$-bit strings $a$. No two bit strings
                $a,a'$ generate the same element in the intersection,
                otherwise this element would have to be in a set and
                its complement at the same time. Therefore there are
                $2^{2N}$ different states in the intersection which by
                the pigeon hole principle have to correspond to all
                different ontic states. Thus for each ontic state
                there is a set of answers which characterizes the
                ontic state uniquely.
                $ $
            \end{proof}
            
\toyorgbasis*
\begin{proof}
               For each answer in $a_i \in \mathcal{A}_{\mathcal E}$ we can conclude
               that the state is in $M_{i, a_i}$ and not in $O\setminus
               M_{i,a_i}$. Therefore all compatible states need to be
               contained in the intersection $\cap_{i \in \mathcal E}
               M_{i, a_i}$. 
               Conversely, all states in $\cap_{i \in \mathcal E}
               M_{i, a_i}$ are compatible with all the answers in $\mathcal A_{\mathcal E}$ by definition.
\end{proof}
\toyorgnumber*
\begin{proof}
                Let $\mathcal{P} = \{\mathcal{M}(Q_i)| a_i \notin
               \mathcal{A}_\mathcal{E} \}$
               the set partitions corresponding to questions in
               $\mathcal{Q}$ where the answer an is unknown. Then it
               must hold for each $2N-k$ bit string $\mathbf{b}$ and
               $\mathcal{M}(Q_i) \in \mathcal{P}$, $i \in \{0,...,2N-k\}$ 
               
                \begin{equation}
                    |E \cap (M_{0,b_0} \cap ... \cap M_{2N-k,b_{2N-k}})| = 1.
                \end{equation}
               The single element in the intersection is different for
               each bit string. Therefore, there are $2^{2N-k}$
               different elements in $E$ that correspond to different
               bit strings $\mathbf{b}$. Let there be an ontic state in $E$
               that does not correspond to a bit string $\mathbf{b}$. Then this
               ontic state needs to have different answers to the
               questions known by the observer, as it is characterized
               by a complete set of answers and the bit strings $\mathbf{b}$
               correspond to all different sets of complete answers
               compatible with the knowledge of the observer.
               Therefore all ontic states in $E$ are characterized by
               a bit string $\mathbf{b}$ and the cardinality of $E$ is
               $2^{2N-k}$. $ $
           \end{proof}
      \lemmaStateCorrespondenceGentab*
        \begin{proof}
            Let $(V,\vec v)$ be a general state. Then we can construct the corresponding stabilizer state. For each $\vec v \in V$ the corresponding stabilizer $g = \alpha p_1 \otimes ... \otimes p_n$ has the $p_j$ defined by 
            \begin{align}
                f_{2j-1} = 0, f_{2i} = 0 \implies p_j = \mathbb{1}_4 \\
                f_{2j-1} = 1, f_{2i} = 0 \implies p_j = \mathcal{X} \\
                f_{2j-1} = 0, f_{2i} = 1 \implies p_j = \mathcal{Z} \\
                f_{2j-1} = 1, f_{2i} = 1 \implies p_j = \mathcal{Y}.
            \end{align}
            and $\alpha = -2 \vec f^T v + 1$. 
            Now we need to show that this is a valid stabilizer state.
            We observe that for $n = 1$, the observables corresponding to $g$ and $g'$ commute if and only if the observables commute. 
            For $n \geq 1$, two observables $\vec f, \vec g$ with corresponding stabilizers $f,g$ respectively commute if and only if an even for an even amount of $i$s $f_{2i-1} g_{2i} - f_{2i} g_{2i-1} = 1$, which happens if and only if $p^f_i$ and $p^g_i$ the of the corresponding stabilizers $f,g$ respectively do not commute. As $f,g$ commute if and only if an even amount of $p^f_i$ and $p^g_i$ do not commute, $f,g$ commute if and only if $\vec f, \vec g$ commute. Therefore, the corresponding stabilizer state to $(V, \vec v)$ is valid. 
            The argument the other way around is analogous, except that we need to show that a valuation vector $\vec v$ exists. This is the case because $d = 2$ is prime and each observable attains all valuations. 
            
            The second statement can be seen in the following way. It can be easily checked for the correspondence between pairs of components $f_{2j-1} = 0, f_{2i} = 0$ and the $p_j$ the component wise addition of pairs is multiplication of the $p_j$. However, the multiplication of two stabilizers is just the multiplication of $p_j$ and the statement follows. 
        \end{proof}

\subsection{Measurements}
\label{appendix:toyproofs:measurements}
Here you can find the proofs of statements listed in \cref{sec:measurements}.        
       
       \toygenmeasurementupdate*
       \begin{proof}
    %   \lidia{check that this proof is self-contained}
        Similarly to quantum
        mechanics, we impose that repeated measurements of the same
        quantity should give the same result.
        One probability
        distribution that fulfills this property is
        \begin{equation}
            \mu_{(V',\vec{v}')}'(\vec{m}) \propto \delta_{V_{\pi}^{\perp} +\vec{v}_{\pi}}(\vec{m}) \mu_{(V,\vec{v})}(\vec{m}) = \frac{1}{N_V}\ \delta_{V_{\pi}^{\perp} +\vec{v}_{\pi}}(\vec{m})\ \delta_{V^{\perp} +\vec{v}}(\vec{m}),
        \end{equation}
        where  $N_V$ is the  normalization constant, as in \cref{eq:gen:prob_state}.  With this updated probability distribution, we can define the
        update rule for $V$. 
        
        First we consider the case where $V$ commutes with all variables in $V_\pi$.
        To find the updated state $(V',\vec{v}')$,
        we want to find all $\vec{m} \in \Omega$ such that
        $\mu_{(V',\vec{v}')}'(\vec{m})\neq 0$. This condition is
        fulfilled if and only if $\vec{m} \in V_{\pi}^{\perp} +
        \vec{v}_{\pi}$ and $\vec{m} \in V^{\perp} + \vec{v}$, that is,
        \begin{equation}
            V'^{\perp} + \vec{v}' = (V_{\pi}^{\perp} +  \vec{v}_{\pi}) \cap (V^{\perp} + \vec{v}).
        \end{equation}
        Notice that $V'^{\perp} + \vec{v}'$ is non empty, as otherwise the
        outcome $\vec{v}_{\pi}$ would appear with zero probability. 

        With \cref{toy:gen:lin1,toy:gen:lin2,toy:gen:lin3} we find that 
        \begin{align}
            \begin{split}
                V'^{\perp} + \vec{v}' &= (V_{\pi}^{\perp} \cap V^{\perp}) + \vec{r} \\ 
                         &= (V_{\pi} \oplus V)^{\perp} + \vec{r}
            \end{split}
        \end{align}
        with $\vec{r}$ such that $\vec{r} \in (V_{\pi}^{\perp} +  \vec{v}_{\pi}) \cap (V^{\perp} +
        \vec{v})$. In fact, we can choose any such $\vec{r}$, as all vectors in $(V_{\pi}^{\perp} +
        \vec{v}_{\pi}) \cap (V^{\perp} + \vec{v})$ lead to an equivalent
        measurement update. Using \cref{toy:gen:lin4}, we find the
        following measurement update rules: 
         \begin{equation}
                V' = V_{\pi} \oplus V,
            \end{equation}
            \begin{equation} 
                \vec{v}' \in (V_{\pi}^{\perp} +  \vec{v}_{\pi}) \cap (V^{\perp} +
                \vec{v}).
            \end{equation}
        As all vectors in $V$ commute with those in $V_{\pi}$ the $ V' = V_{\pi} \oplus V$ is isotropic. 
        
        In the case where some the vectors in $V_{\pi}$ do not commute
        with vectors in $V$ we have to modify the measurement update, because the rule of \cref{toy:gen:comm_update} produces a non-isotropic subspace. Like in the commuting case, we want that a
        repeated measurement produces the same outcome. Therefore, to
        satisfy the principle of classical complementarity,
        information about the pre-measurement state needs to get lost
        during a measurement. To get a valid post-measurement state we
        must remove all the variables $\vec{f}$ from $V$ that do not commute.
        Let $V_{\text{commute}}$ be the subspace of $V$ where all
        variables commute with all variables $V_{\pi}$. Then we can
        follow the same reasoning as in the commuting case with
        $V_{\text{commute}}$ instead of $V$ and find the final result. As $V_{commute}$ and $V_{\pi}$ commute by definition of $V_{commute}$ the updated state
        \begin{equation}\label{toy:gen:comm_update}
                V' = V_{\pi} \oplus V_{commute},
            \end{equation}
            \begin{equation} 
                \vec{v}' \in (V_{\pi}^{\perp} +  \vec{v}_{\pi}) \cap (V_{commute}^{\perp} +
                \vec{v}).
            \end{equation}
        is a valid state, by the same reasoning as above. 
        % \lidia{must prove that this gives us an isotropic space/module}
       \end{proof}
    
 \toygenmeasurementupdateequivalence*
  \begin{proof}
  The measurement $V_{\pi}$ and a set of outcome vectors $\vec v_{\pi}^1,..., \vec v_{\pi}^k$ where each $ \vec v_{\pi}^j$ has a different valuation of the vectors in $V_{\pi}$. The updated state is a valid state, if we define it via the correspondence to the stabilizer formalism. By construction, the updated state is such that its ontic support is in the right part of the partition. This means we only need to show that it has maximal fidelity. Let us first calculate the fidelity of two arbitrary states $(V,\vec v),(V', \vec v')$. Either, $|(V^{\perp} + \vec v) \cap (V'^{\perp}+ \vec v')| = 0$ and the fidelity is zero, or the fidelity is non zero and $|(V^{\perp} + \vec v) \cap (V'^{\perp}+ \vec v')| = |(V\oplus V')^{\perp}| = 2^{2n - \dim (V\oplus V')}$. In the latter case, the fidelity is given by $2^{2n - \dim (V\oplus V')} \sqrt{2^{-2n + \dim V}2^{-2n + \dim V'}}$.
  Let us now apply this to the case where $(V, \vec v)$ is the state we do a measurement on, and $(V', \vec v')$ is the updated state. Let us define $d = \dim W$, where $W = V\cap V'$. Therefore, $\dim V' = d + d_{V' / W}$ and $\dim V = d + d_{V / W}$. We can now calculate the fidelity of these two states. If the fidelity between the two states is zero, then $(V',v')$ it is not a valid measurement update. Therefore the fidelity between the initial and updated state is given by 
  \begin{align}
      \begin{split}
          F((V, \vec v),(V', \vec v')) = 2^{ - (d + d_{V / W} + d_{V' / W})+(d+d_{V / W})/2 +(d + d_{V' / W})/2}
          = 2^{ - ( d_{V / W} + d_{V' / W})/2}.
      \end{split}
  \end{align}
    Therefore, the fidelity is the largest if $d_{V / W}$ $d_{V' / W}$ are the smallest. This means that the minimal amount of vectors are removed from or added to $V$. For $V'$ to be a valid measurement update it must hold that $V_{\pi} \subseteq V'$, therefore we need to add all vectors from $V_{\pi}$ to $V$ which are not already in $V$ and we need to remove all vectors from $V$ which do not commute with $V_{\pi}$. However, this is just the vector space $V_{\pi} \oplus V_{commute}$, which is $V'$ in our definition of the measurement update.
  \end{proof}

\subsection{Mixtures and superpositions}

Here you can find the proofs of statements listed in \cref{sec:mixtures}.
\toystabmix*
\begin{proof}
                Let $h_{1},...,h_k$ be independent stabilizer such
                that together with $g_1, ...,g_{N-k}$ they span a
                stabilizer group $\mathcal{T} = span\{g_1, ...,g_{N-k},
                h_{1},...,h_k\}$ of a maximal information state. With
                the same argument as in the stabilizer formalism one
                can show that it is in fact possible to find such
                $h_{1},...,h_k$.

                We define $2^{k}$ rephasings of $T$
                
                \begin{equation}
                    \mathcal{T}_{i^{(0)}}^{(0)} = span\{g_1, ...,g_{N-k}, (-1)^{i^{(0)}_1}
                h_{1},..., (-1)^{i^{(0)}_k} h_k\},
                \end{equation}
                where $i^{(0)}_j$ is the $j$-th digit of the binary
                representation of $i^{(0)}$. Then we can pair up all
                $\mathcal{T}^{(0)}_{i^{(0)}}$ such that the pairs have
                the same sign in front of all generators except the
                last generator. Then we take the mixture of each of the pairs
                and define these as $\mathcal{T}^{(1)}_{i^{(1)}}$
                where $i^{(1)}$ is obtained from $i^{(0)}$ by setting
                $i^{(0)}_k$ to zero in the binary representation.
                There are now only $2^{k-1}$
                $\mathcal{T}^{(1)}_{i^{(1)}}$ left. Repeating this
                procedure reduces the number of $h$ by one and halfs
                the number of $\mathcal{T}^{(1)}_{i^{(1)}}$. Thus, if
                this procedure is applied $k$ times there is only one
                $\mathcal{T}^{(k)}_{i^{(k)}}$ left, which is generated
                by $g_1, ...,g_{N-k}$.
            \end{proof}
\toygensuperpose*            
\begin{proof}
            For this definition to be sound, we need to prove that there
        always exists a $\vec{w} \notin V$ which commutes with
        $\vec{f}^{(1)}, ..., \vec{f}^{(k-1)}$. Let $U = \langle
        \vec{f}^{(1)}, ..., \vec{f}^{(k-1)} \rangle$ then $\vec{w}$
        has to be contained in $U^{\perp_{\text{symplectic}}} =
        \{\vec{m} \in \Omega| \ \forall \vec{v} \in V \ \vec{v}^T J
        \vec{m} = 0\}$. Because $U$ is isotropic it holds that $U
        \subset U^{\perp_{\text{symplectic}}}$. Let $\ell = dim(U^{\perp_{\text{symplectic}}})-dim(U)$. It
        holds that $\dim(U) + \dim(U^{\perp_{\text{symplectic}}}) =
        (k-1) + (k-1 +\ell) =2n$ \cite{Lam_2004}. Rearranging this
        expression, we can calculate $\ell = 2n-2k+2$. This implies
        that $\ell \geq 2$,
        because $k$ is upper bounded by $n$. Therefore, there always
        exists a vector $\vec{w} \in U \setminus
        U^{\perp_{\text{symplectic}}}$ which is linearly independent
        of $\vec{f}^{(k)}$ and, thus, not contained in $V$. 
        \end{proof}

\toygensuperposestabilizer*

    \begin{proof}
        Let $(V, \vec v_1)$ and $(V,\vec v_2)$ with $V = \langle \vec f_1, \dots, \vec f_k \rangle$ two states where $\vec f^k$ is completely unknown. Let $g_k$ be the stabilizer corresponding to $\vec f_k$ with valuation vector $\vec v$. Then $ g_1,...,g_k $ are independent due to the second point in \cref{toy:gen:stabcorrespondence}. The corresponding stabilizer state to $(V, \vec v_1)$ is then $\mathcal{S} = \langle g_1,...,g_k \rangle$. As, $f_k$ is totally unknown the corresponding stabilizer state of $(V,\vec v_2)$ is $\mathcal{S}' = \langle g_1,...,- g_k \rangle$. A superposition of $(V, \vec v_1)$ and $(V,\vec v_2)$ is $(V' = \langle \vec f_1, \dots, \vec f_{k-1} \rangle + \vec w, \vec v_1)$ wit $\vec w$ such that $V'$ isotropic. Then, let $h$ be the corresponding stabilizer to $\vec w$ with valuation $\vec v_1$. Then, $h$ is independent of and commutes with $g_1,..., g_{k-1}$. Which means that $h$ fulfils the definition of $h$ in the definition of superposition for stabilizers. 
        
        Let $\mathcal{S} = \langle g_1,...,g_k \rangle$ and $\mathcal{S}' = \langle g_1,...,- g_k \rangle$ be two stabilizer states. Then the corresponding general states 
        $(V, \vec v_1)$ and $(V,\vec v_2)$ with $V = \langle \vec f_1, \dots, \vec f_k \rangle$ are such that $\vec f^k$ is completely unknown.
        Let $h$ be a stabilizer as in the definition of the superposition of stabilizer states. Then the corresponding observable to $h$ fulfils the definition of $\vec w$ in the definition of superposition of general states. Let $\vec v'$ be the valuation vector corresponding to $\langle g_1,...,g_{k-1},h \rangle$. Then $\vec v'$ has the same valuation as $\vec v_1$ and $\vec v_2$ on $\vec f_{1},..., \vec f{k-1}$ and either the valuation of $\vec v_1$ or $\vec v_2$ on $\vec f_k$. Therefore, $\vec v'$ is an equivalent valuation vector to either $v_1$ or $v_2$ and $(\langle \vec f_{1},...,\vec f{k-1},\vec w \rangle, \vec v')$ is the superposition of $(V, \vec v_1)$ and $(V,\vec v_2)$. 
\end{proof}

\subsection{Entanglement}

Here you can find the proofs of statements listed in \cref{sec:entanglement}.

\toyproduct*
 \begin{proof}
                    See the $\Longleftarrow$ part of the proof of \cref{toy:stab:prod}.
                    % The easiest way to proof this lemma is by using the
                    % stabilizer formalism. Let
                    % $span\{g_{A,1},...,g_{A,k}\}$ be the stabilizer of
                    % $E_A$ and $span\{g_{B,1},...,g_{B,k}\}$ be the stabilizer of
                    % $E_B$. Then
                    % $S_{AB} = span\{g_{B,1},...,g_{B,k},g_{A,1},...,g_{A,k}\}$
                    % is a stabilizer group because the stabilizer of
                    % the different systems act on different systems.
                    % $S_{AB}$ stabilizers then exactly the product
                    % state.
                \end{proof}

\toystabprod*
\begin{proof}
            $\implies$ If $\mathcal{S} = \mathcal{S}_A \cdot
               \mathcal{S}_B$ then $\mathcal{S}=
               span\{s_{A,1},...,s_{A,k},s_{B,1},...,s_{B,N-k}\}$
               where $s_{A,i}$, $s_{B,i}$ are stabilizer operator only
               acting non trivially on the $A$ and $B$ system
               respectively. Therefore, there are $2^{2N_A-k}$ ontic
               states that are stabilized by $S_A$ on system $A$ and
               $2^{2N_B-N+k}$ that are stabilized by $S_B$ on system
               $B$. Taking the product of these states results in a
               state with an ontic basis of size
               $2^{2N_A-k}2^{2N_B-N+k} = 2^{2(N_A + N_B)-N} = 2^{N}$.
               The product state is stabilized by $\mathcal{S}$, as
               each element of the ontic basis of the product state is
               on the $A$ part stabilized by all $s_{A,i}$ and on the
               $B$ part stabilized by all $s_{B,i}$. As $\mathcal{S}$
               is a pure state we have found there are no more ontic
               states that can be stabilized by $\mathcal{S}$ and
               therefore the state stabilized by $\mathcal{S}$ is
               pure.

               $\Longleftarrow$ Let $\mathcal{S}$ be the stabilizer
               group corresponding to a product state, then we can
               find the stabilizer group $\mathcal{S}_A$ that
               stabilizes $A$ and the stabilizer group $\mathcal{S}_B$
               that stabilizes the state on the $B$ system. These
               stabilizers from a new group $\mathcal{S'} =
               \mathcal{S}_A \cdot \mathcal{S}_B$. We have found a
               different stabilizer group that stabilizes the same
               state as $\mathcal{S}$. If $\mathcal{S'} \neq
               \mathcal{S}$ then there must be a stabilizer in
               $\mathcal{S'}$ that is not in $\mathcal{S'}$.
               Therefore, there must be an ontic state stabilized by
               $\mathcal{S}$ but not by $\mathcal{S}$, but then the
               states stabilized are not the same. Thus, the two
               groups have to be the same and $\mathcal{S} =
               \mathcal{S}_A \cdot \mathcal{S}_B$.
            \end{proof}

\toystabprodtwo*           
\begin{proof}

               The same argument as in the $\Longleftarrow$ part of
               the previous theorem holds. The reason the
               $\Longrightarrow$ direction does not hold is because
               for mixed states non-entangled states are not
               necessarily product states.
\end{proof}

\subsection{Transformations}

Here you can find the proofs of statements listed in \cref{sec:transformations}.
\toyorgtrafo*
\begin{proof}
                    Local transformations are non-entangling and
                    subsystem information preserving, as it just
                    permutes the ontic states of each elementary
                    system.

                    Let $T$ be a non-entangling subsystem information
                    preserving transformation on $N$ systems, $O_N$
                    the set of ontic states on $N$ systems and $C = (a
                    \lor b) \cdot (1 \lor 2 \lor 3 \lor 4)^{N-1}$ a
                    valid epistemic state (i.e $a \neq b$). Then
                    \begin{align}
                        \begin{split}
                            T(c) &= \lor_{o \in O_{N-1}}T((a \lor b) \cdot
                            o) \\
                                &= (c \lor d)
                        \cdot (1 \lor 2 \lor 3 \lor 4)^{N-1}
                        \end{split}
                    \end{align} 
                     with $c \neq d $, because $T$ is a valid non-entangling
                    subsystem preserving transformation. Therefore, for any ontic
                    state $o$ it holds that
                    \begin{equation}
                        T((a \lor b) \cdot o) = (c \lor d) \cdot o'
                    \end{equation}
                    
                    $T$ is a permutation, therefore the amount of
                    states in the intersection of two ontic bases is
                    invariant under transformation. Furthermore, it
                    does not matter if the states are first
                    transformed and then the intersection is taken or
                    if first the intersection is taken and then the
                    states are transformed. Let $C' = (a\lor f) \cdot
                    (1 \lor 2 \lor 3 \lor 4)^{N-1}$ be an valid
                    epistemic state such that $a \neq f$ and $b \neq f$. Then
                    the same conclusions that hold for $C$ also hold
                    for $C'$. Therefore, for any ontic state $o$

                    \begin{equation}
                        T(a \cdot o) = T((a \lor b) \cdot o) \cap T((a \lor f) \cdot o) = g \cdot o'
                    \end{equation}
                    with $g$ either $c$ or $d$ and $o'$ some ontic
                    state. The same argument also applies for $b$.
                    Therefore, the transformation can be written as
                    \begin{equation}
                        T((a \lor b) \cdot o) =  T_1(a \lor b) \cdot
                    T_{N-1}(o).
                    \end{equation}
                    Repeating the same argument for $T_{N-1}(o)$ the
                    transformation $T$ factorizes into local
                    transformations.
                \end{proof}
\toyorgperm*                
\begin{proof}
                    Let $T$ be a non entangling transformation. Then
                    we can detect changes in the system by applying
                    transformation to $C = (a \lor b) \cdot (1 \lor 2
                    \lor 3 \lor 4)^{N-1}$ and detecting which system
                    $i$ is in the pure state after the transformation.
                    Only one system can be in the pure state, as the
                    transformation is a permutation of ontic states. 
                    
                    Consider the transformation $S_{1,i} T$, for which
                    the same argument as in the proof above holds.
                    Thus, this transformation can be decomposed in the
                    following way
                    \begin{equation}
                        S_{1,i} T = T_{1} \cdot T_{N-1}
                    \end{equation}
                    where $T_{1}$ a transformation on the first and $
                    T_{N-1}$ a transformation on the last $N-1$ subsystems.
                    Because $S_{1,i}$ is its own inverse, we can
                    rewrite rewrite the above decomposition
                    \begin{equation}
                        T = S_{1,i} (T_{1} \cdot T_{N-1}).
                    \end{equation}
                    Applying the same argument as above inductively to
                    $T_{N-1}$ factorizes the transformation in swaps
                    $S_{i,j}$ and local transformations. 
                \end{proof}
\stabgroups*            

\begin{proof}
                The elements of the set $\{ U^{\dagger}g_1U,...,
                U^{\dagger}g_kU\}$ still commute as 
                \begin{equation}
                    [U^{\dagger}g_iU,U^{\dagger}g_jU] = U^{\dagger}[g_i,g_j]U = 0.
                \end{equation}
                Therefore, $S'$ is a stabilizer group. 
            \end{proof}                
\toygentrafo*
\begin{proof}
            We can find the updated epistemic
        $(V',\vec{v}')$ state by calculating the set $V'^{\perp} +
        \vec{v}'$ which is characterized by vectors $\vec{m}$ such
        that $\mu'(\vec{m}) \neq 0$. For such vectors we can do the
        following chain of reasoning
        \begin{align}
            \begin{split}
                \mu'(\vec{m}) \neq 0  
            & \iff \mu(S^{-1} \vec{m} - \vec{a}) \neq 0 \\
            & \iff S^{-1} \vec{m} - \vec{a} \in V^{\perp} + \vec{v}
            \end{split}
        \end{align}
        Therefore, we can conclude that if and only if $\mu'(\vec{m})
        \neq 0 $ it holds that $\vec{m} = S(\vec{v}^{\perp} +\vec{v} +
        \vec{a})$ with $\vec{v}^{\perp} \in V^{\perp}$. This means we can
        infer $\vec{m} \in SV^{\perp} + S(\vec{v} + \vec{a})$ if
        $\mu'(\vec{m})
        \neq 0 $. On the other hand, if $\vec{m}
        \in SV^{\perp} + S(\vec{v} + \vec{a})$ then 
        \begin{align}
            \begin{split}
                \mu'(\vec{m}) &=
                \mu'(S(\vec{v}^{\perp} + \vec{v} + \vec{a})) \\
                &= \mu(S^{-1}(S(\vec{v}^{\perp} + \vec{a} + \vec{v}))-\vec{a})\\
                &= \mu(\vec{v}^{\perp} + \vec{v}) \neq 0.
            \end{split}
        \end{align}
        We can conclude that 
        \begin{equation}
            V'^{\perp} + \vec{v}' = SV^{\perp} + S(\vec{v} + \vec{a}).
        \end{equation}
        This means that the transformed valuation vector $\vec{v}'$ is given by
        \begin{equation}
            \vec{v}' = S(\vec{v} + \vec{a})
        \end{equation}
        and the updated vector space $V'$
        \begin{equation}
            V' = (SV^{\perp})^{\perp}.
        \end{equation}
        This expression can be evaluated using the definition of an
        orthogonal complement
        \begin{align}
            \begin{split}
                \vec{w} \in (SV^{\perp})^{\perp} &\iff \forall \vec{v} \in SV^{\perp}: \ \vec{v}^T \vec{w} = 0 \\
                &\iff \forall \vec{v} \in V^{\perp}: \ (S \vec{v})^T \vec{w} =0 \\
                &\iff  \forall \vec{v} \in V^{\perp}: \ \vec{v}^T (S^T \vec{w}) = 0 \\
                &\iff (S^T \vec{w}) \in V \\
                &\iff \vec{w} \in (S^T)^{-1} V.
            \end{split}
        \end{align}
        Therefore we can give a more direct expression for $V'$
        \begin{equation}
            V' = (S^T)^{-1} V.
        \end{equation}
        In summary a state $(V,\vec{v})$ transforms under a symplectic
        transformation $S \cdot + \vec{a}$ in the following way
        \begin{equation}
            (V,\vec v) \to ((S^T)^{-1} V, S(\vec{v}+\vec{a}))
        \end{equation} 
        \end{proof}                
\toygenspace*
\begin{proof}
            Let $V = \langle
            \vec{f}_1, ..., \vec{f}_k \rangle$ be an isotropic space. Then
            $V'$ is given by $V' = \langle (S^T)^{-1}\vec{f}_1,
            ...,(S^T)^{-1} \vec{f}_k \rangle$. The symplectic inner
            product between $(S^T)^{-1}\vec{f}_i$ and
            $(S^T)^{-1}\vec{f}_j$ can then be calculated
            \begin{align}
                \begin{split}
                ((S^T)^{-1}\vec{f}_i)^T J ((S^T)^{-1}\vec{f}_j )&= \vec{f}_i^T (S^{-1} J (S^T)^{-1}) \vec{f}_j \\
                &=  \vec{f}_i^T (J^T S^T J) J (J^T S J) \vec{f}_j\\
                &=  \vec{f}_i^T (J^T J J) \vec{f}_j\\
                &= \vec{f}_i^T J \vec{f}_j \\
                &= 0,
                \end{split}
            \end{align}
            were we used that $J^T = J^{-1}$ and the expression for
            $S^{-1}$.
        \end{proof}

%..... % ............ List of theorems ...........
%\newpage 
%\renewcommand{\listtheoremname}{List of definitions}

%\listoftheorems[ignoreall,show={definition}]

%\renewcommand{\listtheoremname}{List of main results}

%\listoftheorems[ignoreall,show={theorem,corollary}]

%..... Bibliography .....
\newpage
\bibliographystyle{unsrtnat}

%\bibliography{paradoxes}

 %for arxiv; compile again before uploading

\end{document}